\newcolumntype{L}[1]{>{\raggedright\arraybackslash}p{#1}}
\newcolumntype{C}[1]{>{\centering\arraybackslash}m{#1}}
\newcolumntype{R}[1]{>{\raggedleft\arraybackslash}p{#1}}
\renewcommand{\epsilon}{\varepsilon}
\newtheorem{theorem}{Theorem}[section]
\newtheorem{observation}[theorem]{Observation}
\newtheorem*{claim*}{Claim}
\newtheorem{lemma}[theorem]{Lemma}
\newtheorem{proposition}[theorem]{Proposition}
\theoremstyle{definition}
\newtheorem{remark}[theorem]{Remark}
\newtheorem*{remark*}{Remark}
\def\Pr{\mathop{\mathbf{Pr}}\nolimits}
\def\Ex{\mathop{\mathbf{E}}\nolimits}
\renewcommand{\emptyset}{\varnothing}
\newcommand{\norm}[1]{\left\Vert#1\right\Vert}
\newcommand{\abs}[1]{\left\vert#1\right\vert}
\newcommand{\set}[1]{\left\{#1\right\}}
\newcommand{\tuple}[1]{\left(#1\right)}
\newcommand{\tp}{\tuple}
\renewcommand{\P}{\ensuremath{\mathbf{P}}}
\newcommand{\defeq}{:=}
\newcommand{\wt}{\text{wt}}
\newcommand{\sgn}{\mathrm{sgn}}
\newcommand{\holant}{{\mathrm{Holant}}}
\newcommand{\Tmatrix}{{\left(\begin{smallmatrix}1&1\\1&-1\end{smallmatrix}\right)}}
\newcommand{\odd}{\text{odd}}
\newcommand{\DTV}[2]{d_{\mathrm{TV}}\left({#1},{#2}\right)}
\newcommand{\Gap}{\ensuremath{\mathfrak{Gap}}}
\def\*#1{\bm{#1}} 
\def\+#1{\mathcal{#1}} 
\def\-#1{\mathrm{#1}} 
\def\=#1{\mathbb{#1}} 
\def\prob#1#2#3{\goodbreak\begin{list}{}{\labelwidth\z@ \itemindent-\leftmargin
                        \itemsep\z@  \topsep6\p@\@plus6\p@
                        \let\makelabel\descriptionlabel}
                \item[\it Name]#1
               \item[\it Instance]                #2
                \item[\it Output]#3
                \end{list}}
\title{Swendsen-Wang dynamics for the ferromagnetic Ising model with external fields}
  \author{Author(s)}
\author{Weiming Feng, Heng Guo, Jiaheng Wang}
\address[Weiming Feng, Heng Guo, Jiaheng Wang]{School of Informatics, University of Edinburgh, Informatics Forum, Edinburgh, EH8 9AB, United Kingdom. \textnormal{E-mail: \url{wfeng@ed.ac.uk}, \url{hguo@inf.ed.ac.uk}, \url{jiaheng.wang@ed.ac.uk}}}
\newcommand{\Pup}{P^{\uparrow}}
\newcommand{\Pdown}{P^{\downarrow}}
\newcommand{\Pdownup}{P^{\vee}}
\newcommand{\Pupdown}{P^{\wedge}}
\newcommand{\inner}[2]{\langle #1,#2\rangle}
\newcommand{\Ent}[2]{\ensuremath{\textnormal{Ent}_{#1}\left(#2\right)}}
\newcommand{\KL}[2]{\ensuremath{D_{\textnormal{KL}}\left(#1\parallel#2\right)}}
\newcommand{\Var}[2]{\ensuremath{\textnormal{Var}_{#1}\left(#2\right)}}
\newcommand{\DF}[2]{\ensuremath{D_{f}\left(#1\parallel#2\right)}}
\newcommand{\chisq}[2]{\ensuremath{D_{\chi^2}\left(#1\parallel#2\right)}}
\newcommand{\mixingtime}[1]{\ensuremath{t_{\textnormal{mix}}(#1)}}
\begin{document}

\begin{abstract}

We study the sampling problem for the ferromagnetic Ising model with consistent external fields,
and in particular, Swendsen-Wang dynamics on this model.
We introduce a new grand model unifying two closely related models: the subgraph world and the random cluster model.
Through this new viewpoint, we show:
\begin{enumerate}
  \item polynomial mixing time bounds for Swendsen-Wang dynamics and (edge-flipping) Glauber dynamics of the random cluster model,
    generalising the bounds and simplifying the proofs for the no-field case by Guo and Jerrum (2018);
  \item near linear mixing time for the two dynamics above if the maximum degree is bounded and all fields are (consistent and) bounded away from $1$.
\end{enumerate}
\end{abstract}

\maketitle

\section{Introduction}

The Ising model is a classical statistical physics model for ferromagnetism that had far-reaching impact in many areas.
In computer science / combinatorics terms, the model defines a weighted distribution over cuts of a graph.
To be more precise, let $G=(V,E)$ be a simple undirected graph.
For each edge $e \in E$, we have the local interaction strength $\beta_e \in \mathbb{R}_{> 0}$,
and for each vertex $v \in V$, we have the external magnetic field (namely vertex weight)  $\lambda_v \in \mathbb{R}_{>0}$.
An Ising model is specified by the tuple $(G;\*\beta,\*\lambda)$, where $\*\beta = (\beta_e)_{e \in E}$ and $\*\lambda = (\lambda_v)_{v \in V}$.
We assign spins $\{0,1\}$ to the vertices $V$.
For each spin configuration $\sigma \in \{0,1\}^V$, the \emph{weight} of $\sigma$ is defined by
\begin{align*}
\wt_{\text{Ising}}(\sigma) \defeq \prod_{e=(u,v)\in E}\beta_e^{\mathbb{I}[\sigma(u)=\sigma(v)]}\prod_{u\in V}\lambda^{\sigma (u)}_u,
\end{align*}
where $\mathbb{I}[\sigma(u)=\sigma(v)]$ is the indicator variable of the event $\sigma(u)= \sigma(v)$.
The \emph{Gibbs distribution} $\pi_{\text{Ising}}$ is defined by
\begin{align}\label{eqn:Ising}
\forall \sigma \in \{0,1\}^V,\quad \pi_{\text{Ising}}(\sigma) = \frac{\wt_{\text{Ising}}(\sigma)}{Z_{\text{Ising}}},
\end{align}
where
\begin{align*}
Z_{\text{Ising}} = Z_{\text{Ising}}(G;\*\beta,\*\lambda) \defeq \sum_{\tau \in \{0,1\}^V } \wt_{\text{Ising}}(\tau)	
\end{align*}
is the \emph{partition function}.
In this paper we focus on the \emph{ferromagnetic} case, where $\beta_e > 1$ for all $e \in E$,
with \emph{consistent} fields, where $\lambda_v\in(0,1]$ for all $v\in V$.
Note that by flipping the spins, the last assumption is equivalent to assuming $\lambda_v\in[1,\infty)$ for all $v\in V$.

There is extensive computational interest in simulating the Ising model and in evaluating various quantities related to it.
A major contribution in the rigorous algorithmic study of the model is the Jerrum-Sinclair algorithm \cite{JS93}, 
which is the first \emph{fully polynomial-time randomised approximation scheme} (FPRAS) for the partition function $Z_{\text{Ising}}$ of the ferromagnetic Ising model with consistent fields on any graph.
The main ingredient of their algorithm is to show that a natural Markov chain mixes in polynomial-time to sample from the so-called ``subgraph-world'' model,
which has the same partition function up to some easy to compute factors.

Usually, using self-reducibility, approximately evaluating the partition function is computationally inter-reducible to approximate sampling \cite{JVV86}.
However, in the case of the Ising model,
the original algorithm by Jerrum and Sinclair does not directly yield a sampling algorithm for spin configurations.
This is because inconsistent fields may be created during the self-reduction, making the algorithm no longer applicable.
To circumvent this issue, Randall and Wilson \cite{RW99} showed that when there is no external field, 
an efficient approximate sampler for spin configurations exists by doing self-reductions in the so-called random cluster model.
This is a model introduced by Fortuin and Kasteleyn \cite{FK72} and also has the same partition function as the previous two models up to some easy to compute factors.\footnote{The random cluster model has a parameter $q>0$. The Ising model corresponds to the case of $q=2$.}

On the other hand, a different Markov chain introduced by Swendsen and Wang \cite{SW87} has shown great performance on sampling Ising configurations in practice.
This dynamics is best understood via the Edwards-Sokal distribution \cite{ES88}, which is a joint distribution on both edges and vertices.
The marginal distribution on vertices is the Ising model, and the marginal distribution on edges is the random cluster model.
Sokal and later Peres\footnote{Peres further conjectured that the sharp mixing time bound is $O(|V|^{1/4})$.} conjectured that the Swendsen-Wang (SW) dynamics mixes in polynomial-time for ferromagnetic Ising models,
and this was resolved in affirmative by Guo and Jerrum \cite{GJ18}.
They showed that the edge-flipping dynamics for the random cluster model mixes in polynomial-time, 
and this dynamics is known to be no faster than the SW dynamics \cite{Ull14}.
Another consequence of \cite{GJ18} is that there is a perfect sampler for the ferromagnetic Ising model and the corresponding random cluster model, improving upon the approximate sampler of \cite{RW99}.
This is done via monotone coupling from the past (CFTP) \cite{propp1996exact} as the random cluster model is monotone.

One restriction of \cite{GJ18} is that their result only applies to the ferromagnetic Ising model without external fields.
The original random cluster formulation of \cite{FK72} does not incorporate external fields,
although it is not hard to do so by generalising to a weighted random cluster formulation.
Indeed, Park, Jang, Galanis, Shin, {\v{S}}tefankovi{\v{c}}, and Vigoda~\cite{PJGSSV17} generalised the SW dynamics $P_{\mathrm{SW}}^{\mathrm{Ising}}$ (see \Cref{sec:SW} for detailed description) in the presence of external fields.
They also showed efficiency of this algorithm in certain parameter regimes and on random graphs.
This left open the question if the generalised SW dynamics is efficient in general.

To start stating our results, let us first define the \emph{mixing time} of Markov chains,
which measures the convergence rate and efficiency of Markov chain based algorithms.
Let $P$ be a Markov chain whose stationary distribution is $\pi$ over the state space $\Omega$. 
The \emph{mixing time} of $P$ is defined by
\begin{align*}
  \forall 0 < \epsilon < 1,\quad T_{\-{mix}}(P,\epsilon) = \max_{X_0 \in \Omega} \min\set{t \mid \DTV{P^t(X_0, \cdot)}{\pi} \leq \epsilon},
\end{align*}
where $\DTV{\mu}{\pi} = \frac{1}{2}\sum_{\sigma \in \Omega}|\mu(\sigma)-\pi(\sigma)|$ is the \emph{total variation distance} between two distributions.

First, we show that the edge-flipping dynamics for the weighted random cluster model mixes in polynomial-time.
By adapting \cite{Ull14} to the case with fields, this implies that the generalised SW dynamics has a polynomial running time for any ferromagnetic Ising model with consistent fields on any graph,
answering the question above.

\begin{theorem}\label{theorem-sw-main}
Let $1 < \beta_{\min} \leq  \beta_{\max}$ be constants.
For any ferromagnetic Ising model on graph $G=(V,E)$ with parameters $(\beta_e)_{e \in E}$ and $(\lambda_v)_{v \in V}$, where $\beta_{\min} \leq  \beta_e \leq \beta_{\max}$ and $0 < \lambda_v \leq 1$,
the mixing time of Swendsen-Wang dynamics is $O(N^4 m^2 \tp{m + \log \frac{1}{\epsilon})}$, where $N= \min\left\{n, \frac{1}{1 - \lambda_{\max}} \right\}$, $\lambda_{\max} = \max_{v \in V}\lambda_v$, $n = |V|$ and $m = |E|$.
\end{theorem}
Note that if $\beta_e=1$ for some $e\in E$, it is equivalent to remove such an edge.
Also if $\lambda_v=0$ for some $v\in V$, it is equivalent to pin $v$ to $0$ and then absorb $v$ into its neighbours external fields.
Thus, any ferromagnetic Ising model with consistent external fields can be transformed into one satisfying the condition of \Cref{theorem-sw-main}.
The big-$O$ notation hides a constant factor depending only on $ \beta_{\min}$ and $ \beta_{\max }$. See~\eqref{eq-constant-1} for the details of the hidden constant.

The main technical innovation of ours is to introduce a grand model, which incorporates both the so-called subgraph world \cite{JS93} and the random cluster model.
The subgraph world assigns weights to subsets of edges,
where each vertex of an odd degree in the induced graph suffers a penalty corresponding to its external field (or the lack thereof).
Detailed definitions of the basic models are given in \Cref{sec:models}.

The main inspiration of our grand model is the coupling given by Grimmett and Janson \cite{GJ07} between the two models above without external fields.
Our model assigns 3 states to each edge:~$0,1,2$.
A sample of our model can be generated as follows:
first, we sample a subset of edges from the subgraph world model, and assign $1$ to them;
then, we assign $0$ or $2$ to each remaining independently with a carefully chosen probability.
Detailed definitions are in \Cref{sec:grand-def}.
The marginal distribution of edges assigned $1$ clearly follow the subgraph world distribution,
and we show that the non-zero edges follow the weighted random cluster model (\Cref{lem:weighted-coupling}).
This last step is done using Valiant's holographic transformations \cite{Val08}.
It is also a generalisation of \cite{GJ07} in the presence of external fields.

We give a polynomial upper bound of the mixing time of the Glauber dynamics for the grand model in \Cref{section-var} via the method of canonical paths \cite{JS89}.
Our construction of the canonical path is a variation of the original paths by Jerrum and Sinclair \cite{JS93}.
The projection of this dynamics to the non-zero edges is exactly the Glauber dynamics for the weighted random cluster model.
We show that this project does not slow down the dynamics (\Cref{sec:grand-RC}), and therefore mixing time bounds for the weighted random cluster model is a direct consequence.
This implies \Cref{theorem-sw-main}.
When there is no field, our argument recovers the result of Guo and Jerrum \cite{GJ18}.
However, our argument is both simpler and more general.

Another important feature of the grand model is that it gives a Gibbs distribution, 
in the sense that variables are independent if we condition on a subset of edges which disconnect the graph.
This is a feature absent in the random cluster models.
Recently, there is a lot of progress in analysing the mixing time of dynamics for Gibbs distributions, especially using the notion of spectral independence~\cite{ALO20}.
Since the domain in our case is not Boolean, we use a generalisation of~\cite{FGYZ21} (see also~\cite{CGSV21} for a different generalisation).
An important development along this line is that in bounded degree graphs, spectral independence implies near-linear mixing time of dynamics for the Gibbs distribution \cite{CLV21}.
To be more precise, they showed a constant decay rate for the relative entropy in this setting.

Back to the Ising model, when the maximum degree is bounded and all external fields are bounded away from $1$,
Chen, Liu, and Vigoda \cite{CLV21a} established spectral independence for the subgraph world model.
Using our grand model, this implies spectral independence for the random cluster model as well.
However, since the random cluster model does not have conditional independence, the method of \cite{CLV21} does not apply.
Instead, we show spectral independence for the grand model in this setting.
Thus, via the method of \cite{CLV21} and exploiting the fact that the grand model is indeed a Gibbs distribution,
we obtain a constant decay rate for the relative entropy for the (edge-flipping) Glauber dynamics for the weighted random cluster model. (We apply the result of projecting chains in \Cref{sec:grand-RC} here again.)

However, this is still not quite enough to obtain desired mixing time bounds for the Swendsen-Wang dynamics.
The reason is that the aforementioned comparison techniques of \cite{Ull14} is an analysis of the eigenvalues of transition matrices, and thus it works only for spectral gaps but not for relative entropies.
For this last step, we introduce a new comparison argument for the decay rate of relative entropies between the (edge-flipping) Glauber dynamics and the Swendsen-Wang dynamics in \Cref{sec:RC-SW}.

To be more precise, we perform a careful analysis between the Glauber dynamics and the so-called ``single-bond'' dynamics introduced in \cite{Ull14}.
Our analysis utilises ideas from high-dimensional random walks~\cite{ALOV19,CGM19}.
For both the Glauber dynamics and the single-bond dynamics, we decompose them into two sub-steps: the down-walk and the up-walk.
Using our grand model, we bound the decay rate of relative entropy for the down-walk of Glauber dynamics.
By a new comparison argument, we show that the relative entropy also decays for the down-walk of ``single-bond'' dynamics with a slightly weaker rate.
Finally, we compare the down-walk of ``single-bond'' dynamics to the Swendsen-Wang dynamics via a simple application of the data processing inequality. 
Our analysis not only works for the decay of relative entropy, but also gives a very simple proof  (see \Cref{remark-simple-ull}) to the main result in \cite{Ull14}.

\begin{theorem}\label{theorem-nlogn}
Let $1 < \beta_{\min} \leq  \beta_{\max}, \Delta \geq 3$ and $0 < \delta < 1$ be constants.	
For any ferromagnetic Ising model on graph $G=(V,E)$ with parameters $(\beta_e)_{e \in E}$ and $(\lambda_v)_{v \in V}$, where $\beta_{\min} \leq  \beta_e \leq \beta_{\max}$, $0 < \lambda_v \leq 1 - \delta$ and the maximum degree of $G$ is at most $\Delta$,
the mixing time of Swendsen-Wang dynamics is $O(n \log \frac{n}{\epsilon})$, where $n = |V|$.
\end{theorem}

By the same reasoning below \Cref{theorem-sw-main}, we do not lose generality by assuming $\beta_{\min}>1$ and $\lambda_v>0$ in \Cref{theorem-nlogn}.
The big-$O$ notation hides a constant factor depending only on $ \beta_{\min}, \beta_{\max }, \delta$ and $\Delta$. See~\eqref{eq-constant-2} for the details of the hidden constant.

Comparing to \Cref{theorem-sw-main},
\Cref{theorem-nlogn} has a faster mixing time bound but comes with two further assumptions: constant degree bound and no trivial field.
It would be very interesting to relax either restriction.
Essentially, the bottleneck in \Cref{theorem-sw-main} comes from the overhead in the canonical path \cite{JS93} or multicommodity flow method \cite{Sin92} arguments.
Unfortunately, there does not seem to be any progress in improving the mixing time bound of these methods in the last three decades.
Instead, \Cref{theorem-nlogn} relies on recent progress of analysing spin systems via high-dimensional random walks \cite{CLV21,CLV21a}.
This method has very recently been generalised to bypass the bounded degree restriction \cite{AJKPV21,CFYZ22,CE22} in various models.
It is an interesting question whether this is also possible in the setting of \Cref{theorem-nlogn}.
To bypass the no trivial field restriction, we would need a new spectral independence bound,
for which there seems to be less progress.
In particular, it seems hard to explain the $\Theta(n^{1/4})$ mixing time on the complete graph without fields \cite{LNNP14} with spectral independence.

Previously, most studies on Swendsen-Wang dynamics focus on the case without fields (with the exception of \cite{PJGSSV17} discussed earlier).
Very sharp mixing time bounds have been obtained recently, either for special cases of graphs such as $\mathbb{Z}^d$ \cite{BCPSV21}, or in the tree uniqueness region for general graphs \cite{BCCPSV22}.
Our \Cref{theorem-nlogn} does not have these restrictions, but it only works with the presence of non-trivial external fields.
In the settings of \Cref{theorem-nlogn}, we conjecture that the sharp mixing time bound is $O(\log n)$, and the bottleneck is the comparison step.
Namely, instead of a ``no-slower'' comparison argument, the truth might be that SW dynamics is faster than Glauber dynamics by a factor $\abs{E}$.

Lastly, by applying the monotone CFTP \cite{propp1996exact}, we obtain perfect sampling versions of the (edge-flipping) Glauber dynamics in \Cref{sec:CFTP} for the weighted random cluster models.
Using that, we achieve perfectly sampling for the ferromagnetic Ising model with consistent external fields.

\begin{theorem}\label{theorem-perfect-ising}
Let $1 < \beta_{\min} \leq  \beta_{\max}$ be two constants.
There is a perfect sampling algorithm such that given any ferromagnetic Ising model on graph $G=(V,E)$ with parameters $(\beta_e)_{e \in E}$ and $(\lambda_v)_{v \in V}$, where $\beta_{\min} \leq  \beta_e \leq \beta_{\max}$ and $0 < \lambda_v < 1$, the algorithm returns a perfect sample in expected time $O(N^4  m^4 \log n)$, where $N= \min\left\{n, \frac{1}{1 - \lambda_{\max}} \right\}$ and $\lambda_{\max}=\max_{v \in V}\lambda_v$.

Furthermore, if $G$ has bounded maximum degree $\Delta = O(1)$ and there exists a constant $\delta > 0$ such that $\lambda_v \leq 1- \delta$ for all $v \in V$, the algorithm runs in expected time $O(n^2 \log^2 n )$.
\end{theorem}

We remark that the overhead in monotone CFTP is $O(\log \abs{V})$ and there is an extra factor $m=\abs{E}$ to implement each step of CFTP.
The hidden constants can be found in \eqref{eqn:perfect-sampler-constant}.

A natural question is if we can relax the assumptions on the parameters in \Cref{theorem-sw-main},~\ref{theorem-nlogn}, and~\ref{theorem-perfect-ising}.
For anti-ferromagnetic Ising models, the sampling problem (either approximate or perfect) has no polynomial-time algorithm unless $\textbf{NP}=\textbf{RP}$ \cite{JS93}.
Even restricted to the ferromagnetic case, 
Goldberg and Jerrum \cite{GJ07a} showed that the problem becomes \#BIS-equivalent when inconsistent fields are allowed,
where \#BIS stands for counting bipartite independent sets.
Its approximation complexity is a major open problem and is usually conjectured to have no polynomial-time algorithm.
Thus, it is unlikely to extend the range of parameters in in \Cref{theorem-sw-main},~\ref{theorem-nlogn}, and~\ref{theorem-perfect-ising}.

\section{Preliminaries}
\subsection{The models and their equivalences}
\label{sec:models}

Here we define the weighted random cluster model, and the subgraph-world model.
Then we give some equivalence results between them and the ferromagnetic Ising model.


\subsubsection{Weighted random cluster model.} 
The standard random cluster model (at $q=2$) is equivalent to the ferromagnetic Ising model without external fields.
To handle Ising models with fields, we need to introduce weights to the random cluster model.
Given a graph $G=(V,E)$, the parameters of this model are $\*p = (p_e)_{e \in E}$ and $\*\lambda=(\lambda_v)_{v \in V}$, where $0 < p_e < 1$ and $\lambda_v > 0$. The weight of any subset of edges $S\subseteq E$ is given by
\begin{equation} \label{equ:rc_weight}
  \wt_{\text{wrc}}(S)\defeq \prod_{e \in S} p_e \prod_{f \in E \setminus S} (1-p_f)\prod_{C\in\kappa(V,S)}\left(1+ \prod_{u \in C}\lambda_u \right),
\end{equation}
where $\kappa(V,S)$ is the set of all connected components of the graph $(V,S)$, where each $C \in \kappa(V,S)$ is a subset of vertices that forms a connected subgraph.
The probability that $S$ is drawn is
\begin{align}\label{eqn:wrc}
  \pi_{\text{wrc}}(S)=\frac{\wt_{\text{wrc}}(S)}{Z_{\text{wrc}}}
\end{align}
where
\[
  Z_{\text{wrc}} = Z_{\text{wrc}}(G;\*p,\*\lambda):=\sum_{S\subseteq E}\wt_{\text{wrc}}(S)
\]
is the partition function of the weighted random cluster model.
The (general) standard random cluster model allows a uniform weight $q$ for each connected component,
and in the special case of $\lambda_v = 1$ for all $v \in V$, 
the weighted random cluster model degenerates to the standard random cluster model at $q=2$. 
On the other hand, in our model the weight of each cluster depends on the vertices inside it, 
which makes it different from the standard random cluster models.

\subsubsection{Subgraph-world model} Fix a graph $G=(V,E)$. For any subset of edges $S\subseteq E$, denote by $\odd(S)$ the set of vertices with odd degree in $S$. The subgraph-world model \cite{JS93} with parameters $\*p = (p_e)_{e \in E}$ and $\*\eta = (\eta_v)_{v \in V}$ is defined by following: each subset of edges $S$ has weight
\begin{equation} \label{equ:sg_weight}
  \wt_{\text{sg}}(S):=\prod_{e \in S}p_e \prod_{ f \in E \setminus S }(1-p_f)\prod_{v \in \odd(S)}\eta_v. 
\end{equation}
The probability that $S$ is drawn is
\begin{align}\label{eqn:sg}
  \pi_{\text{sg}}(S)=\frac{\wt_{\text{sg}}(S)}{Z_{\text{sg}}}
\end{align}
where
\[
  Z_{\text{sg}}=Z_{\text{sg}}(G;\*p,\*\eta):=\sum_{S\subseteq E}\wt_{\text{sg}}(S)
\]
is the partition function of the subgraph-world model. In the special case where $p_e = p \in (0,1)$ for all $e \in E$ and $\eta_v=0$ for all $v \in V$, the weight of any subgraph $S$ does not vanish if and only if $S$ is an even subgraph, i.e., $\odd(S)=\emptyset$. This yields the even subgraph model, or the so-called ``high-temperature expansion'' in the context of statistical mechanics.

\subsubsection{Equivalences of the three models}
We have the following equivalence result among the ferromagnetic Ising model with external fields, the weighted random cluster model, and the subgraph-world model. 
The proof of the equivalence result is given in \Cref{app-eq} for completeness.
\begin{proposition}\label{theorem-eqv}
Given any graph $G=(V,E)$, any $\*\beta = (\beta_e)_{e \in E}$ and $\*\lambda = (\lambda_v)_{v \in V}$ satisfying $\beta_e > 1$ for all $e \in E$ and $0 < \lambda_v \leq 1$ for all $v \in V$, it holds that
\begin{equation} \label{equ:three_equivalence}
 \tp{\prod_{e \in E} \beta_e} \cdot Z_{\text{wrc}}(G;2\*p,\*\lambda)=Z_{\text{Ising}}(G;\*\beta,\*\lambda)=\tp{\prod_{v \in V}(1+\lambda_v)}\tp{\prod_{e \in E}\beta_e}  Z_{\text{sg}}(G;\*p,\*\eta),
\end{equation}
where $\*p = (p_e)_{e \in E}$ satisfying $p_e = \frac{1}{2}\tp{1 - \frac{1}{\beta_e}}$ and $\*\eta = (\eta_v)_{v \in V}$ satisfying $\eta_v=\frac{1-\lambda_v}{1+\lambda_v}$.
\end{proposition}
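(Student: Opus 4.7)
The plan is to derive both equalities of (\ref{equ:three_equivalence}) by direct combinatorial expansion of $Z_{\text{Ising}}$, one via the Fortuin--Kasteleyn expansion and the other via the high-temperature expansion.

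For the first equality $\tp{\prod_e \beta_e} \cdot Z_{\text{wrc}}(G;2\*p,\*\lambda) = Z_{\text{Ising}}(G;\*\beta,\*\lambda)$, I would start from the identity $\beta_e^{\mathbb{I}[\sigma(u)=\sigma(v)]} = 1 + (\beta_e-1)\mathbb{I}[\sigma(u)=\sigma(v)]$. Taking the product over $e \in E$ and expanding turns the edge product into a sum over subsets $S \subseteq E$, with prefactor $\prod_{e \in S}(\beta_e - 1)$ and the constraint $\prod_{e \in S}\mathbb{I}[\sigma(u)=\sigma(v)] = 1$. Exchanging the two sums, for a fixed $S$ the constraint forces $\sigma$ to be constant on each connected component $C \in \kappa(V,S)$, so the internal sum over $\sigma$ factorises across components as $\prod_{C \in \kappa(V,S)} \tp{1 + \prod_{v \in C}\lambda_v}$, with the external field contributing $\lambda_v$ on $C$ only when the common spin is $1$. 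Using $2p_e = (\beta_e-1)/\beta_e$ gives $\prod_{e \in S}(\beta_e-1) = \prod_e \beta_e \cdot \prod_{e \in S}(2p_e) \prod_{e \notin S}(1-2p_e)$, and comparing with (\ref{equ:rc_weight}) yields the claim.

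For the second equality $Z_{\text{Ising}}(G;\*\beta,\*\lambda) = \tp{\prod_v (1+\lambda_v)} \tp{\prod_e \beta_e} Z_{\text{sg}}(G;\*p,\*\eta)$, I would switch to $\pm 1$ variables $s(v) = 1 - 2\sigma(v)$. For each edge one has $\beta_e^{\mathbb{I}[\sigma(u)=\sigma(v)]} = \beta_e^{1/2}\tp{\cosh \theta_e + s(u)s(v)\sinh \theta_e}$ with $\theta_e = \tfrac{1}{2}\ln \beta_e$; expanding the product indexes the resulting sum by subsets $S \subseteq E$, and since $s(v)^2 = 1$, the factor $\prod_{e=(u,v) \in S} s(u)s(v)$ collapses to $\prod_{v \in \odd(S)} s(v)$. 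Swapping the sums, the per-vertex sum $\sum_{s(v) \in \{-1,+1\}} \lambda_v^{(1-s(v))/2} s(v)^{\mathbb{I}[v \in \odd(S)]}$ equals $1+\lambda_v$ when $v \notin \odd(S)$ and $1-\lambda_v$ otherwise, which is precisely $(1+\lambda_v)\eta_v^{\mathbb{I}[v \in \odd(S)]}$ with $\eta_v = (1-\lambda_v)/(1+\lambda_v)$. The elementary identities $\sinh \theta_e = \sqrt{\beta_e}\, p_e$ and $\cosh \theta_e = \sqrt{\beta_e}(1-p_e)$, which follow from $p_e = \tfrac{1}{2}(1 - 1/\beta_e)$, then match the edge factors with (\ref{equ:sg_weight}) after collecting the $\sqrt{\beta_e}$ prefactors into $\prod_e \beta_e$.

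No genuinely hard step is anticipated: both computations are textbook rewritings of the Ising partition function. The only care required is bookkeeping of the global prefactors $\prod_e \beta_e$ and $\prod_v(1+\lambda_v)$, and verifying that the prescribed parameters $p_e = \tfrac{1}{2}(1-1/\beta_e)$ and $\eta_v = (1-\lambda_v)/(1+\lambda_v)$ arise naturally out of the single-edge and single-vertex sums respectively. The assumption $\lambda_v \in (0,1]$ is not used in the algebraic identity itself but guarantees $\eta_v \in [0,1)$, so that the subgraph-world weights in (\ref{equ:sg_weight}) are nonnegative and the weighted model is a bona fide probability distribution.
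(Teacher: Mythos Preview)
Your proposal is correct. For the first equality your Fortuin--Kasteleyn expansion is exactly what the paper does (the paper phrases the scalar identity $\beta_e^{\mathbb{I}[\sigma(u)=\sigma(v)]}=1+(\beta_e-1)\mathbb{I}[\sigma(u)=\sigma(v)]$ as a decomposition of the $2\times 2$ interaction matrix, but the computation is identical).

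For the second equality your route differs from the paper's. The paper rewrites the Ising partition function as a Holant sum on the vertex–edge incidence bipartite graph and applies Valiant's holographic transformation with $\*T=\left(\begin{smallmatrix}1&1\\1&-1\end{smallmatrix}\right)$: the vertex signature $[1,0,\dots,0,\lambda_v]$ becomes $(1+\lambda_v)[1,\eta_v,1,\eta_v,\dots]$ and the edge signature $[\beta_e,1,\beta_e]$ becomes $\beta_e[1-p_e,0,p_e]$, which is precisely the subgraph-world Holant up to the claimed scalars. Your high-temperature expansion via the substitution $s(v)=1-2\sigma(v)$ achieves the same end by elementary hyperbolic identities, and is the classical textbook derivation. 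The two are in fact the same linear algebra in different clothing: applying $\*T$ on the $\{0,1\}$ basis is exactly the change of variables to $\pm 1$ spins, and expanding $\cosh\theta_e+s(u)s(v)\sinh\theta_e$ is the combinatorial unpacking of $\*T^{-1}$ acting on the edge signature. Your argument is more self-contained and avoids importing the Holant machinery; the paper's version has the advantage of being uniform with the holographic calculations used later (e.g.\ in the proof of \Cref{lem-aux-hol}).
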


In addition, there are also probabilistic equivalence relations among the models, which will be the topic in \Cref{sec:grand}.

\begin{remark}
For the ferromagnetic Ising model $(G; \*\beta, \*\lambda) = (G; \beta, 1)$, where $\beta_e = \beta > 1$ for all $e \in E$ and $\lambda_v = 1$ for all $v \in V$, 
its relationship with the even subgraph model and the random cluster model is well known (see e.g.~\cite{vdW41,FK72,Gri06}). Formally,
\[
  \beta^{|E|}Z_{\text{wrc}}(G;2p,1)=Z_{\text{Ising}}(G;\beta,1)=2^{|V|}\beta^{|E|}Z_{\text{sg}}(G;p,0)
  \text{ where } p=\frac{1}{2}\left(1-\frac{1}{\beta}\right),
\]
which is a special case of \Cref{theorem-eqv}. 
\end{remark}

\subsection{\texorpdfstring{$f$}{f}-divergences}
A widely-used quantity for measuring the difference between two distributions is the $f$-divergence.
Let $f: \mathbb{R}_{\geq 0} \to \mathbb{R}$ be a \emph{convex} function satisfying $f(1) = 0$. 
Let $\mu$ be a distribution with (finite) support $\Omega = \Omega(\mu)$.
Let $\nu$ be a distributions with support $\Omega(\nu) \subseteq \Omega$.
The $f$-divergence between $\nu$ and $\mu$ is defined by 
\begin{align*}
\DF{\nu}{\mu} \defeq \Ex_{X \sim \mu} \left[ f\tp{\frac{\nu(X)}{\mu(X)}} \right]. 
\end{align*}
In this paper, we consider three important $f$-divergences: the total variation distance, the $\chi^2$-divergence, and the Kullback-Leibler divergence (KL divergence).

Let $f(x) = \frac{1}{2}|x - 1|$. The  \emph{total variation distance} between $\nu$ and $\mu$ is defined by
\begin{align*}
	\DTV{\nu}{\mu} \defeq \frac{1}{2} \sum_{x \in \Omega} \abs{\nu(x) - \mu(x)}. 
\end{align*}
We say the random variable $(X,Y) \in \Omega \times \Omega$ is a \emph{coupling} between $\nu$ and $\mu$ if the marginal distributions satisfy $X \sim \nu$ and $Y \sim \mu$. 
The \emph{coupling inequality} states that for any coupling $(X,Y)$,
\begin{align}\label{eqn:coupling-ineq}
\Pr[X \neq Y] \geq \DTV{\nu}{\mu},
\end{align}
and there exists an \emph{optimal coupling} between $\nu$ and $\mu$ such that equality holds.

Let $f(x) = x^2-1$. The  \emph{$\chi^2$ divergence} between $\nu$ and $\mu$ is defined by
\begin{align*}
	\chisq{\nu}{\mu} \defeq \sum_{x \in \Omega}\frac{\nu^2(x)}{\mu(x)} - 1.
\end{align*}
A similar notion is the relative variance of a function $g:\Omega\rightarrow\=R_{\ge 0}$ over $\mu$:
\begin{align*}
	\Var{\mu}{g} = \Ex_{\mu}[g^2] - \Ex_{\mu}^2[g] = \sum_{x \in \Omega}\mu(x)g^2(x) - \tp{ \sum_{x \in \Omega}\mu(x)g(x) }^2.
\end{align*}
Clearly, if $g(x) = \frac{\nu(x)}{\mu(x)}$, then $\Var{\mu}{g} = \chisq{\nu}{\mu}$.
The following relation is well-known
\begin{align}\label{eqn:chisq}
	\DTV{\nu}{\mu} \leq \sqrt{ \chisq{\nu}{\mu}  }.
\end{align}

Let $f(x) = x\log x$. 
The  \emph{Kullback-Leibler divergence} (KL divergence) is defined by
\begin{align*}
  \KL{\nu}{\mu}\defeq\sum_{x\in \Omega}\nu(x)\log\left( \frac{\nu(x)}{\mu(x)} \right).
\end{align*}
A similar notion is the relative entropy of a function $g:\Omega\rightarrow\=R_{\ge 0}$ over $\mu$:
\begin{align*}
  \Ent{\mu}{g} \defeq \Ex_{\mu}[g \log g] - \Ex_{\mu}[g] \log \Ex_{\mu}[g]
   =  \sum_{x\in\Omega}\mu(x)g(x)\log g(x) - \tp{\sum_{x\in\Omega}\mu(x)g(x)} \log \tp{\sum_{x\in\Omega}\mu(x)g(x)},
\end{align*}
where the convention is that $0\log 0=0$.
Clearly, if $g(x)=\frac{\nu(x)}{\mu(x)}$, $\Ent{\mu}{g}=\KL{\nu}{\mu}$.
The following Pinsker's inequality is well known 
\begin{align}\label{eqn:Pinsker}
  \DTV{\nu}{\mu} \leq \sqrt{\frac{\KL{\nu}{\mu}}{2}}. 
\end{align}

For any stochastic  matrix $P$ that transforms any $x \in \Omega$ to a random $y \in \Omega'$ ($\Omega'$ is not necessarily the same as $\Omega$), the following data-processing  inequality is well-known: for any $f$-divergence,
\begin{align*}
	\DF{\nu P}{\mu P} \leq \DF{\nu}{\mu}.
\end{align*}

\subsection{Markov chains and down-up walks}
Let $\Omega$ be a finite state space. Let $(X_t)_{t \geq 0}$ be a Markov chain over $\Omega$ and $P$ denote the transition matrix. 
We say $P$ is 
\begin{itemize}
	\item \emph{irreducible} if for any $x ,y \in \Omega$, there exists $t  > 0$ such that $P^t(x, y ) > 0$;
	\item \emph{aperiodic} if $\gcd \{t \mid P^t(x,x) > 0\} = 1$ for all $x \in \Omega$;
	\item \emph{reversible} with respect to $\mu$ if the following detailed balance equation holds
	\begin{align*}
		\forall x,y \in \Omega, \quad \mu(x)P(x,y) = \mu(y)P(y,x).
	\end{align*}
\end{itemize}
We say the distribution $\mu$ is a stationary distribution of $P$ if $\mu P = \mu$.
If $P$ is reversible with respect to $\mu$, then $\mu$ is a stationary distribution of $P$.
If $P$ is both irreducible and aperiodic, then $P$ has a unique stationary distribution.
The \emph{mixing time} of $P$ is defined by
\begin{align*}
	\forall \epsilon > 0, \quad \mixingtime{P,\epsilon} \defeq \max_{x \in \Omega} \min \{t \mid \DTV{P^t(x, \cdot)}{\mu} \leq \epsilon\}.
\end{align*}

In this paper, we consider two Markov chains: Glauber dynamics and Swendsen-Wang dynamics.
It will be convenient for us to view Glauber dynamics as a so-called ``down-up'' walk, which we will define next.

Let $\Omega_0$ and $\Omega_1$ denote two finite state spaces.
Let $\mu_0$ and $\mu_1$ denote two distributions over $\Omega_0$ and $\Omega_1$ respectively. 
For $f,g: \Omega_i\to \mathbb{R}$, define $\inner{f}{g}_{\mu_i} = \sum_{x \in \Omega_i}\mu_i(x)f(x)g(x)$. 
Let $\Pup: \Omega_0 \times \Omega_1 \to \mathbb{R}_{\geq 0}$ and $\Pdown: \Omega_1 \times \Omega_0 \to \mathbb{R}_{\geq 0}$ denote two transition matrices.
We say $\Pup$ and $\Pdown$ are a pair of adjoint operator if 
\begin{align*}
	\forall f: \Omega_0 \to \mathbb{R},\, g : \Omega_1 \to \mathbb{R},\quad \inner{f}{\Pup g}_{\mu_0} = \inner{\Pdown f}{g}_{\mu_1}.
\end{align*}
The following equation holds for adjoint $\Pup$ and $\Pdown$:
\begin{align*}
	\forall x_0 \in \Omega_0, x_1 \in \Omega_1, \quad \mu_0(x_0)\Pup(x_0,x_1) = \mu_1(x_1)\Pdown(x_1,x_0).
\end{align*}
Moreover, for any distribution $\nu$ over $\Omega_1$ and $f=\frac{\nu}{\mu_1}$, it holds that
\begin{align*}
  \KL{\nu \Pdown}{\mu_0} = \Ent{\mu_1}{\Pup f} \text{ and } \chisq{\nu \Pdown}{\mu_0} = \Var{\mu_1}{\Pup f}.
\end{align*}
It is straightforward to verify $\Pdownup = \Pdown\Pup$ and $\Pupdown = \Pup\Pdown$ are self-adjoint, i.e. $\inner{f}{\Pdownup g}_{\mu_1} = \inner{\Pdownup f}{g}_{\mu_1}$ and $\inner{f}{\Pupdown g}_{\mu_0} = \inner{\Pupdown f}{g}_{\mu_0}$.
Hence, $\Pdownup$ and $\Pupdown$ are reversible with respect to $\mu_1$ and $\mu_0$ respectively.

\subsubsection{Glauber dynamics.}\label{sec:GD}
Given a distribution $\mu$ with support $Q^V$, let $\Omega_1 = Q^V$ and $\Omega_0 = \{\sigma \in Q^{V \setminus \{v\}} \mid v \in V\}$. 
and the current state $X \in \Omega$, the transition $X \to X'$ of Glauber dynamics can be interpreted as the following two steps
\begin{itemize}
	\item down walk $\Pdown_{\mathrm{Glauber}}$: pick $v \in V$ uniformly at random and transform $X \in \Omega_1$ to $X_{V \setminus v} \in \Omega_0$;
	\item up walk $\Pup_{\mathrm{Glauber}}$: sample $c \sim \mu_v^{X_{V \setminus \{v\}}}$ and transform $X_{V \setminus v} \in \Omega_0$ to $X' \in \Omega_1$ such that $X_v' = c$ and $X_{V \setminus \{v\}}' = X_{V \setminus \{v\}}$.
\end{itemize}
Let $\mu_0 \defeq \mu \Pdown_{\mathrm{Glauber}}$ be a distribution over $\Omega_0$.
Then $\Pdown_{\mathrm{Glauber}}$ and $\Pup_{\mathrm{Glauber}}$ is a pair of adjoint operators with respect to distributions $\mu_1 = \mu$ and $\mu_0$. 
Thus, Glauber dynamics is a down-up walk and is reversible with respect to $\mu$.

\subsubsection{Swendsen-Wang dynamics.}\label{sec:SW}
Let $G = (V,E)$ be a graph. 
Consider the ferromagnetic Ising model on $G$ with parameters $\*\beta = (\beta_e)_{e \in E}$ and $\*\lambda=(\lambda_v)_{v \in V}$, where $\beta_e > 1$ for all $e \in E$,
and the weighted random cluster model on $G$ with parameters $\*p = (p_e)_{e \in E}$ and $\*\lambda = (\lambda_v)_{v \in V}$, where $p_e = 1 - \frac{1}{\beta_e}$ for all $e\in E$.
Recall $\pi_{\mathrm{Ising}}$ from \eqref{eqn:Ising} and $\pi_{\mathrm{wrc}}$ from \eqref{eqn:wrc}.

Define the following two transformations between Ising and weighted random cluster models.
\begin{itemize}
	\item $P_{\+I \to \+R}: \{0,1\}^V \to 2^E$: Given any Ising configuration $\sigma \in \{0,1\}^V$, $P_{\+I \to \+R}$ transforms $\sigma$ into a weighted random cluster model configuration $S \subseteq E$. For each edge $e = \{u,v\} \in E$ with $\sigma(u) = \sigma(v)$, add $e$ independently into $S$ with probability $p_e = 1 - \frac{1}{\beta_e}$. Formally,
	\begin{align}\label{eq-def-P-I-R}
		\forall \sigma \in \{0,1\}^V, S \subseteq E,\quad P_{\+I \to \+R}(\sigma,S) = \mathbb{I}[S\subseteq M(\sigma)]\cdot \prod_{e \in S}\tp{1-\frac{1}{\beta_e}}\cdot \prod_{f \in M(\sigma) \setminus S}\frac{1}{\beta_f},
	\end{align}
	where $M(\sigma) = \{e=\{u,v\} \in E \mid \sigma_u = \sigma_v\}$ is the set of monochromatic edges with respect to $\sigma$.
	\item $P_{\+R \to \+I}: \{0,1\}^E \to \{0,1\}^V$: Given any weighted random cluster model configuration $S \subseteq E$, $P_{\+R \to \+I}$ transforms $S$ to an Ising configuration $\sigma \in \{0,1\}^V$. For each connected component $C \subseteq V$ in graph $G' = (V,S)$, sample $x_C \in \{0,1\}$ independently according to the following distribution
	\begin{align*}
		x_C = \begin{cases}
			1 &\text{with probability } \frac{\prod_{v \in C}\lambda_v}{1+\prod_{v \in C}\lambda_v};\\
			0 &\text{with probability } \frac{1}{1+\prod_{v \in C}\lambda_v},\\
		\end{cases}
	\end{align*}
	and then let $\sigma(v) = x_C$ for all vertices $v \in C$. Formally,
	\begin{align}\label{eq-def-P-R-I}
	\forall \sigma \in \{0,1\}^V, S \subseteq E,\quad P_{\+R \to \+I}(S,\sigma)  = \mathbb{I}[S \subseteq M(\sigma)] \cdot \prod_{C \in\kappa(V,S)} 	\frac{\prod_{v \in C}\lambda_v^{\sigma(v)}}{1+\prod_{v \in C}\lambda_v},
	\end{align}
	where $\kappa(V,S)$ is the set of connected components in graph $G'=(V,S)$.
\end{itemize}

The Swendsen-Wang dynamics for Ising models is defined by 
\begin{align}\label{eq-SW-Ising}
  P_{\mathrm{SW}}^{\mathrm{Ising}} \defeq P_{\+I \to \+R}P_{\+R \to \+I},
\end{align}
and the Swendsen-Wang dynamics for weighted random cluster models is defined by 
\begin{align}\label{eq-SW-wrc}
  P_{\mathrm{SW}}^{\mathrm{wrc}} \defeq P_{\+R \to \+I}P_{\+I \to \+R}.
\end{align}

The following adjoint result about Swendsen-Wang dynamics is well-known.
However, here we consider more general Ising models with external fields and weighted random cluster models. 
For completeness, we provide a proof of the following proposition in \Cref{app-adjoint}.
\begin{proposition}\label{proposition-adjoint}
For any functions $f: \{0,1\}^V \to \mathbb{R}$ and $g: 2^E \to \mathbb{R}$, it holds that 
\begin{align}
\label{eq-self-adjoint}
	\inner{f}{P_{\+I \to \+R}g}_{\pi_{\-{Ising}}} = \inner{P_{\+R \to \+I}f}{g}_{\pi_{\-{wrc}}}.
\end{align}	
\end{proposition}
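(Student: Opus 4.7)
The plan is to reduce the adjointness identity to the pointwise joint-distribution relation
\[
\pi_{\-{Ising}}(\sigma)\, P_{\+I \to \+R}(\sigma, S) \;=\; \pi_{\-{wrc}}(S)\, P_{\+R \to \+I}(S, \sigma) \qquad \forall \sigma \in \{0,1\}^V,\; S \subseteq E,
\]
which is exactly the first bullet of \Cref{proposition-SW}. Once this holds, expanding both sides of \eqref{eq-self-adjoint} as double sums gives
\[
\inner{f}{P_{\+I \to \+R} g}_{\pi_{\-{Ising}}} = \sum_{\sigma,S} \pi_{\-{Ising}}(\sigma) P_{\+I \to \+R}(\sigma,S)\, f(\sigma) g(S) = \sum_{\sigma,S} \pi_{\-{wrc}}(S) P_{\+R \to \+I}(S,\sigma)\, f(\sigma) g(S) = \inner{P_{\+R \to \+I} f}{g}_{\pi_{\-{wrc}}},
\]
so the real content is the pointwise identity.

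To verify the pointwise identity, I would observe that both $P_{\+I \to \+R}(\sigma, S)$ and $P_{\+R \to \+I}(S, \sigma)$ carry the indicator $\mathbb{I}[S \subseteq M(\sigma)]$, so we may restrict attention to pairs with $S \subseteq M(\sigma)$. For such pairs, on the Ising side, the weight $\prod_{e \in E} \beta_e^{\mathbb{I}[\sigma(u)=\sigma(v)]} = \prod_{e \in M(\sigma)} \beta_e$ combines with \eqref{eq-def-P-I-R} to give
\[
\pi_{\-{Ising}}(\sigma) P_{\+I \to \+R}(\sigma,S) = \frac{1}{Z_{\-{Ising}}} \prod_{e \in S}(\beta_e-1) \prod_{u \in V} \lambda_u^{\sigma(u)},
\]
after the factors $\beta_e(1-1/\beta_e) = \beta_e-1$ on $e \in S$ and $\beta_f \cdot (1/\beta_f) = 1$ on $f \in M(\sigma)\setminus S$ combine.

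On the weighted random cluster side, I substitute $p_e = 1 - 1/\beta_e$ into \eqref{equ:rc_weight} and multiply by \eqref{eq-def-P-R-I}; the component factors $\prod_{C}(1+\prod_{v\in C}\lambda_v)$ from $\pi_{\-{wrc}}(S)$ cancel with the denominators in $P_{\+R\to\+I}$, and since $S \subseteq M(\sigma)$ each connected component $C$ of $(V,S)$ is monochromatic under $\sigma$, so $\prod_{C}\prod_{v\in C}\lambda_v^{\sigma(v)} = \prod_{u\in V}\lambda_u^{\sigma(u)}$. This yields
\[
\pi_{\-{wrc}}(S) P_{\+R \to \+I}(S,\sigma) = \frac{1}{Z_{\-{wrc}}} \prod_{e \in S}(\beta_e-1) \prod_{f \in E}\frac{1}{\beta_f} \prod_{u \in V}\lambda_u^{\sigma(u)}.
\]
The two expressions are equal precisely because $Z_{\-{Ising}} = \bigl(\prod_{e \in E}\beta_e\bigr) Z_{\-{wrc}}(G;\*p,\*\lambda)$, which is the first equality in \eqref{equ:three_equivalence} (rewritten under $p_e = 1-1/\beta_e$, i.e.\ the role of $2\*p$ in \Cref{theorem-eqv}).

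The main obstacle is the bookkeeping in the second calculation: one needs to be careful that the component-local products $\prod_{v\in C}\lambda_v^{\sigma(v)}$ multiply out to the global $\prod_{u\in V}\lambda_u^{\sigma(u)}$ (which uses $S \subseteq M(\sigma)$ crucially, and which also forces $\sigma$ to be constant on each component of $(V,S)$), and that every $\beta_e$-factor lands in the right place after the substitution $p_e = 1-1/\beta_e$. Once this accounting is done, appealing to \Cref{theorem-eqv} closes the identity and the adjointness follows.
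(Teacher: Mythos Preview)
Your proposal is correct and follows essentially the same route as the paper's proof: both reduce the adjointness to the pointwise identity $\pi_{\-{Ising}}(\sigma)P_{\+I\to\+R}(\sigma,S)=\pi_{\-{wrc}}(S)P_{\+R\to\+I}(S,\sigma)$, compute each side to the common form $\frac{1}{Z}\prod_{e\in S}(\beta_e-1)\prod_{u\in V}\lambda_u^{\sigma(u)}$ (up to the factor $\prod_{e\in E}\beta_e^{-1}$), and close using \Cref{theorem-eqv}. One small remark: the identity $\prod_{C\in\kappa(V,S)}\prod_{v\in C}\lambda_v^{\sigma(v)}=\prod_{u\in V}\lambda_u^{\sigma(u)}$ holds simply because the components partition $V$; the monochromaticity implied by $S\subseteq M(\sigma)$ is not needed for that step (only for the indicator).
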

By \Cref{proposition-adjoint}, it holds that $\pi_{\-{Ising}}P_{\+I \to \+R} =\pi_{\-{wrc}} $ and $\pi_{\-{wrc}}P_{\+R \to \+I} = \pi_{\-{Ising}}$.
Both  $P_{\mathrm{SW}}^{\mathrm{Ising}}$ and $P_{\mathrm{SW}}^{\mathrm{wrc}}$ are down-up walks, and their stationary distributions are $\pi_{\-{Ising}}$ and $\pi_{\-{wrc}}$ respectively. 

Finally, the mixing times of  $P_{\mathrm{SW}}^{\mathrm{Ising}}$ and $P_{\mathrm{SW}}^{\mathrm{wrc}}$ have the following relationships:
\begin{align}\label{eq-mixing-sw}
T_{\-{mix}}\tp{P_{\mathrm{SW}}^{\mathrm{Ising}}, \epsilon} \leq T_{\-{mix}}\tp{P_{\mathrm{SW}}^{\mathrm{wrc}}, \epsilon} + 1 \quad\text{and}\quad T_{\-{mix}}\tp{P_{\mathrm{SW}}^{\mathrm{wrc}}, \epsilon} \leq T_{\-{mix}}\tp{P_{\mathrm{SW}}^{\mathrm{Ising}}, \epsilon} + 1.
\end{align}
We prove the first one, the second one holds similarly. 
Let $T=T_{\-{mix}}\tp{P_{\mathrm{SW}}^{\mathrm{wrc}}, \epsilon}$.
For any distribution $\nu$ over $\{0,1\}^V$, we have 
\begin{align*}
	\DTV{\nu (P_{\mathrm{SW}}^{\mathrm{Ising}})^{T+1} }{\pi_{\-{Ising}}} &=  	\DTV{(\nu P_{\+I \to \+R}) (P_{\mathrm{SW}}^{\mathrm{wrc}})^T P_{\+R \to \+I } }{\pi_{\-{wrc}}P_{\+R \to \+I} }\\
\text{(by data processing inequality)}\quad	 &\leq \DTV{(\nu P_{\+I \to \+R}) (P_{\mathrm{SW}}^{\mathrm{wrc}})^T }{\pi_{\-{wrc}} } \leq \epsilon.
\end{align*}

\subsection{Canonical paths and variance decay}
Let $P$ denote a random walk over $\Omega$ that is reversible with respect to $\mu$.
%
It is well-known that $P$ has real eigenvalues $1 \geq \lambda_1 \geq \lambda_2 \geq \ldots \geq \lambda_{|\Omega|}$. 
The \emph{spectral gap} is defined by
	\begin{align*}
  \Gap(P) = 1 - \lambda_2.
\end{align*}

Define the Dirichlet form of $P$ by for any functions $f,g: \Omega \to \mathbb{R}$,
\begin{align*}
	\+E_{P}(f,g) = \inner{f}{(I-\Pdownup)g}_{\mu} = \frac{1}{2}\sum_{x,y \in \Omega}\mu(x)P(x,y)(f(x)-f(y))(g(x)-g(y)).
\end{align*} 
We can also characterise the spectral gap $\Gap(P)$ in a variational form:
\begin{align}\label{eq-def-gap}
 \Gap(P) = \inf \left\{ \frac{\+E_{P}(f,f)}{\Var{\mu}{f}} \mid f:\Omega \to \mathbb{R} \land \Var{\mu}{f} \neq 0  \right\}.
\end{align}

A useful tool to analyse the spectral gap of a reversible Markov chains is the canonical path introduced by Jerrum and Sinclair~\cite{JS89}.
Let $P$ be a reversible Markov chain over the state space $\Omega$ with stationary distribution $\pi$.
%
Let $\gamma_{XY}=(Z_0 = X, Z_1,Z_2,\ldots,Z_\ell = Y)$ be a path of length $\ell$ moving in the state space using transitions of $P$, i.e.~for any $i\in[\ell]$, $P(Z_{i-1},Z_{i}) > 0$.
For each pair of $X,Y\in\Omega$, its path $\gamma_{XY}$ is assigned a weight $w(\gamma_{XY})=\mu(X)\mu(Y)$.
Let $\Gamma$ be the collection of all canonical paths. 
The congestion of $\Gamma$ is defined by
\begin{equation}
\varrho(\Gamma):=\max_{(Z,Z')\in\Omega^2,P(Z,Z')>0}\frac{L}{\mu(Z)P(Z,Z')}\sum_{\gamma\in\Gamma:(Z,Z')\in\gamma}w(\gamma)
\end{equation}
where $L$ is the maximum length of path in $\Gamma$. 
Sinclair \cite{Sin92} showed that the congestion of any collection of paths $\Gamma$ for a Markov chain $P$ is an upper bound of the inverse of its spectral gap, namely,
\begin{align*}
\frac{1}{\Gap(P)} \leq \varrho(\Gamma).
\end{align*}

Consider the down-up walk $\Pdownup = \Pdown\Pup$ over $\Omega_1$, where $\Pdown: \Omega_1 \times \Omega_0 \to \mathbb{R}_{\geq 0}$ and $\Pup: \Omega_0 \times \Omega_1 \to \mathbb{R}_{\geq 0}$ are a pair of adjoint operators with respect to distribution $\mu_0$ over $\Omega_0$ and $\mu_1$ over $\Omega_1$. 
For simplicity, we denote $\Omega_1$ by $\Omega$, and we denote $\mu_1$ by $\mu$.
The following result holds for $\Pdownup$.
\begin{proposition}\label{theorem-gap-decay}
Let $\Pdownup = \Pdown\Pup$ be a down-up walk over $\Omega$ that is reversible with respect to $\mu$. 
For any $0 < \delta <1$, the spectral gap $\Gap(\Pdownup) \geq \delta$ if and only if
 for any distribution $\nu$ over $\Omega$,
\begin{align}\label{eq-chi-decay}
	\chisq{\nu \Pdown}{\mu \Pdown} \leq \tp{1 - \delta} \chisq{\nu}{\mu}. 
\end{align} 
\end{proposition}
\begin{proof}
Let $f = \frac{\nu}{\mu}$.
It holds that 
\begin{align*}
	\+E_{\Pdownup}(f,f) = \inner{f}{f}_{\mu} - \inner{f}{\Pdownup f}_{\mu} =   \inner{f}{f}_{\mu} - \inner{\Pup f}{\Pup f}_{\mu_0} = \Var{\mu}{f} - \Var{\mu_0}{\Pup f}.
\end{align*}
Then the lemma follows from $\chisq{\nu \Pdown}{\mu \Pdown} = \Var{\mu_0}{\Pup f}$, $\chisq{\nu}{\mu} = \Var{\mu}{f}$, and~\eqref{eq-def-gap}.
\end{proof}


\subsection{Spectral independence and entropy decay}\label{section-si-pre}
Let $Q$ be a finite set.
Let $\mu$ be a distribution with support $Q^V$. 
Fix a partial pinning $\tau \in Q^\Lambda$ for some $\Lambda \subseteq V$. Define the \emph{absolute influence matrix} $\Psi^\tau_\mu$ by  
\begin{align*}
	\forall u,v \in V \setminus \Lambda \text{ with } u \neq v,&\quad \Psi^\tau_\mu(u,v) \defeq \max_{i,j \in Q}\DTV{\mu^{\tau \wedge (u \gets i)}_v}{\mu^{\tau \wedge (u \gets j)}_v}\\
	\forall v \in V \setminus \Lambda, &\quad \Psi^\tau_\mu(v,v) \defeq  0.
\end{align*}
where $\DTV{\cdot}{\cdot}$ denotes the total variation distance and $\mu^{\tau \wedge (u \gets i)}_v$ denotes the marginal distribution on $v$ conditional on that variables in $\Lambda$ take the value $\tau$ and $u$ takes the value $i$.
We say that the distribution $\mu$ is $\ell_\infty$-\emph{spectrally independent} with parameter $\zeta$ if
\begin{align*}
	\forall \Lambda \subset V, \sigma \in Q^\Lambda, \quad \norm{ \Psi^\sigma_\mu }_\infty = \max_{u \notin \Lambda} \sum_{v \notin \Lambda}\Psi^\sigma_\mu(u,v) \leq \zeta.
\end{align*}
Call $\mu$ \emph{$b$-marginally bounded} if
\begin{align*}
	\min_{\Lambda \subseteq V, v \notin \Lambda} \min_{\sigma \in Q^\Lambda, c \in Q} \mu^\sigma_v(c) \geq b. 
\end{align*}

In this paper, we are particularly interested in \emph{Gibbs distributions}.
We will consider a slightly more general than usual version defined over hypergraphs.
Let $H = (V,\+E)$ be a hypergraph.
Given weight functions $(\phi_v)_{v \in V}$ and $(\phi_e)_{e \in \+E}$, where $\phi_v: Q \to \mathbb{R}_{>0}$ and $\phi_e: Q^e \to \mathbb{R}_{>0}$, define the Gibbs distribution $\mu$ over $Q^V$ by 
\begin{align*}
	\forall \sigma \in Q^V, \quad \mu(\sigma) \propto \prod_{v \in V}\phi_v(\sigma_v) \prod_{e \in \+E} \phi_e(\sigma_e).
\end{align*}
Let $G_\mu = (V,E)$ be a graph such that $\{u,v\} \in E$ if $u \in e'$ and $v \in e'$ for some $e' \in \+E$. 
For any disjoint $A,B,C \subseteq V$, if the removal of $C$ disconnects $A$ and $B$ in $G_\mu$, it holds that variables in $A$ and $B$ are independent in $\mu$ conditional on any assignment on $C$. 
Define \emph{maximum degree} $D_\mu$ of the Gibbs distribution $\mu$ as the maximum degree of the graph $G_{\mu}$.

The spectral independence is related to the mixing time of Glauber dynamics.
The following result is proved in \cite{CLV21,BCCPSV22} (see also \cite[Theorem 13]{CLV21a})
\begin{theorem}[\cite{CLV21,BCCPSV22}]\label{theorem-CLV}
Let $\zeta,b,D >0$. For any Gibbs distribution $\mu$ over $Q^V$, where $|V| = n$, if $\mu$ is $\ell_\infty$-\emph{spectrally independent} with parameter $\zeta$, $b$-marginally bounded and has the maximum degree at most $D$, then the down walk of the Glauber dynamics satisfies that 
\begin{align*}
	\forall \text{distribution } \nu \text{ over } Q^V, \quad \KL{\nu \Pdown_{\mathrm{Glauber}}}{\mu \Pdown_{\mathrm{Glauber}}} \leq \tp{1 - \frac{1}{Cn}}\KL{\nu }{\mu }, 
\end{align*}
where $C = \left(\frac{D}{b}\right)^{1+2\left\lceil\frac{\zeta}{b}\right\rceil}>1$ is a constant depending only on $\zeta,b$ and $D$.
\end{theorem}
In \cite{CLV21,BCCPSV22}, they mainly establish the so-called ``approximate tensorization of entropy'' property for $\mu$.
However this is equivalent to the contraction of relative entropy by~$\Pdown_{\mathrm{Glauber}}$~\cite{CLV21}.

\subsection{Holographic transformation} 

We will need holographic transformations \cite{Val08} to show couplings between the subgraph-world model and the weighted random cluster model.
Let $f:\{0,1\}^d\rightarrow \=C$ be a function.
We may represent it by a vector (either row or column vector) $\left(f_0,\cdots,f_x,\cdots,f_{2^d-1}\right)$ where $f_x$ is the value of $f$ on $x\in\{0,1\}^d$ by regarding $x$ as a binary representation.
In the symmetric case where $f$ is invariant under permutations of indices,
we use a succinct ``signature'' $[f_0,\cdots,f_w,\cdots,f_d]$ to express $f$,
where $f_w$ is the value of $f$ on inputs of Hamming weight $w$, i.e. all $x \in \{0,1\}^d$ satisfying $|x| = w$.

Given a bipartite graph $H=(V,E)$ with partition $V = V_1 \uplus V_2$. 
Let $\+F = (f_v)_{v \in V_1}$ and $\+G=(g_v)_{v \in V_2}$ be two sets of functions such that the arity of the function is the degree of the corresponding vertex.
The \emph{Holant} (an edge weighted partition function) is defined by
\begin{align*}
 \holant(H;\+F \mid \+G):=\sum_{\sigma:E\to\{0,1\}}\prod_{v\in V_1}f_v\left(\sigma\mid_{E(v)}\right)\prod_{u\in V_2}g_u\left(\sigma\mid_{E(u)}\right),	
\end{align*}
where $\sigma\mid_{E(v)}$ stands for the restriction of the assignment $\sigma$ to the incident edges of $v$.

Let ${\bm M}$ be a $2\times 2$ matrix and $f$ be a function of arity $d$. 
If $f$ is represented by a column (resp.~row) vector, we write ${\bm M}f={\bm M}^{\otimes d}f$ (resp.~$f{\bm M}=f{\bm M}^{\otimes d}$) as the transformed signature. 
Given $\holant(H;\+F\mid\+G)$ and an invertible matrix ${\bm T}\in\=C^{2\times 2}$, we view signatures in $\+F$ as row vectors and define $\+F{\bm T}=\{f'_v \mid v \in V_1 \land f'_v = f_v\*T\}$; and view signatures in $\+G$ as column vectors and define ${\bm T}^{-1}\+G=\{g'_v \mid v \in V_2 \land g'_v =  \*T^{-1}g_v\}$.
Valiant's celebrated Holant Theorem \cite{Val08} states

\begin{theorem} \label{thm:holant}
  $\holant(H;\+F\mid\+G)=\holant(H;\+F{\bm T}\mid {\bm T}^{-1}\+G)$.
\end{theorem}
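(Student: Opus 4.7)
The plan is to establish the identity by inserting $I=\*T\*T^{-1}$ along each edge of $H$ and then redistributing the factors to the two endpoints. First I would rewrite the Holant as a tensor contraction: viewing each signature $f_v$ with $v\in V_1$ as a row vector in $\=C^{2^{\deg(v)}}$ and each $g_u$ with $u\in V_2$ as a column vector in $\=C^{2^{\deg(u)}}$, the definition
\begin{align*}
\holant(H;\+F\mid\+G)=\sum_{\sigma:E\to\{0,1\}}\prod_{v\in V_1}f_v(\sigma\mid_{E(v)})\prod_{u\in V_2}g_u(\sigma\mid_{E(u)})
\end{align*}
is precisely the contraction in which every edge $e=(v,u)$ is summed over its single index $\sigma(e)\in\{0,1\}$ joining the $f_v$ slot and the $g_u$ slot associated to $e$.

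Next, for each edge $e$ I would split the single summation variable $\sigma(e)$ into a pair $(a_e,b_e)$ coupled by the identity kernel $I_{a_e b_e}$; since this changes nothing, the Holant value is unaffected. Then, using $I=\*T\*T^{-1}$, I would replace $I_{a_e b_e}$ by $\sum_{i_e}\*T_{a_e i_e}\*T^{-1}_{i_e b_e}$, so that along each edge a new intermediate index $i_e$ appears, with one copy of $\*T$ on the $V_1$-side and one copy of $\*T^{-1}$ on the $V_2$-side. This substitution is still exact because $\*T\*T^{-1}=I$.

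Now I would absorb, for each vertex $v\in V_1$, all the $\*T$ matrices at its incident edges into $f_v$. Since the Holant summand depends on $\sigma\mid_{E(v)}$ through the multilinear evaluation of $f_v$, the product of the incident $\*T$'s acts exactly as the tensor power $\*T^{\otimes \deg(v)}$, producing the transformed row vector $f_v\*T$ in the notation of the paper. Symmetrically, at each $u\in V_2$ the incident $\*T^{-1}$ matrices collapse into the column vector $\*T^{-1}g_u$. Re-contracting the resulting expression, now with the intermediate indices $\{i_e\}_{e\in E}$ as the summation variables, one recognises precisely $\holant(H;\+F\*T\mid \*T^{-1}\+G)$, which gives the claimed equality.

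The main obstacle I expect is purely notational rather than conceptual: keeping the ``row versus column'' conventions consistent at every edge and checking that the tensor-product identity
\begin{align*}
(f_v\*T^{\otimes \deg(v)})(\sigma\mid_{E(v)})=\sum_{a\in\{0,1\}^{\deg(v)}} f_v(a)\prod_{e\ni v}\*T_{a_e,\,\sigma(e)}
\end{align*}
is applied with the correct indexing so that the per-edge factorisation actually glues into the advertised global transformation. Once the per-edge insertion is set up carefully, the argument distributes automatically thanks to the product structure of the Holant, and no further analytic input is required.
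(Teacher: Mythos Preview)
Your approach is correct and is essentially the standard proof of Valiant's Holant Theorem. Note, however, that the paper does not actually prove this statement: it is quoted as a known result from \cite{Val08} without proof, so there is no ``paper's own proof'' to compare against. Your tensor-contraction argument with the per-edge insertion of $I=\*T\*T^{-1}$ is exactly the usual way this theorem is established.
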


\section{The grand model and a generalised Grimmett--Janson coupling}
\label{sec:grand}

We introduce a grand model, inspired by \cite{GJ07}, that unifies the subgraph and random cluster models introduced in \Cref{sec:models}.
We also generalise the coupling of Grimmett and Janson \cite{GJ07} for ferromagnetic Ising models with external fields.
It is possible to also include vertex configurations in this grand model \`{a} la Edwards and Sokal \cite{ES88},
so that the Ising model is also unified under this framework.
However it does not appear to have much benefit and we choose not to do so.

\subsection{The grand model}\label{sec:grand-def}
Let $G=(V,E)$ be a simple undirected graph. 
The \emph{grand model}, specified by parameters $\*p = (p_e)_{e \in E}$ and $\*\eta = (\eta_v)_{v \in V}$ where $0\leq p_e\leq 1/2$ and $0\leq \eta_v\leq 1$, defines a distribution $\pi_{\-{gm}}$ over all configurations on the edges of three states $X:E\to\{0,1,2\}$. 
Given an assignment $X$ in the grand model, denote by $X^{-1}(q)$ the set of edges that are assigned $q$ under $X$ where $q=0,1,2$. 
The weight of each configuration is given by 
\begin{equation}\label{equ:gm-weight}
  \wt_{\-{gm}}(X)=\prod_{e \in X^{-1}(\{1,2\})}p_e\prod_{f\in X^{-1}(0)}(1-2p_f)\prod_{v\in\+{O}(X)}\eta_v,
\end{equation}
where $\+{O}(X)$ is the set of vertices of odd degree in the subgraph $(V,X^{-1}(1))$. 
The probability of each configuration $X$ is
\begin{equation}\label{equ:gm-prob}
  \pi_{\-{gm}}(X)=\frac{\wt_{\-{gm}}(X)}{Z_{\-{gm}}}
\end{equation}
where
\[
  Z_{\-{gm}}=Z_{\-{gm}}(G;\*p,\*\eta):=\sum_{X\in\Omega_{\-{gm}}(G)}\wt_{\-{gm}}(X)
\]
is the partition function of the grand model. 

Equivalently,  a random sample from the grand model can be generated by the following procedure.
\begin{itemize}
	\item \textbf{Step-I}: Sample $S \sim \pi_{\-{sg}}$, where $\pi_{\-{sg}}$ is the distribution specified by the subgraph-world model with parameters $(\*p,\*\eta)$; for each $e \in E$, let $X(e) = 1$ if $e \in S$ and let $X(e) = *$ if $e \notin S$.
	\item \textbf{Step-II}: Independently for each $e \in E$ with $X_e = *$, set $X(e) = 2$ with probability $\frac{p_e}{1-p_e}$, and $X(e) = 0$ otherwise. 
\end{itemize}
It is straightforward to verify that the outcome distribution is exactly the grand model distribution. 
%

Recall the definition of a Gibbs distribution and its maximum degree in \Cref{section-si-pre}.
The grand model is indeed a Gibbs distribution, 
where the variables are all edges of $G$, and each vertex $v\in V$ introduces a hyperedge consisting of all edges adjacent to $v$.
In other words, $G_{\pi_{\-{gm}}}$ (as defined in \Cref{section-si-pre}) is the line graph of $G$.
Thus we have the following observation.

\begin{observation}\label{lemma-I}
 The distribution $\pi_{\-{gm}}$ is a Gibbs distribution with maximum degree $D\le 2\Delta-1$, where $\Delta$ is the maximum degree of the graph $G=(V,E)$.
\end{observation}

The next lemma gives the relation among the grand model, the subgraph-world model and the random cluster model.
\begin{lemma}\label{lemma-marginal}
Let $X \sim \pi_{\-{gm}}$ be a random sample from the grand model with parameter  $\*p=(p_e)_{e \in E}$ and $\*\eta = (\eta_v)_{v \in V}$, where $0 \leq p_e \leq 1/2$ and $0\leq \eta_v\leq 1$. It holds that
\begin{itemize}
  \item $\+S = \{e \in E \mid X(e) = 1\}$ follows the distribution specified by the subgraph-world model with parameters $(\*p,\*\eta)$;
  \item $\+R = \{e \in E \mid X(e) = 1 \lor X(e) = 2\}$ follows the distribution specified by the random cluster model with parameters $(2\*p, \*\lambda)$, where $\lambda_v = \frac{1-\eta_v}{1+\eta_v}$ for all $v \in V$.
\end{itemize}	
\end{lemma}
Namely, $X(e)=1$ means $e$ is present in the subgraph-world model (\textbf{Step-I}),
and $X(e)=2$ means $e$ is absent in the subgraph-world model, but gets added into the random cluster model in \textbf{Step-II}. 
$X(e)=0$ means $e$ is absent in both models. 

The first part of \Cref{lemma-marginal} holds trivially. 
The second part is proved by a generalised Grimmett--Janson coupling~\cite{GJ07}.
The proof of the second part is given in \Cref{section-GJ-coupling}.

\subsection{Coupling via holographic transformation}\label{section-GJ-coupling}

Under the unweighted setting, Grimmett and Janson \cite[Theorem 3.5]{GJ07} discovered a coupling between random even subgraphs and random cluster configurations. 
The following lemma is a generalisation to the weighted case via holographic transformations.

\begin{lemma} \label{lem:weighted-coupling}
Let $G=(V,E)$ be a graph, $\*p = (p_e)_{e \in E}$ and $\*\eta = (\eta_v)_{v \in V}$, where $0\leq  p_e \leq 1/2$ for all $e \in E$ and $\eta_v \geq 0$ for all $v \in V$.
Let $\+S \subseteq E$ be a random sample from the  subgraph-world model $(G;\*p,\*\eta)$.
Let $\+R$ be $\+S$ with each remaining edge $e \in E \setminus \+S$ added into $\+R$ independently with probability $p_e/(1-p_e)$. Then the random subgraph $\+R$ satisfies the distribution of the random cluster model with parameter $(2\*p,\*\lambda)$ where $\eta_v=\frac{1-\lambda_v}{1+\lambda_v}$ for all $v \in V$.
\end{lemma}
We remark that the second part of \Cref{lemma-marginal} is a straightforward consequence of \Cref{lem:weighted-coupling}.
We need the following lemma to prove \Cref{lem:weighted-coupling}.
\begin{lemma}\label{lem-aux-hol}
Let $G=(V,E)$ be a graph. Let $\*\lambda = (\lambda_v)_{v \in V}$ where $0 \leq \lambda_v < 1$ for all $v \in V$.
For each $v \in V$, let $\eta_v = \frac{1-\lambda_v}{1+\lambda_v}$. It holds that
\begin{equation} \label{equ:subgraph-counting}
  \prod_{C\in\kappa(V,E)}\left(1+\prod_{u \in C} \lambda_u\right)=\tp{\prod_{v \in V}(1+\lambda_v)}\left(\frac{1}{2}\right)^{|E|}\sum_{E'\subset E}\prod_{u \in \odd(E')}\eta_u,
\end{equation}
where $\kappa(V,E)$ is the set of connected components in graph $G=(V,E)$.
\end{lemma}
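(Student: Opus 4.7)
The plan is to realise both sides of~\eqref{equ:subgraph-counting} as Holants of a common bipartite signature graph, related by a Hadamard holographic transformation, so that the identity drops out of \Cref{thm:holant}.

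First I would encode the LHS as a Holant on the bipartite incidence graph $H$ of $G$: one part is $V$ (each $v$ has arity $d_v \defeq \deg_G(v)$), the other is $E$ (each $e$ has arity $2$), and $v\in V$ is joined to $e\in E$ iff $v$ is an endpoint of $e$ in $G$. At each $v$ place the weighted equality signature $f_v$ with $f_v(0^{d_v})=1$, $f_v(1^{d_v})=\lambda_v$, and $f_v=0$ otherwise; at each $e$ place the binary equality signature $g_e(x_1,x_2)=\mathbb{I}[x_1=x_2]$. The edge constraints force the two half-edges at each $e$ to carry a common bit, and the vertex constraints force all half-edges at each $v$ to carry a common bit $\sigma(v)\in\{0,1\}$. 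Thus nonzero configurations are in bijection with spin assignments $\sigma\in\{0,1\}^V$ that are constant on each connected component of $G$, each of Holant weight $\prod_v\lambda_v^{\sigma(v)}$, and summing gives $\prod_{C\in\kappa(V,E)}(1+\prod_{u\in C}\lambda_u)$, which is the LHS.

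Next I would apply the Hadamard matrix $\*T=\tfrac{1}{\sqrt 2}\Tmatrix$, which is symmetric and orthogonal, so $\*T^{-1}=\*T$. Since $g_e$ corresponds to the $2\times 2$ identity matrix, the transformed signature $\*T^{-1}g_e$ equals $g_e$ itself; hence after the transformation each edge still forces agreement of its two half-edges, and configurations of the transformed Holant are in bijection with subsets $E'\subseteq E$. For the vertex signatures a direct computation gives
\[
(f_v\*T)(a)=\prod_i\*T_{0,a_i}+\lambda_v\prod_i\*T_{1,a_i}=\frac{1+\lambda_v(-1)^{\abs{a}}}{2^{d_v/2}}=\frac{1+\lambda_v}{2^{d_v/2}}\cdot\eta_v^{\mathbb{I}[\abs{a}\text{ odd}]},
\]
using $\eta_v=(1-\lambda_v)/(1+\lambda_v)$ in the last step. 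At vertex $v$ the input $a$ has Hamming weight $\deg_{E'}(v)$, so the factor $\eta_v$ contributes iff $v\in\odd(E')$. Combining with $\sum_v d_v=2\abs{E}$, the transformed Holant equals
\[
\prod_v\frac{1+\lambda_v}{2^{d_v/2}}\sum_{E'\subseteq E}\prod_{v\in\odd(E')}\eta_v=\tp{\prod_{v\in V}(1+\lambda_v)}\left(\frac{1}{2}\right)^{\abs{E}}\sum_{E'\subseteq E}\prod_{v\in\odd(E')}\eta_v,
\]
which is the RHS. By \Cref{thm:holant} the two Holants agree, and the lemma follows.

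The main obstacle is conceptual rather than computational: one has to spot that the ``spin constant on components'' partition function and the ``vertex-parity'' subgraph sum are Hadamard duals, and to identify the correct weighted-equality and binary-equality signatures producing this duality. Once the encoding is chosen, the Hadamard calculation is mechanical, and the parity signature $[1,\eta_v,1,\eta_v,\ldots]$ appearing on the transformed side is exactly what aligns with the RHS.
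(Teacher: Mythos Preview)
Your proposal is correct and follows essentially the same route as the paper: both encode the LHS as a Holant on the incidence bipartite graph with weighted-equality signatures at vertices and binary equality at edges, then apply a Hadamard transformation and read off the RHS. The only cosmetic difference is that you use the normalised Hadamard $\tfrac{1}{\sqrt{2}}\Tmatrix$ (so $g_e$ is fixed and the $2^{-|E|}$ factor emerges from $\prod_v 2^{-d_v/2}$), whereas the paper uses the unnormalised $\Tmatrix$ (so $\*T^{-1}g_e=\tfrac12 g_e$ contributes the $2^{-|E|}$ directly and each $f_v$ picks up $(1+\lambda_v)$ without a power of $2$).
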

\begin{proof}
Define a bipartite graph $H$ with left part $V_1 = V$ corresponding to vertices in $G$ and right part $V_2=E$ corresponding to edges in $G$.
Two vertices $v \in V_1$ and $e \in V_2$ are adjacent in $H$ if $v$ is incident to $e$ in $G$.
Let $d_v$ denote the degree of $v$ in $G$. Consider the following set of signatures
\begin{align*}
	\+F^{(1)} &= \set{f_v^{(1)} = \left[1,0\right]^{\otimes d_v}+\lambda_v \left[0,1\right]^{\otimes d_v}\mid v \in V},\\
	\+F^{(2)} &= \set{f_v^{(2)} = \frac{1}{1+\lambda_v}\tp{[1,1]^{\otimes d_v} + \lambda_v[1,-1]^{\otimes d_v}} \mid v \in V},\\
	\+G &= \set{g_e = [1,0,1] \mid e \in E}.
\end{align*}
We remark that $f_v^{(2)} = [1,\eta_v,1,\eta_v,\ldots]$.
Let $\*T=\Tmatrix$. Observe that $f^{(1)}_v \*T = (1+\lambda_v)f_v^{(2)}$ and $\*T ^{-1}g_e =\frac{1}{2}g_e$. By \Cref{thm:holant}, it holds that
\begin{align}\label{equ:subgraph-counting-holant}
\holant \tp{H;\+F^{(1)} \mid \+G} = \tp{\prod_{v \in V}(1+\lambda_v)} \tp{\frac{1}{2}}^{|E|} 	\holant \tp{H;\+F^{(2)} \mid \+G}.
\end{align}
This equation is indeed \eqref{equ:subgraph-counting} in disguise.
The equivalence between the left-hand sides of (\ref{equ:subgraph-counting-holant}) and (\ref{equ:subgraph-counting}) is a simple observation that the signature $[1,0,1]$ on the edge forces the spins of vertices in each connected component $C$ to be the same. Each component contributes a weight $1+\prod_{u \in C}\lambda_u$. The equivalence between the right-hand sides of (\ref{equ:subgraph-counting-holant}) and (\ref{equ:subgraph-counting}) follows from how $\+F^{(2)}$ and $\+G$ are defined.
This proves the lemma. 
\end{proof}

\begin{proof}[Proof of \Cref{lem:weighted-coupling}]

  For each subgraph $R\subseteq E$ of $G=(V,E)$, 
  \begin{align*}
  \Pr[\+R=R]&=\frac{1}{Z_{\text{sg}}(G;\*p,\*\eta)}\sum_{S \subseteq R}\prod_{u \in \odd(S)}\eta_u \prod_{e \in S}p_e \prod_{f \in E \setminus S}(1-p_f)\prod_{g \in R \setminus S} \frac{p_g}{1-p_g} \prod_{h \in E \setminus R} \frac{1-2p_h}{1-p_h} \\
  &=\frac{1}{Z_{\text{sg}}(G;\*p,\*\eta)}\sum_{S\subseteq R}\prod_{u \in \odd(S)}\eta_u  \prod_{e \in R}p_e \prod_{f \in E \setminus R} (1-2p_f)\\
  &=\frac{1}{Z_{\text{sg}}(G;\*p,\*\eta)}2^{-|R|}\prod_{e \in R}(2p_e) \prod_{f \in E \setminus R} (1-2p_f)\sum_{S\subseteq R}\prod_{u \in \odd(S)}\eta_u\\
  &=\frac{1}{Z_{\text{sg}}(G;\*p,\*\eta)}\prod_{e \in R}(2p_e) \prod_{f \in E \setminus R} (1-2p_f) \prod_{v \in V} \frac{1}{1+\lambda_v}\prod_{C\in\kappa(V,R)}\left(1+\prod_{u\in C}\lambda_u\right)\tag{By (\ref{equ:subgraph-counting}) on $(V,R)$}\\
  &=\frac{1}{Z_{\-{wrc}}(G;2\*p,\*\lambda)}\prod_{e \in R}(2p_e) \prod_{f \in E \setminus R} (1-2p_f) \prod_{C\in\kappa(V,R)}\left(1+\prod_{u\in C}\lambda_u\right).\tag{By (\ref{equ:three_equivalence})}\\ 
  &=\pi_{\-{wrc}}(R). \qedhere
  \end{align*} 
\end{proof}

\section{Variance decay of Glauber dynamics on the grand model}\label{section-var}
Let $G=(V,E)$ be a graph. 
Let $\*p = (p_e)_{e \in E}$ and $\*\eta = (\eta_v)_{v \in V}$, where $0 < p_e < 1/2$ and $0 < \eta_v < 1$.  
Let $\pi_{\-{gm}}$ denote the distribution specified by the grand model with parameters $\*p$ and $\*\eta$.
Let $\Omega(\pi_{\-{gm}})$ denote the support of $\pi_{\-{gm}}$.
We use $P_{\-{GlauberGM}}$ to denote Glauber dynamics on $\pi_{\-{gm}}$ as defined in \Cref{sec:GD}. 

\begin{lemma}\label{lemma-var-decay}
The Glauber dynamics $P_{\-{GlauberGM}}$  satisfies that for any distribution $\nu$ with support $\Omega(\nu) \subseteq \Omega(\pi_{\-{gm}})$, 
\begin{align*}
	\chisq{\nu P_{\-{GlauberGM}}^{\downarrow} }{\pi_{\-{gm}} P_{\-{GlauberGM}}^{\downarrow} } \leq \tp{1 - \frac{\eta_{\min}^4\min\left\{p_{\min},1-2p_{\max}\right\}}{m^2}}\chisq{\nu}{\pi_{\-{gm}}},
\end{align*}
where $\eta_{\min} = \min_{v \in V}\eta_v$ and $m = |E|$.
\end{lemma}

By \Cref{theorem-gap-decay}, we only need to bound the spectral gap of the Glauber dynamics. 
The rest of this section endeavours to show
\begin{equation} \label{equ:gm-spectral-gap}
  \Gap\left(P_{\-{GlauberGM}}\right)\geq \frac{\eta_{\min}^4}{m^2}\min\left\{p_{\min},1-2p_{\max}\right\}.
\end{equation}
This will be proved using the canonical path method adapted from \cite{JS93}.

\subsection{Construction of the canonical path}

Below is the main lemma of this subsection. 
\begin{lemma} \label{lem:gm-cp}
For any grand model on a graph $G=(V,E)$ with parameters $\*p = (p_e)_{e \in E}$ and $\*\eta = (\eta_v)_{v \in V}$, if $0 < \eta_v < 1$ for all $v \in V$,
then there exists a set of canonical paths $\Gamma=\{\gamma_{XY}:X,Y\in\Omega\}$ for the Glauber dynamics $P_{\-{gm}}$ such that
\begin{enumerate}
  \item $w_{\-{gm}}(X,Y)=\pi_{\-{gm}}(X)\pi_{\-{gm}}(Y)$;
  \item $|\gamma_{XY}|\leq m$;
  \item for any transition $(Z,Z')$ with $\left|\left\{e:Z(e)\neq Z'(e)\right\}\right| = 1$, where the only edge $e$ of discrepancy is assigned $1$ in either $Z$ or $Z'$, it holds that
    \begin{equation} \label{equ:capacity-wind}
      \sum_{\gamma\in\Gamma:(Z,Z')\in\gamma}w_{\-{gm}}(\gamma)\leq \eta_{\min}^{-4}\min\set{\pi_{\-{gm}}(Z),\pi_{\-{gm}}(Z')}
    \end{equation}
    where $\eta_{\min}:=\min_{v} \eta_v$;
  \item for any transition $(Z,Z')$ with $\left|\left\{e:Z(e)\neq Z'(e)\right\}\right| = 1$, where the only edge $e$ of discrepancy is assigned $1$ in neither $Z$ nor $Z'$, it holds that
    \begin{equation} \label{equ:capacity-other}
      \sum_{\gamma\in\Gamma:(Z,Z')\in\gamma}w_{\-{gm}}(\gamma)\leq \min\set{\pi_{\-{gm}}(Z),\pi_{\-{gm}}(Z')}.
    \end{equation}
\end{enumerate}
\end{lemma}

\begin{proof}
We begin the proof with the construction of the paths. 
Suppose all vertices and edges are indexed by distinct integers, and there is a fixed ordering $\prec$ for all paths and cycles of the graph $G$. 
For any pair of assignments $X,Y$ in the grand model, the canonical path $\gamma_{XY}$ contains two stages, moving from $X$ to $W$ and $W$ to $Y$ respectively. 

\textbf{Stage 1.} (\emph{$1$-edge mending.})
Midst this stage we mend the edges assigned $1$ in either $X$ or $Y$ but not the other. 
Denote the set of such edges $D:=X^{-1}(1)\oplus Y^{-1}(1)$. 
The resulting configuration $W$ has the property that
(1) for any edge $e\in D$, it holds that $W(e)=Y(e)$, and
(2) for any other edge $e\notin D$, it holds that $W(e)=X(e)$.  

Let $2k$ be the number of the odd-degree vertices in $D$. 
Then, $D$ can be decomposed into an edge-disjoint union of exactly $k$ paths $P_1,\cdots,P_k$ and cycles $C_1,\cdots,C_{k'}$. 
We pick the unique one such that $P_1,\cdots,P_k,C_1,\cdots,C_{k'}$ is the first one in the lexicographic order induced by $\prec$. 

To move from $X$ to $W$, we process each of the paths and cycles one by one. For each of them, we first choose the vertex and edge to start with. 
When winding (handling) a path, the starting vertex is one of the two open vertices of the path that has a smaller index; 
when winding a cycle, the starting vertex is the one with the smallest index, and the next vertex (which together with the starting one defines a starting edge) is one of the two neighbours of the starting vertex of the cycle that has a smaller index than the other one. 
After deciding the starting vertex and edge, we just move along the path/cycle. For each of the edge, we set the assignment to it as that in $Y$. Obviously this gives $W$ satisfying the properties aforementioned because every edge in $D$ is mended while the rest are left untouched. 

\textbf{Stage 2.} (\emph{$0,2$-edge mending.})
None of the conflicting edges between $W$ and $Y$ can be assigned $1$ in either of them.
In this stage, we simply change all remaining disagreeing edges from the value in $W$ to the value in $Y$ one by one according to the order of their indices.

We then show that the set of canonical paths $\Gamma$ constructed above fulfills \Cref{lem:gm-cp}. 
Assign weight $w_{\-{gm}}(\gamma)=\pi_{\-{gm}}(X)\pi_{\-{gm}}(Y)$ to the path $\gamma_{XY}$. 
The length (number of transitions) of each path $\gamma_{XY}$ is at most $m$, because each edge is mended at most once. 

We first prove (\ref{equ:capacity-wind}). 
Let $(Z,Z')$ be a transition with $\left|\left\{e:Z(e)\neq Z'(e)\right\}\right| = 1$, where the only edge $e$ of discrepancy is assigned $1$ in either $Z$ or $Z'$. 
Note that $(Z,Z')$ will only be used by any path in its first stage described above. 
Define a mapping $\varphi_{Z,Z'}:\Omega\times\Omega\to\Omega$ over any pair of configurations $X,Y$ whose corresponding path $\gamma_{XY}$ uses the transition $(Z,Z')$ by
\begin{equation}
  \varphi_{Z,Z'}(X,Y)=U\text{\quad where \quad} U(e)=X(e)+Y(e)-Z(e), \forall e\in E(G). 
\end{equation}
We claim that $\varphi_{Z,Z'}$ is an injection. 
Given $U$ and $Z$, we can recover $X(e)+Y(e)$ for any edge $e$. 
First we can find $D$, the set of conflicting $1$-edge in Stage 1, as it is simply $\{e:X(e)+Y(e)=1\text{ or }3\}$.
This gives rise to the unique edge-disjoint decomposition $P_1,\cdots,P_k,C_1,\cdots,C_{k'}$. 
By looking at $Z$ and $Z'$, we know the edge that is currently being wound, and, together with the edge-disjoint decomposition, the stage of the whole winding process. 
Therefore, we can continue the winding from $Z'$ with these information, and when finished, $W$ (defined in the process Stage 1) is obtained. 
To further recover $Y$, note that $e$ gets mended in Stage 2 if any only if $U(e)+Z(e)=2$ and $Z(e)\neq 1$. 
This follows from the fact that $Z(e)$ (in the first stage) is in line with $X(e)$ so long as $Z(e)\neq 1$. 
Therefore, we can decide all such edges and mend the assignment to obtain $Y$. 
To get $X$, we just reverse the operations backwards from $Z$. 

Given this injection, we compute $\sum_{\gamma\in\Gamma:(Z,Z')\in\gamma}w_{\-{gm}}(\gamma)$. The goal here is to bound the following ratio
\begin{equation} \label{equ:discrepancy-ratio}
\frac{\pi_{\-{gm}}(X)\pi_{\-{gm}}(Y)}{\pi_{\-{gm}}(U)\pi_{\-{gm}}(Z)},\text{\quad or equivalently,\quad} \frac{\wt_{\-{gm}}(X)\wt_{\-{gm}}(Y)}{\wt_{\-{gm}}(U)\wt_{\-{gm}}(Z)}.
\end{equation}
Recall that this ratio may contain two kinds of factors, emerging from both the vertices and edges.
For the factor from edges, the construction of $U$ ensures that 
(1) if $X(e)+Y(e)=U(e)+Z(e)\in\{0,1,3,4\}$, or $X(e)+Y(e)=2$ and $X(e)\neq 1$, then it must holds that either $X(e)=U(e)$ and $Y(e)=Z(e)$, or $X(e)=Z(e)$ and $Y(e)=U(e)$; 
(2) if $X(e)=Y(e)=1$, then $e$ never gets mended throughout the canonical path, and hence $Z(e)=U(e)=1$. 
In either case, all the terms rising from the edges in the numerator and denominator cancel. 
The terms rising from the vertices come from those in $\+O(X),\+O(Y),\+O(U),\+O(Z)$. 
It is not hard to see that the ones that do not get cancelled only arise from the current cycle or path that is being processed, and more specifically, the vertex incident to the two edges wound before and after $Z$, which contributes twice, and the starting vertex of the current cycle, which contributes twice as well. Therefore, 
\begin{equation} \label{equ:discrepancy_weight}
  \frac{\pi_{\-{gm}}(X)\pi_{\-{gm}}(Y)}{\pi_{\-{gm}}(U)\pi_{\-{gm}}(Z)}\leq \eta_{\min}^{-4},
\end{equation}
as $0<\eta_v<1$ for all $v$. 

Then, (\ref{equ:capacity-wind}) follows from (\ref{equ:discrepancy_weight}) that
\begin{align*}
  \sum_{\gamma\in\Gamma:(Z,Z')\in\gamma}w_{\-{gm}}(\gamma)&=\sum_{X,Y:(Z,Z')\in\gamma_{XY}}\pi_{\-{gm}}(X)\pi_{\-{gm}}(Y) \tag{By definition}\\
  &\leq \eta_{\min}^{-4}\sum_{X,Y:(Z,Z')\in\gamma_{XY}}\pi_{\-{gm}}(Z)\pi_{\-{gm}}(\varphi_{Z,Z'}(X,Y)) \tag{By (\ref{equ:discrepancy_weight})}\\
  &\leq \eta_{\min}^{-4}\pi_{\-{gm}}(Z). \tag{$\varphi_{Z,Z'}$ is injective}
\end{align*}

We construct the other mapping $\varphi'_{Z,Z'}(X,Y)$ by taking $\varphi_{Z,Z'}(X,Y)(e)=X(e)+Y(e)-Z'(e)$. 
The same argument shows that $\sum_{\gamma\in\Gamma:(Z,Z')\in\gamma}w_{\-{gm}}(\gamma)\leq\eta_{\min}^{-4}\pi_{\-{gm}}(Z')$. 

To prove (\ref{equ:capacity-other}), we look at the transition step $(Z,Z')$ with $\left|\left\{e:Z(e)\neq Z'(e)\right\}\right| = 1$ where the only edge $e$ of discrepancy is assigned $1$ in neither $Z$ nor $Z'$. 
We use the same mapping $\varphi_{Z,Z'}(X,Y)$ as above, and claim it is still injective in this case. 
Recall that $e$ gets mended in Stage 2 if and only if $U(e)+Z(e)=2$ and $Z(e)\neq 1$, and we can again determine the edges to be mended in Stage 2. 
Moreover, by looking at the difference of $Z$ and $Z'$, we know the index of the edge being mended, and therefore we can continue this process manually according to the instruction of Stage 2, knowing which edges to mend, to obtain $Y$. 
To get $X$, we first go backwards from $Z$ to the beginning of Stage 2 to obtain $W$, and revert the whole Stage 1 using the same argument aforementioned. 

To show (\ref{equ:capacity-other}), note that the edge factors in the ratio of (\ref{equ:discrepancy-ratio}) again cancel, and because no edge with assignment $1$ is involved, the vertex factors cancel as well. Hence the ratio is exactly $1$, and (\ref{equ:capacity-other}) follows according to the same calculation. 
\end{proof}

\subsection{Total congestion and rapid mixing}

We next bound the total congestion for $\Gamma_{\-{gm}}$. For each transition $(Z,Z')$ such that $\left|\left\{e:Z(e)\neq Z'(e)\right\}\right| = 1$, where the only edge of discrepancy is assigned $1$ in either $Z$ or $Z'$, we have
\begin{equation*}
  \frac{L}{\pi_{\-{gm}}(Z)P_{\-{gm}}(Z,Z')}\sum_{\substack{\gamma\in\Gamma:\\(Z,Z')\in\gamma}}w_{\-{gm}}(\gamma)\leq\frac{m\eta_{\min}^{-4}\min\{\pi_{\-{gm}}(Z),\pi_{\-{gm}}(Z')\}}{\pi_{\-{gm}}(Z)P_{\-{gm}}(Z,Z')}=:(\spadesuit)
\end{equation*}
by \Cref{lem:gm-cp}. 
To continue the calculation, there are several cases $(Z(e),Z'(e))=(0,1),(2,1),(1,0),(1,2)$. 
Below we only prove the case $(Z(e),Z'(e))=(0,1)$. 
The rest cases can be argued the same way and yield the same bound. 
Let $e=(u,v)$. There are some more subcases, depending on if $u$ or $v$ is in $\+O(Z)$. 
\begin{itemize}
  \item $u,v\notin\+O(Z)$. In this case, setting the edge to $1$ leads to extra factors from both vertices in $Z'$. Cancelling all the edges and vertices not involved, we obtain
  \begin{align*}
    (\spadesuit)=\frac{m^2\eta_{\min}^{-4}\min\{1-2p_e,p_e\eta_u\eta_v\}}{(1-2p_e)\frac{p_e\eta_u\eta_v}{(1-2p_e)+(p_e\eta_u\eta_v)+p_e}}\leq\frac{m^2\eta_{\min}^{-4}\min\{1-2p_e,p_e\eta_u\eta_v\}}{(1-2p_e)(p_e\eta_u\eta_v)}\leq\frac{m^2\eta_{\min}^{-4}}{1-2p_e}
  \end{align*}
  where we use the fact that $\eta_u,\eta_v\leq 1$. 
  \item $u,v\in\+O(Z)$. In this case, setting the edge to $1$ removes the factors from both vertices in $Z'$. Cancelling all the edges and vertices not involved, we obtain
  \begin{align*}
    (\spadesuit)=\frac{m^2\eta_{\min}^{-4}\min\{(1-2p_e)\eta_u\eta_v,p_e\}}{(1-2p_e)\eta_u\eta_v\frac{p_e}{(1-2p_e)\eta_u\eta_v+p_e+p_e\eta_u\eta_v}}\leq\frac{m^2\eta_{\min}^{-4}\min\{(1-2p_e)\eta_u\eta_v,p_e\}}{(1-2p_e)\eta_u\eta_v p_e}\leq\frac{m^2\eta_{\min}^{-4}}{p_e}
  \end{align*}
  where we use the fact that $\eta_u,\eta_v\leq 1$ again. 
  \item WLOG suppose $u\in\+O(Z), v\notin\+O(Z)$. In this case, setting the edge to $1$ causes the vertex factor to switch. Cancelling all the edges and vertices not involved, we obtain
  \begin{align*}
    (\spadesuit)=\frac{m^2\eta_{\min}^{-4}\min\{(1-2p_e)\eta_u,p_e\eta_v\}}{(1-2p_e)\eta_u\frac{p_e\eta_v}{(1-2p_e)\eta_u+(p_e\eta_v)+(p_e\eta_u)}}.
  \end{align*}
  If $\eta_u<\eta_v$, then above becomes
  \begin{align*}
	\frac{m^2\eta_{\min}^{-4}\min\{(1-2p_e)\frac{\eta_u}{\eta_v},p_e\}}{(1-2p_e)\frac{\eta_u}{\eta_v}\frac{p_e}{(1-2p_e)\frac{\eta_u}{\eta_v}+p_e+p_e\frac{\eta_u}{\eta_v}}}\leq\frac{m^2\eta_{\min}^{-4}}{p_e}.
  \end{align*}
  Otherwise, it can be written as
  \begin{align*}
	\frac{m^2\eta_{\min}^{-4}\min\{(1-2p_e),p_e\frac{\eta_v}{\eta_u}\}}{(1-2p_e)\frac{p_e\frac{\eta_v}{\eta_u}}{(1-2p_e)+p_e\frac{\eta_v}{\eta_u}+p_e}}\leq\frac{m^2\eta_{\min}^{-4}}{1-2p_e}.
  \end{align*}
\end{itemize}

For each transition $(Z,Z')$ such that $\left|\left\{e:Z(e)\neq Z'(e)\right\}\right| = 1$, where the only edge of discrepancy is assigned $1$ in none of $Z$ or $Z'$, the calculation is similar as above but simpler. WLOG assume $Z(e)=0$ and $Z'(e)=2$. 
\begin{align*}
&\frac{L}{\pi_{\-{gm}}(Z)P_{\-{gm}}(Z,Z')}\sum_{\substack{\gamma\in\Gamma:\\(Z,Z')\in\gamma}}w_{\-{gm}}(\gamma)\leq\frac{m\min\{\pi_{\-{gm}}(Z),\pi_{\-{gm}}(Z')\}}{\pi_{\-{gm}}(Z)P_{\-{gm}}(Z,Z')}\tag{\Cref{lem:gm-cp}}\\
\leq&\frac{m^2\min\{1-2p_e,p_e\}}{(1-2p_e)\frac{p_e}{1-2p_e+p_e+p_e\frac{1}{\eta_u\eta_v}}}\leq\min\left\{\frac{1}{p_e},\frac{1}{1-2p_e}\right\}m^2\eta_{\min}^{-2}. \tag{Worst case of $\eta$ terms}
\end{align*}
There is no canonical path using the self loop $(Z,Z)$, so the congestion is zero. In all, the congestion is bounded by $m^2\eta_{\min}^{-4}\max\left\{\frac{1}{p_{\min}},\frac{1}{1-2p_{\max}}\right\}$, from which (\ref{equ:gm-spectral-gap}) follows.

\section{Entropy decay of Glauber dynamics on the grand model} \label{section-ent}
In \Cref{section-var}, we analysed the variance decay of Glauber dynamics on the grand model. We now continue to analyse its relative entropy decay.
Let $G=(V,E)$ be a graph, and $\*p = (p_e)_{e \in E}$ and $\*\eta = (\eta_v)_{v \in V}$ be the parameters, where $0 < p_e < 1/2$ for any $e\in E$ and $\eta_v>0$ for any $v\in V$.
Let $\pi_{\-{gm}}$ denote the distribution specified by the grand model with parameters $\*p$ and $\*\eta$.
Let $\Omega(\pi_{\-{gm}})$ denote the support of $\pi_{\-{gm}}$.
We use $P_{\-{GlauberGM}}$ to denote Glauber dynamics on $\pi_{\-{gm}}$. 

\begin{lemma}\label{lemma-ent-decay}
  If $0 < \eta_v < 1$ for all $v\in V$,
  then for any distribution $\nu$ with support $\Omega(\nu) \subseteq \Omega(\pi_{\-{gm}})$, Glauber dynamics $P_{\-{GlauberGM}}$ satisfies
  \begin{align*}
    \KL{\nu P_{\-{GlauberGM}}^{\downarrow} }{\pi_{\-{gm}} P_{\-{GlauberGM}}^{\downarrow}} \leq \tp{1 - \frac{1}{Cn}}\KL{\nu}{\pi_{\-{gm}}},
  \end{align*}
  where $C = C(\Delta, \eta_{\min},p_{\min},p_{\max})$, $\eta_{\min} = \min_{v \in V}\eta_v$, $p_{\min} = \min_{e \in E}p_e$, $p_{\max} = \max_{e \in E}p_e$, $\Delta$ is the maximum degree of $G$ and $n = |V|$.
\end{lemma}

\begin{remark}
For interested readers, the constant $C$ in the lemma above can be taken as
\[
  C=\Delta \left(\frac{2\Delta}{\eta_{\min}^2\min\left\{1-2p_{\max},p_{\min}\right\}}\right)^{2+\frac{16\Delta^2}{\eta_{\min}^4\min\{1-2p_{\max},p_{\min}\}}}.
\]
\end{remark}

\Cref{lemma-ent-decay} is proved by \Cref{theorem-CLV}. To apply \Cref{theorem-CLV}, we need to verify (1) $\pi_{\-{gm}}$ is a Gibbs distribution with maximum degree $D = 2\Delta-1$; (2) $\pi_{\-{gm}}$ is $\ell_{\infty}$-spectrally independent; (3) $\pi_{\-{gm}}$ is marginally bounded. The rest of this section is dedicated to the proof of \Cref{lemma-ent-decay}.


\begin{lemma}\label{lemma-II}
 $\pi_{\-{gm}}$ is $\ell_\infty$-spectrally independent with parameter $\zeta = O(\Delta^2/\eta_{\min}^2)$.
\end{lemma}
We need the following result in \cite{CLV21a} to prove \Cref{lemma-II}.
We view the subgraph world as a distribution over $\{0,1\}^E$, where each $Y \in \{0,1\}^E$ corresponds to $S = \{e \in E \mid Y_e = 1\}$.
\begin{lemma}[\text{\cite{CLV21a}}]\label{lemma-CLV-inf}
Let $G=(V,E)$ be a graph with the maximum degree $\Delta \geq 3$. 
Let $\*p=(p_e)_{e \in E}$ and $\*\eta = (\eta_v)_{v \in V}$, where $0 \leq p_e < 1/2$ and $0 < \eta_v \leq 1$. 
The distribution $\pi_{\-{sg}}$ specified by the subgraph-world model with parameters $(\*p,\*\eta)$ is $\ell_\infty$-spectrally independent with parameter $\zeta = O(\Delta^2/\eta_{\min}^2)$.	
\end{lemma}

\begin{remark}
In \cite{CLV21a}, the authors only formalise the proof for the uniform case (i.e., all $\eta_v$'s take the same value) while stating that the argument works for non-uniform case without a proof. This in fact holds true by going through the proof and taking the worst region of stability. The final spectral independence parameter is
\[
  \zeta=8\left(\frac{\left(\frac{1+\eta_{\min}}{1-\eta_{\min}}\right)^{1/\Delta}+1}{\left(\frac{1+\eta_{\min}}{1-\eta_{\min}}\right)^{1/\Delta}-1}\right)^2\sim 8\Delta^2/\eta_{\min}^2.
\]
Note that the $\lambda$ in their paper is actually $p/(1-p)$ in our formulation of the subgraph-world model (under the uniform edge parameter setting). Also note that we are only considering the region $0<p<1/2$, so the $\lambda$ in their paper is bounded from above by $1$. 
\end{remark}

\begin{proof}[Proof of \Cref{lemma-II}]
Fix a pinning $\sigma \in \{0,1,2\}^\Lambda$ for some $\Lambda \subseteq E$. 
According to the definition of the grand model, to draw $X \sim \pi_{\-{gm}}$, we first sample $Y \sim \pi_{\-{sg}}$ (where $Y \in \{0,1\}^E$ as we view $\pi_{\-{sg}}$ as a distribution over $\{0,1\}^E$), then flip independent coins for each $e \in E$ with $Y_e = 0$. Define the pinning $\tau \in \{0,1\}^\Lambda$ by $\tau_e = 1$ if $\sigma_e = 1$ and $\tau_e = 0$ if $\sigma_e = 0$ or $\sigma_e = 2$. Consider the influence 
\begin{align*}
	\Psi^\sigma_{\pi_{\-{gm}}}(e,f) = \max\set{\DTV{\pi_{\-{gm},f}^{\sigma \land e \gets 0}}{\pi_{\-{gm},f}^{\sigma \land e \gets 1}}, \DTV{\pi_{\-{gm},f}^{\sigma \land e \gets 0}}{\pi_{\-{gm},f}^{\sigma \land e \gets 2}},\DTV{\pi_{\-{gm},f}^{\sigma \land e \gets 1}}{\pi_{\-{gm},f}^{\sigma \land e \gets 2}} },
\end{align*}
where $e,f \in E \setminus \Lambda$ and $e \neq f$.
Since each coin flipping is independent with the random sample from $\pi_{\-{gm}}$, we can couple two distributions $\pi_{\-{gm},f}^{\sigma \land e \gets 0}$ and $\pi_{\-{gm},f}^{\sigma \land e \gets 1}$ as follows:
\begin{itemize}
	\item sample $Y_f,Y'_f$ from the optimal coupling between $\pi_{\-{sg},f}^{\tau\land e \gets 0}$ and $\pi_{\-{sg},f}^{\tau\land e \gets 1}$;
	\item flip a coin $\+C$ independently with probability of HEADS being $\frac{p_f}{1 - p_f}$;
	\item if $Y_f = 1$, let $X_f = 1$; otherwise, if the outcome of $\+C$ is HEADS, let $X_f = 2$, if the outcome of $\+C$ is not HEADS, let $X_f = 0$;
	\item if $Y_f' = 1$, let $X_f' = 1$; otherwise, if the outcome of $\+C$ is HEADS, let $X_f' = 2$, if the outcome of $\+C$ is not HEADS, let $X_f' = 0$;
\end{itemize}
It is straightforward to verify that $(X_f,X'_f)$ is sampled from a coupling between $\pi_{\-{gm},f}^{\sigma \land e \gets 0}$ and $\pi_{\-{gm},f}^{\sigma \land e \gets 1}$. 
By the coupling inequality \eqref{eqn:coupling-ineq} and as $Y_f$ and $Y'_f$ are optimally coupled, we have
\begin{align*}
\DTV{\pi_{\-{gm},f}^{\sigma \land e \gets 0}}{\pi_{\-{gm},f}^{\sigma \land e \gets 1}} \leq \Pr\left[X_f \neq X'_f\right] = 	\Pr\left[Y_f \neq Y'_f\right] = \DTV{\pi_{\-{sg},f}^{\tau \land e \gets 0}}{\pi_{\-{sg},f}^{\tau\land e \gets 1}}.
\end{align*}
Similarly, we have
\begin{align*}
\DTV{\pi_{\-{gm},f}^{\sigma \land e \gets 0}}{\pi_{\-{gm},f}^{\sigma \land e \gets 2}}	= 0 \quad\text{and}\quad \DTV{\pi_{\-{gm},f}^{\sigma \land e \gets 1}}{\pi_{\-{gm},f}^{\sigma \land e \gets 2}} \leq \DTV{\pi_{\-{sg},f}^{\tau \land e \gets 0}}{\pi_{\-{sg},f}^{\tau\land e \gets 1}}.
\end{align*}
Hence, by \Cref{lemma-CLV-inf},
\begin{align*}
 \norm{ \Psi^\sigma_{\-{gm}} }_\infty \leq 	 \norm{ \Psi^\tau_{\-{sg}} }_\infty \leq  \zeta. &\qedhere
\end{align*}
\end{proof}

\begin{lemma}\label{lemma-III}
 $\pi_{\-{gm}}$ is $b$-marginally bounded, where $b = \eta_{\min}^2\min\left\{1-2p_{\max},p_{\min}\right\}$.
\end{lemma}
\begin{proof}
Consider the marginal distribution of an edge $e=(u,v)$. 
Let $e_1,\dots, e_k$ be the edges adjacent to either $u$ or $v$ (but not both).
Suppose we have an arbitrary pinning $X$ on $\Lambda\subset E$ and $e\not\in \Lambda$.
Let $Y$ be an arbitrary pinning on $\Lambda\cup\{e_1,\dots,e_k\}$ that is consistent with $X$.
The true marginal of $e$ under $X$ is a linear combination of marginals conditioned on all possibilities of $Y$ (namely, we first sample $Y$ and then sample $e$ conditioned on $Y$).
Thus, to establish a lower bound, it suffices to establish a lower bound under any $Y$.
Given $Y$, the marginal of $e$ depends only on $p_e$ and whether $u$ or $v$ is in $\+O(Y)$. 
These cases are verified as follows. 
\begin{itemize}
  \item $u,v\notin\+O(Y^{e\to 0})$, where $Y^{e\to 0}$ is the configuration of $Y$ with $e$ further pinned to $0$. In this case the marginal is at least
  \[
	\frac{\min\{1-2p_e,p_e\eta_u\eta_v,p_e\}}{1-2p_e+p_e\eta_u\eta_v+p_e}\geq\min\{1-2p_e,p_e\eta_u\eta_v\}.
  \]
  Note that the denominator is no greater than $1$ because $\eta_u,\eta_v\leq 1$. 
  \item $u,v\in\+O(Y^{e\to 0})$. Then the marginal is at least
  \[
	\frac{\min\{1-2p_e\eta_u\eta_v,p_e,p_e\eta_u\eta_v\}}{(1-2p_e)\eta_u\eta_v+p_e+p_e\eta_u\eta_v}\geq\min\{(1-2p_e)\eta_u\eta_v,p_e\}.
  \]
  \item In the remaining cases, assume w.l.o.g.~$u\in\+O(Y^{e\to 0})$ while $v\notin\+O(Y^{e\to 0})$. Then the marginal is at least
  \[
	\frac{\min\{(1-2p_e)\eta_u,p_e\eta_v,p_e\eta_u\}}{(1-2p_e)\eta_u+p_e\eta_v+p_e\eta_u}=
	\begin{cases}
		\frac{\min\{(1-2p_e)\frac{\eta_u}{\eta_v},p_e\frac{\eta_u}{\eta_v}\}}{(1-2p_e)\frac{\eta_u}{\eta_v}+p_e+p_e\frac{\eta_u}{\eta_v}}\geq\min\{(1-2p_e)\frac{\eta_u}{\eta_v},p_e\frac{\eta_u}{\eta_v}\}, &\text{ if }\eta_u<\eta_v;\\
		\frac{\min\{(1-2p_e),p_e\frac{\eta_v}{\eta_u}\}}{(1-2p_e)+p_e\frac{\eta_v}{\eta_u}+p_e}\geq\min\{(1-2p_e),p_e\frac{\eta_v}{\eta_u}\}, &\text{otherwise}.
	\end{cases}
  \]
\end{itemize}
In all cases, the value
\[
  b=\eta_{\min}^2\min\left\{1-2p_{\max},p_{\min}\right\}
\]
suffices as a marginal lower bound. 
\end{proof}

\begin{proof}[Proof of \Cref{lemma-ent-decay}]
Combine \Cref{theorem-CLV}, \Cref{lemma-I}, \Cref{lemma-II}, \Cref{lemma-III} and $m\leq n \Delta$.
\end{proof}

\section{Rapid mixing of Glauber dynamics on the random cluster model}\label{sec:grand-RC}
Let $\*p = (p_e)_{e \in E}$ and $\*\eta = (\eta_v)_{v \in V}$, where $0 < p_e < 1/2$ and $0 < \eta_v < 1$.  
Let $\pi_{\-{wrc}}$ denote the distribution specified by the random cluster model with parameters $2\*p$ and $\*\lambda$, where $\lambda_v = \frac{1-\eta_v}{1+\eta_v}$.
Let $\Omega(\pi_{\-{wrc}})$ denote the support of $\pi_{\-{wrc}}$.
We use $P_{\-{GlauberRC}}$ to denote Glauber dynamics on $\pi_{\-{wrc}}$. 

\begin{lemma}\label{lemma-decay-RC}
Let $\pi_{\-{wrc}}$ be the distribution specified by weighted random cluster model with parameters $(2\*p, \*\lambda)$.
The Glauber dynamics $P_{\-{GlauberRC}}$  satisfies that for any distribution $\nu$ with support $\Omega(\nu) \subseteq \Omega(\pi_{\-{wrc}})$,
\begin{itemize}
	\item $\chisq{\nu P_{\-{GlauberRC}}^{\downarrow} }{\pi_{\-{wrc}} P_{\-{GlauberRC}}^{\downarrow} } \leq \tp{1 - \frac{\alpha}{m^2}}\chisq{\nu}{\pi_{\-{wrc}}}$,
	\item $\KL{\nu P_{\-{GlauberRC}}^{\downarrow} }{\pi_{\-{wrc}} P_{\-{GlauberRC}}^{\downarrow}} \leq \tp{1 - \frac{1}{Cn}}\KL{\nu}{\pi_{\-{wrc}}}$,
\end{itemize}
where 
\begin{align*}
	\alpha &= \tp{\frac{1-\lambda_{\max}}{1+\lambda_{\max}}}^4\min\left\{p_{\min},1-2p_{\max}\right\},\\
	 C&=\Delta \left(\frac{8\Delta}{(1-\lambda_{\max})^2\min\left\{1-2p_{\max},p_{\min}\right\}}\right)^{2+\frac{256\Delta^2}{(1-\lambda_{\max})^4\min\{1-2p_{\max},p_{\min}\}}},
\end{align*}
$\lambda_{\max}=\max_{v \in V}\lambda_v$ , $\lambda_{\min} = \min_{v \in V}\lambda_v$, $p_{\max} = \max_{e \in E}p_e$, $p_{\min} = \min_{e \in E}p_e$, $\Delta$ is the maximum degree of $G$, $n = |V|$ and $m = |E|$.
\end{lemma}

\Cref{lemma-decay-RC} projects the decay results (\Cref{lemma-var-decay} and \Cref{lemma-ent-decay}) from the grand model to the random cluster model.
\Cref{lemma-decay-RC} is proved by a comparison lemma in \Cref{section-compare-down-walk} that works for general projections and $f$-divergences.

Although the entropy decay rate is seemingly better than the $\chi^2$-divergence decay rate,
we still need the latter in the case where $\lambda_{\max}$ is $1$ or close to $1$.
Note that as $\lambda_{\max}$ goes to $1$, $\alpha\rightarrow 0$ and $C\rightarrow \infty$, making both statements meaningless.
However, we may perturb $\lambda$ by a factor of $1/n$ in such cases.
This incurs a cost of some extra factors of $n$ in case of $\alpha$ and an exponentially large factor in case of $C$.
Thus, the $\chi^2$-divergence decay rate in \Cref{lemma-decay-RC} is still useful in this case.
Specifically, in \Cref{section-perturbed} we showed the following.
\begin{lemma}\label{lemma-decay-RC-perturbed}
Let $\pi_{\-{wrc}}$ be the distribution specified by the weighted random cluster model with parameters $(2\*p, \*\lambda)$.
The Glauber dynamics $P_{\-{GlauberRC}}$  satisfies that for any distribution $\nu$ with support $\Omega(\nu) \subseteq \Omega(\pi_{\-{wrc}})$,		
\begin{align*}
\chisq{\nu P_{\-{GlauberRC}}^{\downarrow} }{\pi_{\-{wrc}} P_{\-{GlauberRC}}^{\downarrow} } \leq \tp{1 - \frac{\min\left\{p_{\min},1-2p_{\max}\right\}}{10^4 n^4 m^2}}\chisq{\nu}{\pi_{\-{wrc}}}.
\end{align*}
\end{lemma}

We remark that both \Cref{lemma-decay-RC} and \Cref{lemma-decay-RC-perturbed} consider the random cluster model specified by parameters $(2\*p, \*\lambda)$. 
Combining  \Cref{lemma-decay-RC} and \Cref{lemma-decay-RC-perturbed}, we have the following mixing result for the Glauber dynamics on random cluster model.

\begin{theorem}\label{thm:RC}
Let $G=(V,E)$ be a $n$-vertex and $m$-edge graph with maximum degree $\Delta$.
Let $\*p = (p_e)_{e \in E}$ and $\*\lambda = (\lambda_v)_{v \in V}$, where $0 < p_e < 1$ and $0 < \lambda_v \leq 1$.
Let $\pi_{\-{wrc}}$ be the distribution specified by the random cluster model with parameters $(\*p, \*\lambda)$. 
The mixing of Glauber dynamics $P_{\-{GlauberRC}}$ on $\pi_{\-{wrc}}$ satisfies 
\begin{align*}
	T_{\-{mix}}(P_{\-{GlauberRC}},\epsilon) \leq C_1(p_{\min},p_{\max})  \cdot \min\set{n^4, \tp{\frac{1}{1-\lambda_{\max}}}^4} \cdot m^2 \cdot \tp{\log \frac{1}{\epsilon} + m },
\end{align*}
where $C_1(p_{\min},p_{\max}) = O\left(\frac{1}{\min\{p_{\min }, 1 - p_{\max }\}} \log \frac{1}{\min\{p_{\min }, 1 - p_{\max }\}}\right)$.

Furthermore, if there exists $\delta > 0$ such that $\lambda_v \leq 1-\delta$ for all $v \in V$, then the mixing time satisfies 
\begin{align*}
T_{\-{mix}}(P_{\-{GlauberRC}},\epsilon) \leq C_2(\Delta, \delta,p_{\min }, p_{\max } ) \cdot n \tp{\log n + \log \frac{1}{\epsilon} },	
\end{align*}
where $C_2(\Delta, \delta,p_{\min }, p_{\max } ) = \left(\frac{\Delta}{\delta^2 \min\{p_{\min},1-p_{\max}\}}\right)^{O\left(\frac{\Delta^2}{\delta^4 \min\{p_{\min},1-p_{\max}\}}\right)}$.
\end{theorem}
\begin{proof}
  Let $\pi_{\-{wrc}, \min} = \min_{S \subseteq E}\pi_{\-{wrc}}(S)$ denote the minimum probability in $\pi_{\-{wrc}}$. It is straightforward to verify that $\pi_{\-{wrc},\min} \geq \{p_{\min},1-p_{\max}\}^m/2^{m+n}$. By the data processing inequality,
\begin{align*}
	\DF{\nu P_{\-{GlauberRC}} }{ \pi_{\-{wrc}} } = \DF{\nu P_{\-{GlauberRC}} }{ \pi_{\-{wrc}} P_{\-{GlauberRC}}}  \leq \DF{\nu P_{\-{GlauberRC}}^{\downarrow} }{ \pi_{\-{wrc}} P_{\-{GlauberRC}}^{\downarrow}}.
\end{align*}
By \Cref{lemma-decay-RC} and \Cref{lemma-decay-RC-perturbed}, we know that after each transition step of Glauber dynamics, the $\chi^2$-divergence and KL-divergence between the current distribution of the stationary distribution decays by factors specified earlier. 
The $\chi^2$-divergence between the initial distribution and the stationary distribution is at most $\frac{1}{\pi_{\-{wrc}, \min}}$, and the KL-divergence is at most $\log \frac{1}{\pi_{\-{wrc}, \min}}$. 
By \Cref{lemma-decay-RC}, \Cref{lemma-decay-RC-perturbed}, and ~\eqref{eqn:chisq}, 
\begin{align*}
T_{\-{mix}}(P_{\-{GlauberRC}},\epsilon) &\leq \frac{10^4}{\min\left\{p_{\min}/2,1-p_{\max}\right\}} \cdot \min\set{n^4, \tp{\frac{1}{1-\lambda_{\max}}}^4} \cdot m^2 \tp{\log \frac{1}{\epsilon^2\pi_{\-{wrc}, \min}}  }\\
&=C_1(p_{\min},p_{\max})  \cdot \min\set{n^4, \tp{\frac{1}{1-\lambda_{\max}}}^4} \cdot m^2 \cdot \tp{\log \frac{1}{\epsilon} + m }.
\end{align*}
By \Cref{lemma-decay-RC}, \eqref{eqn:Pinsker} and $m \leq \Delta n$, if for all $\lambda_v \leq 1 - \delta$, then we have $1-\lambda_{\max} \geq \delta$ and
\begin{align*}
T_{\-{mix}}(P_{\-{GlauberRC}},\epsilon) &\leq  	\Delta \left(\frac{8\Delta}{\delta^2\min\left\{1-p_{\max},p_{\min}/2\right\}}\right)^{2+\frac{256\Delta^2}{\delta^4\min\{1-p_{\max},p_{\min}/2\}}} \cdot n \tp{\log \log\frac{1}{\pi_{\-{wrc}, \min}} + \log \frac{1}{2\epsilon^2}  }\\
&=C_2(\Delta, \delta,p_{\min }, p_{\max } ) \cdot n \tp{\log n + \log \frac{1}{\epsilon} }. \qedhere
\end{align*}
\end{proof}

\subsection{Comparing the decay rates of down walks}\label{section-compare-down-walk}
Here we consider a general projection from a larger state space to a smaller one.
Let $Q$ and $R$ be two finite sets, 
and let $\Omega \subseteq Q^V$ be the state space.
Consider a mapping $g:Q \to R$. 
(Note that here we can restrict $R$ to the range of $g$ without changing the rest of the argument. In other words, after the mapping the effective domain is never larger than $Q$,
although we do not need to require $\abs{Q}\ge\abs{R}$ a priori.)
Given any $\sigma \in \Omega$, we map $\sigma$ to $\tau = (\tau_v)_{v \in V}$, where $\tau_v = g(\sigma_v)$.
We abuse the notation and denote $\tau = g(\sigma)$.
Let $\Omega' = \{g(\sigma) \mid \sigma \in \Omega\} \subseteq R^V $. Define the projection matrix $P:\Omega \times \Omega' \to \{0,1\}$:
\begin{align*}
  \forall \sigma \in \Omega, \tau \in \Omega',\quad P(\sigma,\tau) = \mathbb{I}[\tau = g(\sigma)].
\end{align*}
We remark that $P$ is a stochastic matrix.

Let $\pi$ be a distribution with support $\Omega$. 
Define the distribution $\mu = \pi P$ with support $\Omega'$.
Let $P_{\-{Glauber},\pi}^\downarrow:\Omega\times\Omega_{\-{down}}\to \mathbb{R}_{\geq 0}$ denote the down walk of Glauber dynamics on $\pi$, where $\Omega_{\-{down}} = \{\sigma_{V \setminus \{v\}} \mid v \in V \land \sigma \in \Omega\}$.
Given any configuration $\sigma \in \Omega$, $P_{\-{Glauber},\pi}^\downarrow$ picks a variable $v \in V$ uniformly at random, and then transforms $\sigma$ to  $\sigma_{V \setminus \{v\}}$ by dropping the value of $v$.
Similarly, let $P_{\-{Glauber},\mu}^\downarrow$ denote the down walk of Glauber dynamics on the distribution $\mu=\pi P$. 

\begin{lemma}\label{lemma-compare}
Let $0 < \delta < 1$.
Let $f:\mathbb{R}_{\geq 0} \to \mathbb{R}$ be a convex function with $f(1) = 0$.
If $P_{\-{Glauber},\pi}^\downarrow$ satisfies that for any distribution $\nu$ with support $\Omega$, 
\begin{align*}
	\DF{\nu P_{\-{Glauber},\pi}^\downarrow}{\pi P_{\-{Glauber},\pi}^\downarrow} \leq (1-\delta)\DF{\nu}{\pi},
\end{align*}
then $P_{\-{Glauber},\mu}^\downarrow$ satisfies that for any distribution $\varphi$ with support $\Omega'$, 
\begin{align*}
	\DF{\varphi P_{\-{Glauber},\mu}^\downarrow}{\mu P_{\-{Glauber},\mu}^\downarrow} \leq (1-\delta)\DF{\varphi}{\mu}.
\end{align*}
\end{lemma}
\begin{proof}
Given any $\rho \in \Omega_{\-{down}}$, we can map $\rho$ to $\eta = g(\rho)$, where $\eta_u = g(\rho_u)$ for any variable $u$. Let $\Omega_{\-{down}}' = \{g(\rho) \mid \rho \in \Omega_{\-{down}}\}$. 
Define the projection matrix $P':\Omega_{\-{down}} \times \Omega_{\-{down}}' \to \{0,1\}$:
\begin{align*}
  \forall \rho \in \Omega_{\-{down}}, \eta \in \Omega_{\-{down}}',\quad P'(\rho,\eta) = \mathbb{I}[\eta = g(\rho)].
\end{align*}
We remark that $P'$  is a stochastic matrix.
Since both $P$ and $P'$ project the value of each variable independently, the following equation is straightforward to verify 
\begin{align}\label{eq-order}
P_{\-{Glauber},\pi}^\downarrow \cdot P' = P \cdot P_{\-{Glauber},\mu}^\downarrow.
\end{align}	
	
For any configuration $\tau \in \Omega'$, define the distribution $\pi^\tau$ over $\Omega$ by 
\begin{align*}
  \forall \sigma \in \Omega, \quad \pi^{\tau}(\sigma) = \frac{\mathbb{I}[g(\sigma)=\tau]\pi(\sigma)}{\mu(\tau)}.
\end{align*}
For any $\sigma \in \Omega$, let $\tau = g(\sigma)$, it holds that $\pi(\sigma) = \mu(\tau)\pi^\tau(\sigma)$.
Fix a distribution $\varphi$ with support $\Omega'$. Define the distribution $\nu$ by 
\begin{align}\label{eq-def-nu}
	\forall \sigma \in \Omega, \quad \nu(\sigma) = \varphi(\tau)\pi^{\tau}(\sigma), \quad\text{where } \tau = g(\sigma).
\end{align}
We have
\begin{align}\label{eq-equal-df}
	\DF{\nu}{\pi} = \Ex_{\sigma \sim \pi} \left[ f\tp{\frac{\nu(\sigma)}{\pi(\sigma)}} \right] = \Ex_{\tau \sim \mu}\Ex_{\sigma \sim \pi^\tau} \left[ f\tp{\frac{ \varphi(\tau)\pi^{\tau}(\sigma)}{\mu(\tau)\pi^\tau(\sigma)}} \right] = \Ex_{\tau \sim \mu} \left[ f\tp{\frac{ \varphi(\tau)}{\mu(\tau)}} \right] = \DF{\varphi}{\mu}.
\end{align}
By the definition in~\eqref{eq-def-nu}, we have for all $\tau \in \Omega'$,
\begin{align*}
(\nu P)(\tau) = \sum_{\sigma: g(\sigma) = \tau} \nu(\sigma) = \varphi(\tau)	\sum_{\sigma: g(\sigma) = \tau}\pi^\tau(\sigma) = \varphi(\tau),
\end{align*}
which implies $\varphi = \nu P$.
Recall that $\mu = \pi P$. 
We have
\begin{align*}
	\DF{\varphi P_{\-{Glauber},\mu}^\downarrow}{\mu P_{\-{Glauber},\mu}^\downarrow} &=  \DF{\nu P P_{\-{Glauber},\mu}^\downarrow}{\pi P P_{\-{Glauber},\mu}^\downarrow}\\
\text{(by~\eqref{eq-order})}\quad &= \DF{\nu  P_{\-{Glauber},\pi}^\downarrow P'}{\pi  P_{\-{Glauber},\pi}^\downarrow P'}\\
\text{(by data processing inequality)}\quad &\leq \DF{\nu  P_{\-{Glauber},\pi}^\downarrow  }{\pi  P_{\-{Glauber},\pi}^\downarrow }\\
\text{(by assumption)}\quad  &\leq (1-\delta)\DF{\nu  }{\pi   }\\
\text{(by~\eqref{eq-equal-df})}\quad &= (1-\delta)\DF{\varphi}{\mu}. \qedhere
\end{align*}
\end{proof}

We are now ready to prove \Cref{lemma-decay-RC}.
\begin{proof}[Proof of \Cref{lemma-decay-RC}]
Let $\Omega = \{0,1,2\}^E$ denote the support of $\pi_{\-{gm}}$. 
Define the map $g$ by $g(0) =0$, $g(1) = 1$ and $g(2) = 1$. 
By \Cref{lemma-marginal}, it holds that $\pi_{\-{wrc}} = \pi_{\-{gm}} P$. 
\Cref{lemma-decay-RC} follows from \Cref{lemma-var-decay}, \Cref{lemma-ent-decay} and \Cref{lemma-compare}.
\end{proof}

\subsection{Faster mixing via perturbed chains}\label{section-perturbed}
Given a subgraph-world model $(G;\*p,\*\eta)$, we define the ``perturbed'' model $(G;\*p,\widehat{\*\eta})$ by
\begin{equation} \label{equ:perturb}
\widehat{\eta}_v=\begin{cases}
  \frac{1}{n}, &\text{ if }0\leq \eta_v\leq \frac{1}{n}\\
  \eta_v, &\text{otherwise.}
\end{cases}
\end{equation}
Call the induced distribution $\widehat{\pi_{\text{sg}}}$. Take a random subgraph $\+S$ according to $\widehat{\pi_{\text{sg}}}$, and add each remaining edge $e\in E\setminus\+S$ with probability $p_e/(1-p_e)$ to obtain $\+R$. By \Cref{lem:weighted-coupling}, the resulting distribution is $\pi_{\-{wrc}}(G;2\*p,\widehat{\*\lambda})=:\widehat{\pi_{\-{wrc}}}$, where $\widehat{\lambda}_v=\frac{1-\widehat{\eta}_{v}}{1+\widehat{\eta}_{v}}$.  
Let $\widehat{P_{\-{wrc}}}$ denote the Glauber dynamics on $\widehat{\pi_{\-{wrc}}}$.
Let $\widehat{P_{\-{wrc}}}^\downarrow$ denote the down-walk of $\widehat{P_{\-{wrc}}}$.
Applying the first item of \Cref{lemma-decay-RC} to the perturbed random-cluster model $(G;2\*p,\widehat{\*\lambda})$ yields that for any distribution $\nu$,
\begin{align*}
\chisq{\nu \widehat{P_{\-{wrc}}}^\downarrow }{\widehat{\pi_{\-{wrc}}} \widehat{P_{\-{wrc}}}^\downarrow } \leq \tp{1 - \frac{\min\left\{p_{\min},1-2p_{\max}\right\}}{m^2n^4}}\chisq{\nu}{\widehat{\pi_{\-{wrc}}}}	
\end{align*}
By \Cref{theorem-gap-decay}, we know that 
\begin{align*}
	\Gap(\widehat{P_{\-{wrc}}}) \geq \frac{\min\left\{p_{\min},1-2p_{\max}\right\}}{m^2n^4}.
\end{align*}
Based on this, the main effort of this subsection is to bound the spectral gap of the original model $(G;2\*p,\*\lambda)$ via the bounds for $(G;2\*p,\widehat{\*\lambda})$. 

We start with comparing the two distributions.
\begin{lemma} \label{lem:comp_dist_ratio_bound}
  For any $R\subseteq E$, 
  \[
    \frac{1}{9}\leq\frac{\widehat{\pi_{\-{wrc}}}(R)}{\pi_{\-{wrc}}(R)}<\mathrm{e}.
  \]
\end{lemma}

\begin{proof}
  Let $n=|V|$.
  If $n=1$, the only possible $R$ is $\emptyset$ and the lemma holds.
  We assume $n\ge 2$ in the rest.
To prove the first inequality, 
\begin{align*}
  \frac{\widehat{\pi_{\-{wrc}}}(R)}{\pi_{\-{wrc}}(R)}&=\frac{Z_{\-{wrc}}}{\widehat{Z_{\-{wrc}}}}\cdot\frac{\widehat{\wt_{\-{wrc}}}(R)}{\wt_{\-{wrc}}(R)}=\frac{Z_{\-{wrc}}}{\widehat{Z_{\-{wrc}}}}\cdot\prod_{C\in\kappa(V,S)}\frac{1+\prod_{u\in C}\widehat{\lambda}_u}{1+\prod_{u\in C}\lambda_u}. 
\end{align*}
Note that $\frac{Z_{\-{wrc}}}{\widehat{Z_{\-{wrc}}}}\geq 1$ 
because $\widehat{\lambda}_u\le \lambda_u$, which implies that the weight of each configuration of the random cluster model decreases after replacing $\*\lambda$ with $\widehat{\*\lambda}$. 
The second term can be handled by
\begin{align*}
  \prod_{C\in\kappa(V,S)}\frac{1+\prod_{u\in C}\widehat{\lambda}_u}{1+\prod_{u\in C}\lambda_u}\geq\prod_{C\in\kappa(V,S)}\frac{\prod_{u\in C}\widehat{\lambda}_u}{\prod_{u\in C}\lambda_u}\geq\left(\frac{n-1}{n+1}\right)^n\geq\frac{1}{9}
\end{align*}
as $n\geq 2$. 

For the second inequality,
the definition of $\pi_{\-{wrc}}$, together with the relation between $Z_{\-{wrc}}$ and $Z_{\text{sg}}$ in \Cref{equ:three_equivalence}, gives
\[
  \frac{\widehat{\pi_{\-{wrc}}}(R)}{\pi_{\-{wrc}}(R)}=\frac{Z_{\text{sg}}(G;\*p,\*\eta)}{Z_{\text{sg}}(G;\*p,\widehat{\*\eta})}\cdot\frac{\prod_{v \in V}\frac{1}{1+\widehat{\lambda}_v}}{\prod_{v \in V}\frac{1}{1+\lambda_v}}\cdot\frac{\prod_{C\in\kappa(V,R)}\left(1+\prod_{u\in C}\widehat{\lambda}_u\right)}{\prod_{C\in\kappa(V,R)}\left(1+\prod_{u\in C}\lambda_u\right)}. 
\]
There are three terms. For the first one, note that $\widehat{\eta}_v>\eta_v$ for all $v$, indicating that the weight of each configuration of the subgraph-world model is increased after replacing $\*\eta$ with $\widehat{\*\eta}$. As such, it is less or equal than $1$. The third term is also less or equal than $1$ due to $\widehat{\lambda}_v<\lambda_v$. The second term can be bounded by
\[
  \frac{\prod_{v \in V}\frac{1}{1+\widehat{\lambda}_v}}{\prod_{v \in V}\frac{1}{1+\lambda_v}}=\frac{\prod_{v \in V}(1+\widehat{\eta}_v)}{\prod_{v \in V}(1+\eta_v)}\leq\left(1+\frac{1}{n}\right)^n<\mathrm{e}
\]
which concludes this lemma. 
\end{proof}

We also have a bound on the ratio of the transition probability between the original and perturbed model in the Glauber dynamics. 

\begin{lemma} \label{lem:comp_tran_ratio_bound}
  Let $P_{\-{wrc}}$ and $\widehat{P_{\-{wrc}}}$ be the transition matrices of Glauber dynamics on the random cluster models $(G;2\*p,\*\lambda)$ and $(G;2\*p,\widehat{\*\lambda})$ respectively. Then it holds that
  \[
     \frac{1}{9\-{e}}\leq\frac{\widehat{P_{\-{wrc}}}(Z,Z')}{P_{\-{wrc}}(Z,Z')}\leq 9\mathrm{e} \qquad\text{for all }|Z\oplus Z'| = 1.
  \]
\end{lemma}

\begin{proof}
Assume $Z'=Z+e$ where $e\notin Z$. The case $Z'=Z-e$ where $e\in Z$ follows by a similar argument. We then have
\begin{align*}
 \frac{1}{9\-{e}} \leq \frac{\widehat{P_{\-{wrc}}}(Z,Z')}{P_{\-{wrc}}(Z,Z')}= \frac{ \widehat{\pi_{\-{wrc}}}(Z') ({\pi_{\-{wrc}}}(Z) + {\pi_{\-{wrc}}}(Z'))  }{ (\widehat{\pi_{\-{wrc}}}(Z) + \widehat{\pi_{\-{wrc}}}(Z'))\pi_{\-{wrc}}(Z')  } \leq 9\mathrm{e}. &\qedhere
\end{align*}
\end{proof}

Now we are ready to prove \Cref{lemma-decay-RC-perturbed}. 

\begin{proof}[Proof of \Cref{lemma-decay-RC-perturbed}]
Fix a test function $f$. Denote by $\+E(f,f)$, $\widehat{\+E}(f,f)$ the Dirichlet form of $P_{\-{wrc}}$ and $\widehat{P_{\-{wrc}}}$ respectively. Denote by $\-{Var}[f]$ and $\widehat{\-{Var}}[f]$ the variance of $f$ with respect to $\pi_{\-{wrc}}$ and $\widehat{\pi_{\-{wrc}}}$ respectively. Then by \Cref{lem:comp_dist_ratio_bound} and \Cref{lem:comp_tran_ratio_bound},  
\begin{align*}
  \frac{\+E(f,f)}{\-{Var}[f]}&=\frac{\displaystyle\sum\limits_{\substack{X,Y\subseteq E\\|X\oplus Y|= 1}}\pi_{\-{wrc}}(X)P_{\-{wrc}}(X,Y)\left(f(X)-f(Y)\right)^2}{\displaystyle\sum\limits_{\substack{X,Y\subseteq E\\|X\oplus Y|= 1}}\pi_{\-{wrc}}(X)\pi_{\-{wrc}}(Y)\left(f(X)-f(Y)\right)^2}\\
  &\geq\frac{\displaystyle\frac{1}{9\-e^2}\sum\limits_{\substack{X,Y\subseteq E\\|X\oplus Y|= 1}}\widehat{\pi_{\-{wrc}}}(X)\widehat{P_{\-{wrc}}}(X,Y)\left(f(X)-f(Y)\right)^2}{\displaystyle81\sum\limits_{\substack{X,Y\subseteq E\\|X\oplus Y|= 1}}\widehat{\pi_{\-{wrc}}}(X)\widehat{\pi_{\-{wrc}}}(Y)\left(f(X)-f(Y)\right)^2}>\frac{1}{10^4}\frac{\widehat{\+E}(f,f)}{\widehat{\-{Var}}[f]}. 
\end{align*}
Therefore, $\Gap(P_{\-{wrc}})\geq\frac{1}{10^4}\Gap(\widehat{P_{\-{wrc}}})\geq\frac{\min\left\{p_{\min},1-2p_{\max}\right\}}{10^4n^4m^2}$. 
\Cref{lemma-decay-RC-perturbed} follows from \Cref{theorem-gap-decay}.
\end{proof}

\section{Rapid mixing of Swendsen-Wang dynamics}
\label{sec:RC-SW}
Having analysed the edge-flipping dynamics,
now we turn to relating it with the Swendsen-Wang dynamics.
From this point on, we no longer need the grand model.
We first reiterate the settings for clarity.
Let $G=(V,E)$ be a graph.
We consider the Ising model on $G$ with parameters $\*\lambda = (\lambda_v)_{v \in V}$ and $\*\beta = (\beta_e)_{e \in E}$, 
where $0 < \lambda_v \leq 1$ for all $v \in V$ and $\beta_e > 1$ for all $e \in E$,
as well as the weighted random cluster model on $G$ with parameters $\*p = (p_e)_{e \in E}$ and $\*\lambda = (\lambda_v)_{v \in V}$, where $p_e = 1 - \frac{1}{\beta_e}$ for all $e \in E$. 
Let $\pi_{\-{Ising}}$ over $\Omega_{\+I} = \{0,1\}^V$ denote the Gibbs distribution of the Ising model,
and $\pi_{\-{wrc}}$ over $\Omega_{\+R} = \{0,1\}^E$ denote the distribution of the weighted random cluster model.
We remark that we view $\pi_{\-{wrc}}$ as a distribution over $\{0,1\}^E$ instead of $2^E$.

Let $P_{\-{SW}}^{\-{wrc}} = P_{\+R \to \+I}P_{\+I \to \+R}$ denote the transition matrix of the Swendsen-Wang dynamics for weighted random cluster models as defined in~\Cref{sec:SW},
and $P_{\-{GlauberRC}}$ denote the transition matrix of the Glauber dynamics for   weighted random cluster models. 
In this section, we compare the Swendsen-Wang dynamics with the Glauber dynamics.
Ullrich \cite{Ull14} showed the following result about the variance decay (spectral gap) of the Swendsen-Wang dynamics.
\begin{lemma}[\text{\cite[Remark 2 and Theorem 5]{Ull14}}]\label{lemma-var-compare}
Suppose $0 < \lambda_v \leq 1$ for all $v \in V$. It holds that
\begin{align*}
	\Gap(P_{\-{SW}}^{\-{wrc}}) \geq \frac{\Gap\tp{P_{\-{GlauberRC}}}}{4}.
\end{align*}	
\end{lemma}
The above result is proved in \cite{Ull14} in the case where $p_e = p \in (0,1)$ for all $e \in E$ and $\lambda_v = 1$ for all $v \in V$.\footnote{In \cite{Ull14}, Ullrich proved this for general random cluster models with an arbitrary $q\ge 1$, but when $q\neq2$ that model cannot be easily translated to the notation we use.}
The model we consider allows that each $e$ has different $p_e \in (0,1)$ and each $v$ has different $\lambda_v \in (0,1]$. 
\Cref{lemma-var-compare} can be proved following the same lines of \cite{Ull14}. 

\Cref{lemma-var-compare} only compares the decay rate of the variance.
The main technical result in this section is the following comparison lemma on the decay rate of the relative entropy.
\begin{lemma}\label{lemma-sw-ent-decay}
Suppose $0 < \lambda_v \leq 1$ for all $v \in V$.
Let $0 < \delta < 1 $. For any distribution $\nu$ over $\Omega_{\+R}$, if 
\begin{align*}
	\KL{\nu P_{\-{GlauberRC}}^\downarrow}{\pi_{\-{wrc}} P_{\-{GlauberRC}}^\downarrow } \leq (1-\delta) \KL{\nu}{\pi_{\-{wrc}}},
\end{align*}
then it holds that
\begin{align*}
	\KL{\nu P_{\-{SW}}^{\-{wrc}} }{\pi_{\-{wrc}} P_{\-{SW}}^{\-{wrc}} } \leq \tp{1-\frac{\delta}{4}} \KL{\nu}{\pi_{\-{wrc}}}.
\end{align*}
\end{lemma}

We are now ready to prove the main results in \Cref{theorem-sw-main} and \Cref{theorem-nlogn}.
\begin{proof}[Proofs of \Cref{theorem-sw-main} and \Cref{theorem-nlogn}]
  Let $\pi_{\-{wrc}, \min} = \min_{S \subseteq E}\pi_{\-{wrc}}(S)$ denote the minimum probability in $\pi_{\-{wrc}}$. It is straightforward to verify that $\pi_{\-{wrc},\min} \geq \{p_{\min},1-p_{\max}\}^m/2^{m+n}$. By  the data processing inequality, \Cref{proposition-adjoint} and \Cref{theorem-gap-decay}, we have
\begin{align*}
	\chisq{\nu P_{\-{SW}}^{\-{wrc}} }{ \pi_{\-{wrc}} } &= \chisq{\nu P_{\-{SW}}^{\-{wrc}} }{ \pi_{\-{wrc}} P_{\-{SW}}^{\-{wrc}} }  \leq \chisq{\nu P_{\+R \to \+I} }{ \pi_{\-{wrc}} P_{\+R \to \+I}}\\
	&\leq \tp{1 - \Gap(P_{\-{SW}}^{\-{wrc}})}\chisq{\nu }{ \pi_{\-{wrc}} }.
\end{align*}
The $\chi^2$-divergence between the initial distribution and the stationary distribution is at most $\frac{1}{\pi_{\-{wrc}, \min}}$. 
By \Cref{thm:RC}, \Cref{lemma-var-compare} and~\eqref{eqn:chisq}, let $C_1$ be the constant in \Cref{thm:RC}, we have
\begin{align*}
T_{\-{mix}}(P_{\-{SW}}^{\-{wrc}},\epsilon) &\leq 40 \cdot C_1(p_{\min},p_{\max})  \cdot \min\set{n^4, \tp{\frac{1}{1-\lambda_{\max}}}^4} \cdot m^2 \cdot \tp{\log \frac{1}{\epsilon} + m }.
\end{align*}
By \eqref{eq-mixing-sw}, the mixing time of Swendsen-Wang dynamics on Ising model satisfies 
\begin{align*}
T_{\-{mix}}(P_{\-{SW}}^{\-{Ising}},\epsilon) &\leq C'_1(\beta_{\min},\beta_{\max})  \cdot \min\set{n^4, \tp{\frac{1}{1-\lambda_{\max}}}^4} \cdot m^2 \cdot \tp{\log \frac{1}{\epsilon} + m },
\end{align*}
where $p_{\min } = 1 - \frac{1}{\beta_{\min}}$, $p_{\max  } = 1 - \frac{1}{\beta_{\max}}$, and thus
\begin{align}\label{eq-constant-1}
C'_1(\beta_{\min},\beta_{\max}) &= O\left(\frac{1}{\min\{p_{\min }, 1 - p_{\max }\}} \log \frac{1}{\min\{p_{\min }, 1 - p_{\max }\}}\right)\notag\\
&= O\tp{\tp{\frac{\beta_{\min}}{1-\beta_{\min}} + \beta_{\max } } \log \tp{\frac{\beta_{\min}}{1-\beta_{\min}} + \beta_{\max } }}. 
\end{align}
This proves \Cref{theorem-sw-main}.

For the decay of the relative entropy, the initial KL-divergence is at most $\log \frac{1}{\pi_{\-{wrc}, \min}}$. Combining  \Cref{lemma-sw-ent-decay}, \Cref{lemma-decay-RC}, and~\eqref{eqn:Pinsker}, we have
\begin{align*}
T_{\-{mix}}(P_{\-{SW}}^{\-{wrc}},\epsilon) &\leq 4C_2(\Delta, \delta,p_{\min }, p_{\max } ) \cdot n \tp{\log n + \log \frac{1}{\epsilon} },
\end{align*}
where $C_2$ is the constant $C$ in \Cref{lemma-sw-ent-decay}. By \eqref{eq-mixing-sw}, the mixing time of Swendsen-Wang dynamics on Ising model satisfies 
\begin{align*}
T_{\-{mix}}(P_{\-{SW}}^{\-{Ising}},\epsilon) &\leq C_2'(\Delta, \delta,\beta_{\min }, \beta_{\max } ) \cdot n \tp{\log n + \log \frac{1}{\epsilon} }
\end{align*}
where
\begin{align}\label{eq-constant-2}
C_2'(\Delta, \delta,\beta_{\min }, \beta_{\max } ) &=  \left(\frac{\Delta}{\delta^2 \min\{p_{\min},1-p_{\max}\}}\right)^{O\left(\frac{\Delta^2}{\delta^4 \min\{p_{\min},1-p_{\max}\}}\right)}\notag\\
&= \tp{\frac{\Delta}{\delta^2}\tp{\frac{\beta_{\min}}{1-\beta_{\min}} + \beta_{\max } } }^{O\tp{\frac{\Delta^2}{\delta^4}\tp{\frac{\beta_{\min}}{1-\beta_{\min}} + \beta_{\max } } }}. 
\end{align}
This proves \Cref{theorem-nlogn}.
\end{proof}

The rest of this section is dedicated to the proof of \Cref{lemma-sw-ent-decay}.

\subsection{FKES distribution and single-bond dynamics}
To compare the Swendsen-Wang dynamics to the Glauber dynamics, we first introduce the FKES (Fortuin-Kasteleyn-Edwards-Sokal) distribution~\cite{FK72,ES88} $\pi_{\-{FKES}}$ over $\Omega_{\+I} \times \Omega_{\+R}$,
which couples the Ising distribution $\pi_{\-{Ising}}$  and the random cluster distribution $\pi_{\-{wrc}}$:
\begin{align}\label{eq-def-FKES}
\forall \sigma \in \Omega_{\+I}, \tau \in \Omega_{\+R},\quad \pi_{\-{FKES}}(\sigma \tau) \defeq \pi_{\-{Ising}}(\sigma)P_{\+I \to \+R}(\sigma, \tau) \overset{(\star)}{=} \pi_{\-{wrc}}(\tau)P_{\+R \to \+I}(\tau, \sigma),
\end{align}
where $\Omega_{\+I} = \{0,1\}^V$, $\Omega_{\+R}=\{0,1\}^E$, $P_{\+I \to \+R}$ and $P_{\+R \to \+I}$ are defined in~\eqref{eq-def-P-I-R} and ~\eqref{eq-def-P-R-I} respectively. The equation $(\star)$ holds due to \Cref{proposition-adjoint}. 
We use $\Omega_{\-{FKES}} \subseteq \Omega_{\+I} \times \Omega_{\+R}$ to denote the support of the distribution $\pi_{\-{FKES}}$.
The above equation shows that 
\begin{itemize}
	\item the marginal distribution projected from $\pi_{\-{FKES}}$ to $\Omega_{\+I}$ is $\pi_{\-{Ising}}$;
	\item the marginal distribution projected from $\pi_{\-{FKES}}$ to $\Omega_{\+R}$ is $\pi_{\-{wrc}}$;
	\item conditional on $\sigma \in \Omega_{\+I}$, the marginal distribution projected from $\pi_{\-{FKES}}$ to $\Omega_{\+R}$ is $\P_{\+I \to \+R}(\sigma,\cdot)$;
	\item conditional on $\tau \in \Omega_{\+R}$, the marginal distribution projected from $\pi_{\-{FKES}}$ to $\Omega_{\+I}$ is $\P_{\+R \to \+I}(\tau,\cdot)$.
\end{itemize}

Define the following stochastic matrix from the weighted random cluster model to the FKES model
\begin{align*}
\forall \tau_1 \in  \Omega_{\+R} , \sigma\tau_2 \in 	\Omega_{\-{FKES}}, \quad P_{\+R \to \-{FKES}}(\tau_1, \sigma\tau_2) = P_{\+R \to \+I}(\tau_1,\sigma) \cdot \mathbb{I}[\tau_1 = \tau_2],
\end{align*}
The operator $P_{\+R \to \-{FKES}}$ maps from $L_2(\pi_{\-{FKES}})$ to $L_2(\pi_{\-{wrc}})$, where $L_2(\pi)$ is the vector space with the inner product $\inner{\cdot}{\cdot}_{\pi}$. The adjoint operator $P_{\-{FKES} \to \+R}$ is defined by 
\begin{align*}
\forall \sigma\tau_1 \in \Omega_{\-{FKES}}, \tau_2 \in \Omega_{\+R},\quad  P_{\-{FKES} \to \+R}(\sigma\tau_1,\tau_2) = \mathbb{I}[\tau_1 = \tau_2].	
\end{align*}
For any $f \in L_2(\pi_{\-{FKES}})$ and $g \in L_2(\pi_{\-{wrc}})$, it holds that $\inner{P_{\+R \to \-{FKES}}f}{g}_{\pi_{\-{wrc}}} = \inner{f}{P_{\-{FKES} \to \+R}g}_{\pi_{\-{FKES}}}$.

Next, we define the edge down-walk  on the joint distribution. 
Fix an edge $e \in E$.
Given $\sigma\tau \in \Omega_{\-{FKES}}$, let $P^{\downarrow}_e$ denote the edge down-walk that drops the value on edge $e$. 
Formally, $P^{\downarrow}_e$ is defined on any $\sigma\tau \in \Omega_{\-{FKES}}$ and any $\sigma' \tau' \in \Omega_{\-{FKES}}^e$, 
\begin{align*}
  P^{\downarrow}_e(\sigma\tau, \sigma'\tau') =  \mathbb{I}[\sigma=\sigma' \land \tau' = \tau_{E-e}],
\end{align*}
where we use $E - e$ to denote $E \setminus \{e\}$.
Let $\pi_{\-{FKES}}^e = \pi_{\-{FKES}}P^{\downarrow}_e$.
Let $\Omega_{\-{FKES}}^{e}$ denote the support of  $\pi_{\-{FKES}}P^{\downarrow}_e$.
Suppose $e = \{u,v\}$.
We then define the edge up-walk $P^{\uparrow}_e$, for all $\sigma'\tau' \in \Omega_{\-{FKES}}^{e}$ and $\sigma\tau \in \Omega_{\-{FKES}}$, 
\begin{align*}
  P^{\uparrow}_e(\sigma\tau,\sigma'\tau') = \mathbb{I}[\sigma=\sigma'\land \tau_{E-e}=\tau'] \times \begin{cases}
	p_e &\text{if } \tau_e = 1 \text{ and } \sigma(u) = \sigma(v);\\
	1-p_e &\text{if }\tau_e = 0 \text{ and } \sigma(u) = \sigma(v);\\
	0 &\text{if } \tau_e = 1 \text{ and } \sigma(u) \neq \sigma(v);\\
    1 &\text{if }\tau_e = 0 \text{ and } \sigma(u) \neq \sigma(v).\\
\end{cases}
\end{align*}
For any $f \in L_2(\pi_{\-{FKES}})$ and $g \in L_2(\pi_{\-{FKES}}^e)$, it holds that $\inner{P^{\uparrow}_ef}{g}_{\pi_{\-{FKES}}^e} = \inner{f}{P_{e}^{\downarrow}g}_{\pi_{\-{FKES}}}$.

Since in each transition step of $P^{\uparrow}_e$, $\sigma'\tau'_{E-e} = \sigma\tau_{E-e}$ and  the distribution of $\tau'_e$ depends only on $\sigma_u$ and $\sigma_v$, the following observation is straightforward to verify.
\begin{observation}\label{observation-pe}
For any $e,f \in E$, it holds that
\begin{itemize}
	\item $(P^{\downarrow}_e P^{\uparrow}_e) (P^{\downarrow}_f P^{\uparrow}_f) =  (P^{\downarrow}_f P^{\uparrow}_f)(P^{\downarrow}_e P^{\uparrow}_e)$.
	\item $P^{\downarrow}_e P^{\uparrow}_e = (P^{\downarrow}_e P^{\uparrow}_e)^2$.
\end{itemize}
\end{observation}

The single bound dynamics $P_{\-{SB}}: \Omega_{\+R} \times \Omega_{\+R} \to \mathbb{R}_{\geq 0}$ is defined as follows
\begin{align*}
P_{\-{SB}} = P_{\+R \to \-{FKES}}\tp{\frac{1}{m}\sum_{e \in E}P^{\downarrow }_e P^{\uparrow }_e  }P_{\-{FKES} \to \+R}.	
\end{align*}
Intuitively, given any $\tau \in \Omega_{\+R}$, $P_{\-{SB}}$ first transforms $\tau $ into a joint configuration $\sigma\tau \in \Omega_{\-{FKES}}$; samples an edge $e \in E$ uniformly at random; updates $\tau_e$ conditional on $\sigma$; drops $\sigma$ and keeps the random cluster configuration $\tau$. 
Similarly, we can decompose the single bound dynamics as $P_{\-{SB}} =P_{\-{SB}}^{\downarrow} P_{\-{SB}}^{\uparrow}$:
\begin{align}\label{eq-sb-down-walk}
  P_{\-{SB}}^{\downarrow} = P_{\+R \to \-{FKES}}\tp{\frac{1}{m}\sum_{e \in E}P^{\downarrow }_e} \text{ and }	P_{\-{SB}}^{\uparrow} = P_{E}^{\uparrow} P_{\-{FKES} \to \+R},
\end{align}
where for convenience, we treat $(\frac{1}{m}\sum_{e \in E}P^{\downarrow }_e)$ as a matrix defined on $\Omega_{\-{FKES}} \times (\cup_{e \in E}\Omega_{\-{FKES}}^e)$ 
and $P_E^{\uparrow}: (\cup_{e \in E}\Omega_{\-{FKES}}^e) \times \Omega_{\-{FKES}} \to \mathbb{R}_{\geq 0}$ is defined by $P_E^{\uparrow}(x,y) = P_e^{\uparrow}(x,y)$ 
where $x\in \Omega_{\-{FKES}}^e$ for some $e \in E$ and $y \in \Omega_{\-{FKES}}$.
Note that once $x$ is given, $e$ is uniquely determined, and $P_E^{\uparrow}$ agrees with $P_e^{\uparrow}$.
It is straightforward to check $(\frac{1}{m}\sum_{e \in E}P^{\downarrow}_e)$ and $P_{E}^{\uparrow}$ is a pair of adjoint operators.


\begin{lemma}\label{lemma-sb-ent-decay}
Suppose $0 < \lambda_v \leq 1$ for all $v \in V$.
Let $0 < \delta < 1 $. For any distribution $\nu$ over $\Omega_{\+R}$, if 
\begin{align*}
	\KL{\nu P_{\-{GlauberRC}}^\downarrow}{\pi_{\-{wrc}} P_{\-{GlauberRC}}^\downarrow } \leq (1-\delta) \KL{\nu}{\pi_{\-{wrc}}},
\end{align*}
then it holds that
\begin{align}\label{eq-lemma-goal}
	\KL{\nu P_{\-{SB}}^{\downarrow}  }{\pi_{\-{wrc}} P_{\-{SB}}^{\downarrow}  } \leq \tp{1-\frac{\delta}{4}} \KL{\nu}{\pi_{\-{wrc}}}.
\end{align}
\end{lemma}
The proof of \Cref{lemma-sb-ent-decay} is deferred to~\Cref{section-gd-sb}.
We are now ready to prove \Cref{lemma-sw-ent-decay}.

\begin{proof}[Proof of \Cref{lemma-sw-ent-decay}]
By \Cref{observation-pe}, the Swendsen-Wang dynamics $P^{\-{wrc}}_{SW}$ can be written as
\begin{align*}
P^{\-{wrc}}_{\-{SW}} &= P_{\+R \to \-{FKES}}\tp{ \prod_{e \in E}P^{\downarrow }_e P^{\uparrow }_e }P_{\-{FKES} \to \+R} = P_{\+R \to \-{FKES}}\tp{\frac{1}{m}\sum_{e \in E}P^{\downarrow }_eP^{\uparrow }_e}\tp{ \prod_{e \in E}P^{\downarrow }_e P^{\uparrow }_e }P_{\-{FKES} \to \+R}\\
&= P_{\+R \to \-{FKES}}\tp{\frac{1}{m}\sum_{e \in E}P^{\downarrow }_e}P^{\uparrow}_E \tp{ \prod_{e \in E}P^{\downarrow }_eP^{\uparrow }_e} P_{\-{FKES} \to \+R} = P_{\-{SB}}^{\downarrow} P^{\uparrow}_E \tp{ \prod_{e \in E}P^{\downarrow }_eP^{\uparrow }_e} P_{\-{FKES} \to \+R}.
\end{align*}	
Hence, by the data processing inequality, we have
\begin{align*}
\KL{\nu P_{\-{SW}}^{\-{wrc}} }{\pi_{\-{wrc}} P_{\-{SW}}^{\-{wrc}} } \leq 	\KL{\nu P_{\-{SB}}^{\downarrow} }{\pi_{\-{wrc}} P_{\-{SB}}^{\downarrow} } \leq \tp{1-\frac{\delta}{4}} \KL{\nu}{\pi_{\-{wrc}}},
\end{align*}
where the last inequality holds due to \Cref{lemma-sb-ent-decay}.
\end{proof}

\begin{remark}[a simple proof of the main result in~\cite{Ull14}]\label{remark-simple-ull}
If we replace KL-divergence in the above proof with $\chi^2$-divergence.
The same proof shows that for any distribution $\nu$,
\begin{align*}
\chisq{\nu P_{\-{SW}}^{\-{wrc}} }{\pi_{\-{wrc}} P_{\-{SW}}^{\-{wrc}} } \leq 	\chisq{\nu P_{\-{SB}}^{\downarrow} }{\pi_{\-{wrc}} P_{\-{SB}}^{\downarrow} }.	
\end{align*} 
By \Cref{theorem-gap-decay}, we have the following result
\begin{align*}
	\Gap((P_{\-{SW}}^{\-{wrc}})^2) \geq \Gap(P_{\-{SB}}) \quad \implies \quad \Gap(P_{\-{SW}}^{\-{wrc}}) \geq \frac{\Gap(P_{\-{SB}})}{2},
\end{align*}
which recovers the main result in~\cite{Ull14} (losing a factor of $2$).
A more careful application of the data processing inequality in the above proof gives a stronger result $\chisq{\nu P_{\+R \to \+I} }{\pi_{\-{wrc}} P_{\+R \to \+I} } \leq 	\chisq{\nu P_{\-{SB}}^{\downarrow} }{\pi_{\-{wrc}} P_{\-{SB}}^{\downarrow} }$, which gives a better bound $\Gap(P_{\-{SW}}^{\-{wrc}}) \geq \Gap(P_{\-{SB}})$, matching~\cite{Ull14}.
\end{remark}

\subsection{Comparing Glauber dynamics to single-bond dynamics}\label{section-gd-sb}
We first introduce some notations.
Let $\mu$ be a distribution with support $\Omega \subseteq Q^V$.
 
For any $S \subseteq V$, we use $\mu_S$ to denote the marginal distribution on $S$ induced by $\mu$. 
Let $\Omega(\mu_S)$ denote the support of $\mu_S$.
Given any $x_S \in \Omega(\mu_S)$, we use $\mu^{x_S}$ to denote the distribution over $\Omega$ obtained from $\mu$ conditional on $x_S$. 
Formally, for any $y \in \Omega$, $\mu^{x_S}(y) = \mathbb{I}[y_S= x_S]\mu(y)/\mu_S(x_S)$,
where $y_S$ is the restriction of $y$ on $S$.
For any $\Lambda \subseteq V$, we use $\mu^{x_S}_{\Lambda}$ to denote the marginal distribution on $\Lambda$ induced by $\mu^{x_S}$.
We need the following chain rule of the KL-divergence. 
\begin{lemma}\label{lemma-KL-chain}
For any distribution $\nu$ be a distribution over $\Omega$, any $S \subseteq V$, it holds that
\begin{align*}
	\KL{\nu}{\mu} = \KL{\nu_S}{\mu_S} + \Ex_{x_S \sim \nu_S}\KL{\nu^{x_S}}{\mu^{x_S}} = \KL{\nu_S}{\mu_S} + \mu[\Ent{ V-S }{f}],
\end{align*}
where $V-S = V \setminus S$ and $f: \Omega \to \mathbb{R}_{\geq 0}$ is defined by $f(x) = \nu(x)/\mu(x)$ and
\begin{align*}
	\mu[\Ent{V-S}{f}] = \sum_{x_S \in \Omega(\mu_S)}\mu_S(x_S)\Ent{\mu^{x_S}}{f}.
\end{align*}
\end{lemma}
\begin{proof}
The first equation $\KL{\nu}{\mu} = \KL{\nu_S}{\mu_S} + \Ex_{x_S \sim \nu_S}\KL{\nu^{x_S}}{\mu^{x_S}}$ follows directly from the standard chain rule of KL-divergence. 
To prove the second equation, for any $x_S \in \Omega(\nu_S)$, define
\begin{align*}
  \forall y\in \Omega,\quad
  g^{x_S}(y) \defeq 
  \begin{cases}
    \frac{\nu^{x_S}(y)}{\mu^{x_S}(y)} = \frac{\nu(y) \mu_S(x_S) }{\mu(y) \nu_S(x_S)}  = \frac{\mu_S(x_S) }{\nu_S(x_S)}f(y) & \text{if $y_S=x_S$;}\\
    0 & \text{otherwise.}
  \end{cases}
\end{align*}
Since $\Omega(\nu_S) \subseteq \Omega(\mu_S)$, we have
\begin{align*}
  \Ex_{x_S \sim \nu_S}\KL{\nu^{x_S}}{\mu^{x_S}} 
  & = \sum_{x_S \in \Omega(\nu_S)} \nu(x_S)\Ent{\mu^{x_S}}{g^{x_S}} = \sum_{x_S \in \Omega(\nu_S)}\nu(x_S)\Ent{\mu^{x_S}}{\frac{\mu_S(x_S) }{\nu_S(x_S)}f}\\
  & = \sum_{x_S \in \Omega(\nu_S)}	\mu(x_S)\Ent{\mu^{x_S}}{f}. \tag*{(\text{as } \Ent{\mu^{x_S}}{cf} = c \Ent{\mu^{x_S}}{f})}
\end{align*}
Note that for all $\sigma \in \Omega$ such that $\sigma_S \in \Omega(\mu_S) \setminus \Omega(\nu_S)$, 
it holds that $f(\sigma) = \frac{\nu(\sigma)}{\mu(\sigma)} = 0$, implying that  $\Ent{\mu^{\sigma_S}}{f} = 0$.
We have
\begin{align*}
\Ex_{x_S \sim \nu_S}\KL{\nu^{x_S}}{\mu^{x_S}} &= 	\sum_{x_S \in \Omega(\nu_S)}	\mu(x_S)\Ent{\mu^{x_S}}{f} + \sum_{x_S \in \Omega(\mu_S) \setminus \Omega(\nu_S)}	\mu(x_S)\Ent{\mu^{x_S}}{f}\\
&= 	\sum_{x_S \in \Omega(\mu_S)}	\mu(x_S)\Ent{\mu^{x_S}}{f} = \mu[\Ent{V-S}{f}]. \qedhere
\end{align*}
\end{proof}

Now we are ready to prove \Cref{lemma-sb-ent-decay}.
\begin{proof}[Proof of \Cref{lemma-sb-ent-decay}]
For any $e \in E$, let $E-e = E \setminus \{e\}$,
using \Cref{lemma-KL-chain}, it holds that
\begin{align*}
\KL{\nu}{\pi_{\-{wrc}}} = \KL{\nu_{E-e}}{\pi_{\-{wrc},E-e}} + \pi_{\-{wrc}}[\Ent{e}{f}], \quad \text{where } f(\tau) = \frac{\nu(\tau)}{\pi_{\-{wrc}}(\tau)}.
\end{align*}
Averaging over all $e\in E$, we get
\begin{align*}
\frac{1}{m}\sum_{e \in E}\pi_{\-{wrc}}[\Ent{e}{f}] &= \frac{1}{m}\sum_{e \in E} 
\KL{\nu}{\pi_{\-{wrc}}} - \frac{1}{m}\sum_{e \in E }	\KL{\nu_{E-e}}{\pi_{\-{wrc},E-e}}\\ 
&= \KL{\nu}{\pi_{\-{wrc}}} - \KL{\nu P_{\-{GlauberRC}}^\downarrow}{\pi_{\-{wrc}} P_{\-{GlauberRC}}^\downarrow }.
\end{align*}
By the assumption of \Cref{lemma-sb-ent-decay}, we have
\begin{align}\label{eq-condition-decay}
\frac{1}{m}\sum_{e \in E}\pi_{\-{wrc}}[\Ent{e}{f}] \geq \delta \KL{\nu}{\pi_{\-{wrc}}}.	
\end{align}
Next, by~\eqref{eq-sb-down-walk}, we have
\begin{align*}
	\KL{\nu P_{\-{SB}}^{\downarrow}  }{\pi_{\-{wrc}} P_{\-{SB}}^{\downarrow}  } &= \KL{\nu P_{\+R \to \-{FKES}}\tp{\frac{1}{m}\sum_{e \in E}P^{\downarrow }_e} }{\pi_{\-{wrc}} P_{\+R \to \-{FKES}}\tp{\frac{1}{m}\sum_{e \in E}P^{\downarrow }_e}  }\\
	&=\KL{\nu_{\-{joint}}\tp{\frac{1}{m}\sum_{e \in E}P^{\downarrow }_e} }{\pi_{\-{FKES}}\tp{\frac{1}{m}\sum_{e \in E}P^{\downarrow }_e}  },
\end{align*}
where $\nu_{\-{joint}} = \nu P_{\+R \to \-{FKES}}$ so that for any $\sigma\tau \in \Omega_{\-{FKES}}$, $\nu_{\-{joint}}(\sigma\tau) = \nu(\tau)\pi_{\-{FKES},V}^\tau(\sigma)$.
Hence, we have
\begin{align*}
\KL{\nu_{\-{joint}}}{\pi_{\-{FKES}}}=\sum_{\sigma\tau \in \Omega_{\-{FKES}}}\nu_{\-{joint}}(\sigma\tau)\log\frac{\nu(\tau)\pi_{\-{FKES},V}^\tau(\sigma)}{\pi_{\-{wrc}}(\tau)\pi_{\-{FKES},V}^\tau(\sigma)} = 	\KL{\nu  }{\pi_{\-{wrc}}}.
\end{align*}
With these two equations, our goal, \eqref{eq-lemma-goal}, is equivalent to 
\begin{align}\label{eq-g1}
\KL{\nu_{\-{joint}}}{\pi_{\-{FKES}}} -	\KL{\nu_{\-{joint}}\tp{\frac{1}{m}\sum_{e \in E}P^{\downarrow }_e} }{\pi_{\-{FKES}}\tp{\frac{1}{m}\sum_{e \in E}P^{\downarrow }_e}  } \geq \frac{\delta}{4} \KL{\nu_{\-{joint}}}{\pi_{\-{FKES}}}.
\end{align}
Using \Cref{lemma-KL-chain}, for any $e \in E$,  let $V+E-e$ be $V\cup E \setminus \{e\}$, it holds that
\begin{align*}
	\KL{\nu_{\-{joint}}}{\pi_{\-{FKES}}} = \KL{\nu_{\-{joint},V+E-e}}{\pi_{\-{FKES},V+E-e}} + \pi_{\-{FKES}}\left[\Ent{e}{\overline{f}}\right],
\end{align*}
where
\begin{align*}
\overline{f}(\sigma \tau) = \frac{\nu_{\-{joint}}(\sigma\tau)}{\pi_{\-{FKES}}(\sigma\tau)} = \frac{\nu(\tau)\pi_{\-{FKES},V}^\tau(\sigma)}{\pi_{\-{wrc}}(\tau)\pi_{\-{FKES},V}^\tau(\sigma)} = \frac{\nu(\tau)}{\pi_{\-{wrc}}(\tau)} = f(\tau).
\end{align*}
Hence,~\eqref{eq-g1} is equivalent to
\begin{align*}
\frac{1}{m}\sum_{e \in E} \pi_{\-{FKES}}\left[\Ent{e}{\overline{f}}\right] \geq  \frac{\delta}{4} \KL{\nu_{\-{joint}}}{\pi_{\-{FKES}}} = \frac{\delta}{4} \KL{\nu  }{\pi_{\-{wrc}}}.	
\end{align*}

Given~\eqref{eq-condition-decay}, to prove the above inequality, it suffices to show that for any $e \in E$,
\begin{align}\label{eq-g2}
	4\cdot \pi_{\-{FKES}}\left[\Ent{e}{\overline{f}}\right] \geq \pi_{\-{wrc}}[\Ent{e}{f}]. 
\end{align}
We now prove~\eqref{eq-g2}. 
We use $\sigma$ to denote the vertex configuration in $\{0,1\}^V$ and $\tau$ to denote the edge configuration $\tau \in \{0,1\}^E$.
Suppose $e = \{u,v\}$. We use $\tau_{-e}$ to denote a configuration in $\{0,1\}^{E-e}$. 
To ease the notation, we use $\pi_{\-{FKES}}(\sigma\tau_{-e})$ to denote $\pi_{\-{FKES},E-e}(\sigma\tau_{-e})$.
For any $\tau_e \in \{0,1\}$, we use $\tau_{-e}\tau_e$ to denote a full configuration $\tau$ in $\{0,1\}^E$. We have
\begin{align*}
  & \pi_{\-{FKES}}\left[\Ent{e}{\overline{f}}\right] 
  = \sum_{\sigma \tau_{-e}} 	\pi_{\-{FKES}}(\sigma\tau_{-e}) \Ent{\pi_{\-{FKES}}^{\sigma\tau_{-e}}}{\overline{f}}\\
  =\,& \sum_{\sigma \tau_{-e}} 	\pi_{\-{FKES}}(\sigma\tau_{-e}) \sum_{\tau_e \in \{0,1\}} \pi_{\-{FKES},e}^{\sigma\tau_{-e}}(\tau_e)\overline{f}(\sigma\tau_{-e}\tau_e)
  \log \frac{\overline{f}(\sigma\tau_{-e}\tau_e)}{\sum_{\tau_e \in \{0,1\}} \pi_{\-{FKES},e}^{\sigma\tau_{-e}}(\tau_e)\overline{f}(\sigma\tau_{-e}\tau_e)}.
\end{align*}
If $\sigma_u \neq \sigma_v$, then $\pi_{\-{FKES},e}^{\sigma\tau_{-e}}(0)=1$, and in this case
\begin{align*}
  \sum_{\tau_e \in \{0,1\}} \pi_{\-{FKES},e}^{\sigma\tau_{-e}}(\tau_e)\overline{f}(\sigma\tau_{-e}\tau_e)
  \log \frac{\overline{f}(\sigma\tau_{-e}\tau_e)}{\sum_{\tau_e \in \{0,1\}} \pi_{\-{FKES},e}^{\sigma\tau_{-e}}(\tau_e)\overline{f}(\sigma\tau_{-e}\tau_e)} = 0.
\end{align*}
Thus we only need to consider the case where the two endpoints of $e$ get the same spin.
Note that this always happens if $\tau_{-e}\in C_e$,
where $C_e \subseteq \{0,1\}^{E-e}$ is the set of $\tau_{-e}$ such that $u$ and $v$ are connected by edges assigned $1$ in $\tau_{-e}$.
Again, to ease the notation, let $\pi_{\-{wrc}}(\tau_{-e})$ be $\pi_{\-{wrc},E-e}(\tau_{-e})$.
Hence, we have
\begin{align}
\pi_{\-{FKES}}\left[\Ent{e}{\overline{f}}\right] &= \sum_{\tau_{-e} \in C_e}\pi_{\-{wrc}}(\tau_{-e})h(p_e,\tau_{-e}) + \sum_{\tau_{-e} \notin C_e}\pi_{\-{wrc}}(\tau_{-e}) \Pr_{\sigma \sim \pi_{\-{FKES},V}^{\tau_{-e}}}[\sigma_u = \sigma_v]h(p_e,\tau_{-e})\notag\\
&\ge \sum_{\tau_{-e} \in C_e}\pi_{\-{wrc}}(\tau_{-e})h(p_e,\tau_{-e}) + \frac{1}{2}\sum_{\tau_{-e} \notin C_e}\pi_{\-{wrc}}(\tau_{-e})h(p_e,\tau_{-e})\label{eqn:joint-entropy-split},
\end{align}
where 
\begin{align*}
  h(p_e,\tau_{-e})& \defeq p_e f(\tau_{-e},1)\log f(\tau_{-e},1) +(1-p_e) f(\tau_{-e},0)\log f(\tau_{-e},0)\\
  &\quad - (p_e f(\tau_{-e},1)+(1-p_e) f(\tau_{-e},0))\log (p_e f(\tau_{-e},1)+(1-p_e) f(\tau_{-e},0)).
\end{align*}
To see \eqref{eqn:joint-entropy-split}, since all external fields are consistent,
$\Pr_{\sigma \sim \pi_{\-{FKES},V}^{\tau_{-e}}}[\sigma_u = \sigma_v] \ge 1/2$.
This is because we can further condition on $\tau_e$: if $\tau_e=1$, then $\sigma_u=\sigma_v$ with probability $1$,
and if $\tau_e=0$, then $\sigma_u$ and $\sigma_v$ are independent and biased towards the same direction,
in which case they are equal with probability at least $1/2$.
The final probability is a linear combination of the two cases.

Similarly, we can expand the right hand side of \eqref{eq-g2},
\begin{align}
  \pi_{\-{wrc}}[\Ent{e}{f}] &= \sum_{\tau_{-e}}\pi_{\-{wrc}}(\tau_{-e})\Ent{\pi^{\tau_{-e}}_{\-{wrc}}}{f}\notag\\
  & =	\sum_{\tau_{-e}}\pi_{\-{wrc}}(\tau_{-e}) \sum_{\tau_e \in \{0,1\}}\pi^{\tau_{-e}}_{\-{wrc},e}(\tau_e)f(\tau_{-e}\tau_e)\log \frac{ f(\tau_{-e}
  \tau_e) }{\sum_{\tau_e \in \{0,1\}}\pi^{\tau_{-e}}_{\-{wrc},e}(\tau_e)f(\tau_{-e}\tau_e) }\notag\\
  &= \sum_{\tau_{-e} \in C_e}\pi_{\-{wrc}}(\tau_{-e})h(p_e,\tau_{-e}) + \sum_{\tau_{-e} \in C_e}\pi_{\-{wrc}}(\tau_{-e})h\tp{\frac{p_e}{1 - \alpha(\tau_{-e})(p_e - 1)},\tau_{-e}}. \label{eqn:wrc-entropy-split}
\end{align}
In the last step above we use $\frac{p_e}{1 - \alpha(\tau_{-e})(p_e - 1)} = \pi^{\tau_{-e}}_{\-{wrc},e}(1)$ where $\alpha(\tau_{-e})$ is a factor depending on $\tau_{-e}$, derived as follows. 
Suppose $e =\{u,v\}$. 
Consider the random cluster configuration with $e$ set not to be taken, and adding $e$ causes the two connected components $C_1$ and $C_2$ to be merged as one, where $u$ is in $C_1$ and $v$ is in $C_2$. 
Let $X=X(\tau_{-e}) = \prod_{w \in C_1}\lambda_w$ and $Y=Y(\tau_{-e})=\prod_{w \in C_2}\lambda_w$. We have
\begin{align*}
  \pi^{\tau_{-e}}_{\-{wrc},e}(1) = \frac{p_e(1+XY)}{p_e(1+XY) + (1-p_e)(1+X)(1+Y)}	 = \frac{p_e}{1-\frac{X+Y}{1+XY}(p_e-1)},
\end{align*}
which means we can take $\alpha(\tau_{-e})=\frac{X+Y}{1+XY}$. 
Moreover, we have $0 \leq \alpha(\tau_{-e})\leq 1$ since $0 < X \leq 1$ and $0 < Y \leq 1$.

To finish the proof, define the following functions
\begin{align*}
  t(x,p,\alpha)\defeq\frac{g(x,p)}{g(x,p/(1-\alpha(p-1)))}\qquad\text{where}\qquad g(x,p):=px\log x-(px+1-p)\log(px+1-p)
\end{align*}
for $0\leq p\leq 1$ and $0\leq \alpha\leq 1$. 
Define $t(0,p,\alpha)\defeq\lim_{x\downarrow 0}t(x,p,\alpha)$ and $t(1,p,\alpha)\defeq\lim_{x\to 1}t(x,p,\alpha)$. 
It is not hard to verify that $t(x,p,\alpha)$ is continuous with respect to $x$ over $[0,\infty)$ for any fixed $p$ and $\alpha$, 
and $t\left(\frac{f(\tau_{-e},1)}{f(\tau_{-e},0)},p_e,\alpha(\tau_{-e})\right) = \frac{h(p_e,\tau_{-e})}{h\left(\frac{p_e}{1-\alpha(\tau_{-e})(p_e-1)},\tau_{-e}\right)}$.
This function admits the following monotonicity property, whose proof is postponed till \Cref{app:proof-tx-min}.
\begin{lemma}  \label{lem:tx-min}
  For any $0\le p\le 1$ and $0\leq \alpha\leq 1$, $t(x,p,\alpha)$ is monotone decreasing in $x$ over $x\geq 0$.
\end{lemma}
Given this, $t(x,p,\alpha)$ has a lower bound
\[
  t(x,p,\alpha)\geq\lim_{x\to\infty} t(x,p,\alpha)=\frac{(1-\alpha(p-1))\log p}{\log p-\log (1-\alpha(p-1))}=:C(p,\alpha). 
\]
We remark that the constant $C=C(p,\alpha)$ satisfies 
\begin{equation} \label{equ:constant-bound}
  0.5\leq C(p,\alpha)\leq 2. 
\end{equation}
The proof is given in \Cref{app:proof-tx-min}, too. 
Using this fact, we conclude (\ref{eq-g2}) by comparing (\ref{eqn:joint-entropy-split}) with (\ref{eqn:wrc-entropy-split}). 

This finishes the proof of \Cref{lemma-sb-ent-decay}.
\end{proof}

\section{Perfect sampling via coupling from the past} \label{sec:CFTP}

In this section, we give a perfect sampler for the ferromagnetic Ising model with consistent fields. 
We first give a perfect sampler for the weighted random cluster model, then turn it into a perfect sampler for the Ising model. 

\begin{theorem}\label{theorem-algorithm-wrc}
There exists a perfect sampling algorithm such that given any weighted random cluster model on graph $G = (V,E)$ with parameters  $\*p = (p_e)_{e \in E}$ and $\*\lambda = (\lambda_v)_{v \in V}$,
if $0 < p_e < 1$ for all $e \in E$ and $0 < \lambda_v \leq 1$ for all $v \in V$,
the algorithm returns a perfect sample from weighted random cluster models in expected time $C_1(p_{\min},p_{\max})  N^4  m^4 \log n$,
where $N= \min\left\{n, \frac{1}{1 - \lambda_{\max}} \right\}$,  $\lambda_{\max}=\max_{v \in V}\lambda_v$,
$C_1(p_{\min},p_{\max}) = O\left(\frac{1}{\min\{p_{\min }, 1 - p_{\max }\}} \log \frac{1}{\min\{p_{\min }, 1 - p_{\max }\}}\right)$,
$p_{\max} = \max_{e \in E}p_e$ and $p_{\min} = \min_{e \in E}p_e$.

Furthermore, if there exists $\delta > 0$ such that $\lambda_v \leq 1-\delta$ for all $v \in V$, then the algorithm runs in time
$C_2(\Delta, \delta,p_{\min }, p_{\max } ) n \log ^ 2 n$,
where $C_2(\Delta, \delta,p_{\min }, p_{\max } ) = \left(\frac{\Delta}{\delta^2 \min\{p_{\min},1-p_{\max}\}}\right)^{O\left(\frac{\Delta^2}{\delta^4 \min\{p_{\min},1-p_{\max}\}}\right)}$.
\end{theorem}

Note that if $p_e=0$, we can simply remove $e$, and if $p_e=1$, we can contract $e$. Similarly if $\lambda_v=0$, we may pin $v$ to $0$ and absorb it into its neighbours external fields.
Thus for any weighted random cluster model, we can modify it so that it satisfies the condition of \Cref{theorem-algorithm-wrc}.

\subsection{Perfect ferromagnetic Ising sampler}
We now prove \Cref{theorem-perfect-ising}.
We give the perfect ferromagnetic Ising sampler assuming the algorithm in \Cref{theorem-algorithm-wrc}. 
Let $G = (V,E)$ be a graph. Let $\*\beta = (\beta_e)_{e \in E}$ and $\*\lambda = (\lambda_v)_{v \in V}$ be parameters for the Ising model, where $\beta_e > 1$ for all $e \in E$ and $0 < \lambda_v < 1$ for all $v \in V$. 
Let $p_e = 1 - \frac{1}{\beta_e}$ for all $e \in E$.
We first use algorithm in \Cref{theorem-algorithm-wrc} to draw a perfect random sample $\+S \subseteq E$ from the weighted random cluster model with parameters $\*p$ and $\*\lambda$. 
Then we using the Markov chain $\+P_{\+R \to \+I}$ in~\eqref{eq-def-P-R-I} to transform $\+S$ into a random Ising configuration $\sigma \in \{0,1\}^V$.
By~\Cref{proposition-adjoint}, since $\+S \sim \pi_{\-{wrc}}$, $\sigma$ is a perfect sample from the Ising model.
The running time of the transformation step is $O(n+m)$.
Note that 
\begin{align*}
p_{\min} = 1 - \frac{1}{\beta_{\min}}, \quad 1 - p_{\max} = \frac{1}{\beta_{\max}}.
\end{align*}
By~\Cref{theorem-algorithm-wrc}, the total running time is $C_1  N^4  m^4 \log n$ and $C_2 n^2 \log^2 n$ for all $\lambda_v \leq 1- \delta$, where 
\begin{equation}
  \label{eqn:perfect-sampler-constant}
  \begin{split}
	C_1 &= C_1(\beta_{\min }, \beta_{\max}) = O\tp{\tp{\beta_{\max} + \frac{\beta_{\min}}{\beta_{\min}-1}} \log \tp{\beta_{\max} + \frac{\beta_{\min}}{\beta_{\min}-1}}},\\
	C_2 &= C_2(\Delta, \delta,\beta_{\min }, \beta_{\max } ) = \tp{\frac{\Delta}{\delta^2}\tp{\beta_{\max} + \frac{\beta_{\min}}{\beta_{\min}-1}} }^{O\tp{\frac{\Delta^2}{\delta^4}\tp{\beta_{\max} + \frac{\beta_{\min}}{\beta_{\min}-1}} }}.
  \end{split}
\end{equation}

\subsection{CFTP for weighted random cluster models}
We give a perfect sampler for weighted random cluster models based on coupling form the past (CFTP) applied to the Glauber dynamics.
Here is an equivalent definition of the Glauber dynamics. 
There is a one-to-one correspondence between vectors in $\{0,1\}^E$ and subsets in $2^E$ (i.e.~for any $X \in \{0,1\}^E$, let $S_X = \{e \in E \mid X_e = 1\}$).
We assume that the Markov chain is defined over the state space $\{0,1\}^E$.
The Glauber dynamics starts from an arbitrary subset of edges $X_0 \in \{0,1\}^E$. 
For the $t$-th transition step, the chain does the following:
\begin{itemize}
	\item pick an edge $e_t \in E$ uniformly at random;
	\item sample a real number $r_t \in [0,1]$ uniformly at random; if $r_t < a_t$, let $X_{t} = X_{t-1}^{e \gets 1}$; if $r_t \geq a_t$, let $X_{t} = X_{t-1}^{e \gets 0}$, where $X_{t-1}^{e \gets c}$ satisfies $X_{t-1}^{e \gets c}(E \setminus \{e\}) = X_{t-1}(E \setminus \{e\})$ and $X_{t-1}^{e \gets c}(e) = c$, and
	\begin{align}\label{eq-def-at}
		a_t = a(X_{t-1},e) \defeq \frac{\pi_{\-{wrc}}(X_{t-1}^{e \gets 1})}{\pi_{\-{wrc}}(X_{t-1}^{e \gets 0}) + \pi_{\-{wrc}}(X_{t-1}^{e \gets 1}) }.
	\end{align}
\end{itemize}

The Glauber dynamics for weighted random cluster models admits a \emph{grand monotone coupling}. 
Let $\Omega = \{0,1\}^E$. Let $P:\Omega \times \Omega \to \mathbb{R}_{\geq 0}$ denote the transition matrix of the Glauber dynamics.
We use the function $\varphi(\cdot,\cdot)$ to represent each transition step of edge flipping dynamics.
For any $t$, given the current configuration $X_{t-1} \in \Omega$, the next configuration can be generated by $X_{t} = \varphi(X_{t-1},U_t)$, where $U_t$ is the randomness used in the $t$-th transition step. 
Specifically,
\begin{align*}
	U_t \sim \+D \text{ and }U_t = (e_t,r_t) \in \Omega_{R} =  E \times [0,1],
\end{align*}
where $\+D$ is a distribution such that $e_t$ is a uniform random edge in $E$, $r_t$ is a uniform random real number in $[0,1]$, and they are independent. 
The function $\varphi$ uses the transition rule defined above to map $X_{t-1}$ to a random state $X_t = \varphi(X_{t-1},U_t)$, where the randomness of $X_t$ is determined by the randomness of $U_t \sim \+D$.
The function $\varphi(\cdot,\cdot)$ is called a \emph{grand coupling} of flipping dynamics because
\begin{align*}
	\forall \sigma, \tau \in \Omega, \quad \Pr_{U \sim D}\left[ \varphi(\sigma,U) = \tau \right] = P(\sigma,\tau).
\end{align*}

Define a partial ordering $\preceq$ among all vectors in $\{0,1\}^E$: for any $X,Y \in \{0,1\}^E$,
\begin{align*}
	X \preceq Y \quad \text{if } X(e) \leq Y(e) \text{ for all } e \in E.
\end{align*} 
Let $X^{\min} = \* 0$ be the constant 0 vector and $X^{\max} = \* 1$ be the constant 1 vector, so that $X^{\min} \preceq X \preceq X^{\max}$ for all $X \in \{0,1\}^E$.
The next lemma shows that the grand coupling $\varphi$ is monotone with respect to the partial ordering $\preceq$.

\begin{lemma}\label{lemma-coupling}
Suppose $0\leq p_e < 1$ for all $e \in E$ and $0 < \lambda_v \leq 1$ for all $v \in V$.
The grand coupling $\varphi$ of the Glauber dynamics for weighted random cluster models is monotone, i.e.  for any $\sigma,\tau \in \Omega$ with $\sigma \preceq \tau$, any $U \in \Omega_R$, it holds that $\varphi(\sigma,U) \preceq \varphi(\tau,U)$. 
\end{lemma}

The proof of \Cref{lemma-coupling} is deferred to \Cref{sec-proof-monotone}.
With the monotone grand coupling $\varphi$, we apply CFTP to the Glauber dynamics for weighted random cluster models in \Cref{alg-CFTP}.

\begin{algorithm}[h]
\caption{\textsf{CFTP of the Glauber dynamics for weighted random cluster models}}\label{alg:main}\label{alg-CFTP}
\KwIn{a weighted random cluster model on graph $G=(V,E)$ with parameters $\*\lambda = (\lambda_v)_{v \in V}$ and $\*p = (p_e)_{e \in E}$, where $0 < p_e < 1$ for all $e \in E$ and $0 < \lambda_v \leq 1$ for all $v \in V$.} 
\KwOut{a perfect sample $X \sim \pi_{\mathrm{wrc}}$, where $\pi_{\mathrm{wrc}}$ is the distribution over $\{0,1\}^E$ defined by the input weighted random cluster model.}
generate $U_t \sim \+D$ independently for all integers $t \in (-\infty, -1]$\label{line-U_t}\;
$T = 1$\;
\Repeat{
$X^{\min} = X^{\max}$
}{
$X^{\min} = \*0$ and $X^{\max} = \*1$\;
\For{$t = -T$ to $-1$}{
	$X^{\min} \gets \varphi(X^{\min}, U_t)$\;
	$X^{\max} \gets \varphi(X^{\max}, U_t)$\;\tcp{$\varphi$ is the monotone grand coupling in \Cref{lemma-coupling}}
}
$T \gets 2T$
}
\Return{$X^{\min}$\;\label{line-last}}
\end{algorithm}

\begin{remark}
In \Cref{alg-CFTP}, infinitely many $U_t$ are generated in \Cref{line-U_t}.
To implement the algorithm, we can first generate $U_{-1}$, and then generate $(U_t)_{-2T \leq t < -T}$ when updating $T \gets 2T$.
\end{remark}
Let  $T_{\+D}$ be the time cost for generating a random sample from $\+D$.
Let  $T_{\varphi}$ be the time cost for computing the value of the function $\varphi$.
Let $T_{\-{mix}}(\cdot)$ denote the mixing time of the edge flipping dynamics for weighted random cluster models.
By the standard result of the CFTP for monotone systems~\cite{propp1996exact} (also see~\cite[Chapter 25]{levin2017markov}), we have the following proposition about~\Cref{alg-CFTP}.

\begin{proposition}[\text{\cite{propp1996exact}}]
Suppose the input weighted random cluster model satisfies $0\leq p_e < 1$ for all $e \in E$ and $0 < \lambda_v \leq 1$ for all $v \in V$.
\Cref{alg-CFTP} returns a perfect sample for the stationary distribution of edge flipping dynamics for weighted random cluster models, i.e.~the distribution $\pi_{\-{wrc}}$. 
The expected running time of \Cref{alg-CFTP} is $O((T_{\+D} + T_{\varphi})T_{\-{mix}}(\frac{1}{4e}) \log n)$.
\end{proposition}

Now, we are ready to prove \Cref{theorem-algorithm-wrc}.
\begin{proof}[Proof of \Cref{theorem-algorithm-wrc}]
By definitions of $\+D$ and $\varphi$, it is straightforward to verify that $T_{\+D} = O(1)$ and $T_{\varphi} = O(n+m)$. The mixing time can be obtained from \Cref{thm:RC}.
\end{proof}

\subsection{Proof of monotonicity}\label{sec-proof-monotone}
Here we prove \Cref{lemma-coupling}. Fix $\sigma ,\tau \in \{0,1\}^E$ such that $\sigma \preceq \tau$. Fix $U = (e,r) \in \Omega_{R}$. 
Let $e = \{u,v\}$.
Let $\sigma_{-e}$ and $\tau_{-e}$ denote $\sigma(E \setminus \{e\})$ and $\tau(E \setminus \{e\})$ respectively,
and $G_\sigma$ and $G_\tau$ be the graphs with vertices $V$ and edges in $\sigma_{-e}$ and $\tau_{-e}$ respectively.
Note that $G_\sigma$ is a subgraph of $G_\tau$.
We prove the lemma by considering three cases (1) $u,v$ are connected in both $G_\sigma$ and $G_\tau$ (2) $u,v$ are neither connected in neither $G_\sigma$ nor $G_\tau$ (3) $u,v$ are connected in $G_\tau$ but not in $G_\sigma$.

First suppose $u,v$ are connected in both $G_\sigma$ and $G_\tau$. In this case $a(\sigma,e) = a(\tau,e) = p_e$, where $a(\cdot,\cdot)$ is defined in~\eqref{eq-def-at}. The lemma holds trivially.

Next assume $u,v$ are neither connected in neither $G_\sigma$ nor $G_\tau$. 
Suppose $u,v$ belong to connected components $C_1,C_2$ (or $C_1',C_2'$) in $G_\sigma$ (or $G_\tau$) respectively. Define
\begin{align*}
	x_1^\sigma \defeq \prod_{w \in C_1}\lambda_w,\quad x_2^\sigma \defeq \prod_{w \in C_2}\lambda_w,\quad x_1^\tau \defeq \prod_{w \in C_1'}\lambda_w,\quad x_2^\tau \defeq \prod_{w \in C_2'}\lambda_w.
\end{align*}
We have
\begin{align*}
	a(\sigma,e) &= \frac{p_e(1 + x_1^\sigma x_2^\sigma) }{p_e(1 + x_1^\sigma x_2^\sigma) + (1-p_e) (1 + x_1^\sigma)(1+ x_2^\sigma)},\\
	a(\tau,e) &= \frac{p_e(1 + x_1^\tau x_2^\tau) }{p_e(1 + x_1^\tau x_2^\tau) + (1-p_e) (1 + x_1^\tau)(1+ x_2^\tau)}.
\end{align*}
Since $\lambda_w \leq 1$ for all $w \in V$, $x_1^\sigma \geq x_1^\tau$ and  $x_2^\sigma \geq x_2^\tau$, which implies
\begin{align*}
\frac{(1 + x_1^\sigma)(1+ x_2^\sigma)}{(1 + x_1^\sigma x_2^\sigma)} \geq \frac{(1 + x_1^\tau)(1+ x_2^\tau)}{(1 + x_1^\tau x_2^\tau)}.
\end{align*}
Hence $a(\sigma,e) \leq a(\tau,e)$, which implies the lemma.

Lastly suppose $u,v$ are connected in $G_\tau$ but not in $G_\sigma$. Suppose $u,v$ belong to connected components $C_1,C_2$ in $G_\sigma$. Define $x_1^\sigma$ and $x_2^\sigma$ in the same way.
\begin{align*}
a(\sigma,e) = \frac{p_e(1 + x_1^\sigma x_2^\sigma) }{p_e(1 + x_1^\sigma x_2^\sigma) + (1-p_e) (1 + x_1^\sigma)(1+ x_2^\sigma)}, \quad 	a(\tau,e) = p_e.
\end{align*}
Since $(1 + x_1^\sigma)(1+ x_2^\sigma)\geq 1 + x_1^\sigma x_2^\sigma$, $a(\sigma,e) \leq 	a(\tau,e)$, which implies the lemma.

\ifdoubleblind

\else
\section*{Acknowledgement}
We thank Mary Cryan for bringing this question to our attention and for some preliminary discussion.
This project has received funding from the European Research Council (ERC) under the European Union's Horizon 2020 research and innovation programme (grant agreement No.~947778).
\fi

\bibliographystyle{alpha} 
\bibliography{ferro-ising-field}

\newpage
\appendix

\section{Proof of the equivalence result}\label{app-eq}
\subsection{Equivalence between Ising and weighted random cluster models} Fix a graph $G=(V,E)$. We first show the first equation in  (\ref{equ:three_equivalence}). Observe that we can decompose the Ising model interaction matrix as 
\[
f^{\text{Ising}}_e=\begin{pmatrix}\beta_e&1\\1&\beta_e\end{pmatrix}
=
\begin{pmatrix}1&1\\1&1\end{pmatrix}
+
\begin{pmatrix}\beta_e-1&0\\0&\beta_e-1\end{pmatrix}
=:f_e^{(0)}+f_e^{(1)}.
\]
By definition, $f_e^{(1)}$ forces the two endpoints to take the same spin, while $f_e^{(0)}$ poses no requirements. In this way, we can perform an extra enumeration over all the assignments over the edges $\tau:E\to\{0,1\}$, the decompose the effect of $f^{\text{Ising}}_e$ into $f_e^{(0)}$ and $f_e^{(1)}$. The partition function of Ising model then becomes
\begin{align*}
\sum_{\sigma \in \{0,1\}^V}\wt_{\text{Ising}}(\sigma)&=\sum_{\sigma \in \{0,1\}^V}\prod_{e=(u,v)\in E}f^{\text{Ising}}_e(\sigma(u),\sigma(v))\prod_{u\in V}\lambda_u^{\sigma (u)}\\
&=\sum_{\sigma \in \{0,1\}^V}\prod_{e=(u,v)\in E}\left(\sum_{\tau(e)\in \{0,1\}}f_e^{(\tau(e))}(\sigma(u),\sigma(v))\right)\prod_{u\in V}\lambda_u^{\sigma (u)}\\
&=\sum_{\tau \in \{0,1\}^E}\sum_{\sigma \in \{0,1\}^V}\prod_{e=(u,v)\in E}f_e^{(\tau(e))}(\sigma(u),\sigma(v))\prod_{u\in V}\lambda_u^{\sigma (u)}\tag{$*$}.
\end{align*}
Fix $\tau$. Consider the subgraph $G'=(V,S)$ where $S$ is the set of edges assigned to $1$ under $\tau$. Each connected component $C \subseteq V$ of $G'$ must take the same spin in $\sigma$, otherwise the contribution to the sum is 0.
Let $E_C \subseteq S$ denote all the edges in component $C$. The total weight of the component $C$ is 
$\prod_{e \in E_C}(\beta_e - 1)\left(1+\prod_{u \in C}\lambda_u\right)$.
Combining all components yields
\[
  \sum_{\sigma \in \{0,1\}^V}\prod_{e=(u,v)\in E}f^{(\tau(e))}_e(\sigma(u),\sigma(v))\prod_{u\in V}\lambda_u^{\sigma (u)}=\prod_{e \in S}(\beta_e - 1)\prod_{C\in\kappa(V,S)}\tp{1+\prod_{u \in C}\lambda_u}.
\]
And hence
\begin{align*}
(*)&=\sum_{S\subseteq E}\prod_{e \in S}(\beta_e - 1)\prod_{C\in\kappa(V,S)}\tp{1+\prod_{u \in C}\lambda_u}\\
&=\tp{\prod_{e \in E}\beta_e} \cdot \sum_{S\subseteq E} \prod_{e \in S} \left(1-\frac{1}{\beta_e}\right) \prod_{f \in E \setminus S} \frac{1}{\beta_f} \prod_{C\in\kappa(V,S)}\tp{1+\prod_{u\in C}\lambda_u}=Z_{\text{wrc}}(G;2\*p,\*\lambda)
\end{align*}
by taking $2p_e=1-1/\beta_e$.

\subsection{Equivalence between Ising and subgraph-world}
To apply \Cref{thm:holant}, we express the Ising model $(G=(V,E); \*\beta,\*\lambda)$ as a Holant problem. 
Given an Ising model on graph $G=(V,E)$. We define a bipartite graph $H$ with left part $V_1 = V$ corresponding to vertices in $G$ and right part $V_2=E$ corresponding to edges in $G$.
Two vertices $v \in V_1$ and $e \in V_2$ are adjacent in graph $H$ if $v$ is incident to $e$ in graph $G$.
By definition, each edge $e = (u,v)$ in $G$ is decomposed into two half-edges $(v,e)$ and $(u,e)$ in graph $H$.

For any vertex $v \in V_1$, we force the assignment to its incident half-edges to be equal, and further more, if they are all ones, then we multiply the weight by $\lambda_v$. 
This yields the signature $[1,0,\cdots,0,\lambda_v]=[1,0]^{\otimes d_v}+\lambda_v [0,1]^{\otimes d_v}$ on each vertex $v$, 
where $d_v$ is the degree of $v$ in $G$. 
For any edge $e$ in $G$,
its signature is $[\beta_e,1,\beta_e]$ to model the ferromagnetic Ising interaction. 
Define 
\begin{align*}
	\+F_{\mathrm{Ising}} = \set{\left[1,0\right]^{\otimes d_v}+\lambda_v \left[0,1\right]^{\otimes d_v} \mid v \in V} \text{ and }\+G_{\mathrm{Ising}} = \set{[\beta_e,1,\beta_e] \mid e \in E}.
\end{align*}
It is straightforward to verify
\begin{align}\label{eq-Ising=ho}
\holant(H;\+F_{\mathrm{Ising}}\mid\+G_{\mathrm{Ising}}) = Z_{\text{Ising}}(G;\*\beta,\*\lambda).
\end{align}

For subgraph-world models, we define a Holant problem on the same bipartite graph $H$.
The signature on each vertex $v$ is defined by $[1,\eta_v,1,\eta_v,\cdots]$, and on each edge $e \in E$, it is defined by $[1-p_e,0,p_e]$. 
Define 
\begin{align*}
  \+F_{\mathrm{sg}} = \set{[1,\eta_v,1,\eta_v,\cdots] \mid v \in V} \text{ and }\+G_{\mathrm{sg}} = \set{[1-p_e,0,p_e] \mid e \in E}.
\end{align*}
It is straightforward to verify
\begin{align}\label{eq-sg=ho}
\holant(H;\+F_{\mathrm{sg}}\mid\+G_{\mathrm{sg}}) = Z_{\text{sg}}(G;\*p,\*\eta).
\end{align}

Take $T=\Tmatrix$. Let $p_e = \frac{1}{2}\tp{1 - \frac{1}{\beta_e}}$. It holds that
\[
  \left({\bm T}^{-1}\right)^{\otimes 2}\left(\beta_e,1,1,\beta_e\right)^{\top}=\left(\frac{\beta_e+1}{2},0,0,\frac{\beta_e-1}{2}\right)^{\top}=\beta_e\left[\frac{\beta_e+1}{2\beta_e},0,\frac{\beta_e-1}{2\beta_e}\right] = \beta_e [1-p_e,0,p_e].
\]
Let $\eta_v=\frac{1-\lambda_v}{1+\lambda_v}$. We have
\[
\left(\left(1,0\right)^{\otimes d_v}
+\lambda_v\left(0,1\right)^{\otimes d_v}
\right)
\*T^{\otimes d_v}
=
\left(1,1\right)^{\otimes d_v}
+\lambda_v\left(1,-1\right)^{\otimes d_v}
=
(1+\lambda_v)\left[1,\eta_v,1,\eta_v,\cdots\right].
\]
Combining~\Cref{thm:holant},~\eqref{eq-Ising=ho} and~\eqref{eq-sg=ho} with the above, it holds that
\begin{align*}
Z_{\text{Ising}}(G;\*\beta,\*\lambda)	 = \tp{\prod_{v \in V}(1+\lambda_v)}\tp{\prod_{e \in E}\beta_e}  Z_{\text{sg}}(G;\*p,\*\eta).
\end{align*}

\section{Proof of the adjointness}\label{app-adjoint}
\begin{proof}[Proof of \Cref{proposition-adjoint}]
Let $D_{\-{Ising}} = \-{diag}(\pi_{\-{Ising}})$ and $D_{\-{wrc}} = \-{diag}(\pi_{\-{wrc}})$ denote the diagonal matrices induced from vectors $\pi_{\-{Ising}}$ and $\pi_{\-{wrc}}$ respectively. We have
\begin{align*}
\inner{f}{P_{\+I \to \+R}g}_{\pi_{\-{Ising}}} = f^T D_{\-{Ising}}P_{\+I \to \+R}g \quad\text{and}\quad  \inner{P_{\+R \to \+I}f}{g}_{\pi_{\-{wrc}}} = f^T	 P_{\+R \to \+I}^T D_{\-{wrc}}g.
\end{align*}
For any $\sigma \in \{0,1\}^V$ and $S \subseteq E$, we show that
\begin{align*}
\tp{D_{\-{Ising}}P_{\+I \to \+R}}(\sigma,S) = \tp{P_{\+R \to \+I}^T D_{\-{wrc}}}(\sigma, S)
\end{align*}
Recall $M(\sigma) = \{\{u,v\} \in E \mid \sigma_u = \sigma_v\}$. It holds that
\begin{align}\label{eq-app-a-1}
\tp{D_{\-{Ising}}P_{\+I \to \+R}}(\sigma,S) &= \mathbb{I}[S \subseteq M(\sigma)] \cdot \pi_{\-{Ising}}(\sigma) \cdot \prod_{e \in S}\tp{1-\frac{1}{\beta_e}} \prod_{f \in M(\sigma \setminus S)} \frac{1}{\beta_f}\notag\\
 &= \mathbb{I}[S \subseteq M(\sigma)] \cdot \frac{1}{Z_{\-{Ising}}} \cdot \prod_{v \in V} \lambda_v^{\sigma(v)} \prod_{h \in M(\sigma)}\beta_h\prod_{e \in S}\tp{1-\frac{1}{\beta_e}} \prod_{f \in M(\sigma \setminus S)} \frac{1}{\beta_f}\notag\\
 &= \mathbb{I}[S \subseteq M(\sigma)] \cdot \frac{1}{Z_{\-{Ising}}} \cdot \prod_{v \in V} \lambda_v^{\sigma(v)} \prod_{e \in S}(\beta_e - 1).
\end{align}
Recall $\kappa(V,S)$ is the set of all connected components of graph $(V,S)$. It holds that
\begin{align}\label{eq-app-a-2}
&\tp{P_{\+R \to \+I}^T D_{\-{wrc}}}(\sigma, S) = \mathbb{I}[S \subseteq M(\sigma)] \cdot \pi_{\-{wrc}}(S) \cdot \prod_{C \in \kappa(V,S)} 	\frac{\prod_{v \in C}\lambda_v^{\sigma(v)}}{1+\prod_{v \in C}\lambda_v}\notag\\
=\,& \mathbb{I}[S \subseteq M(\sigma)] \cdot \frac{1}{Z_{\-{wrc}}} \cdot \prod_{e \in S}\tp{1-\frac{1}{\beta_e}} \prod_{f \in E \setminus S} \frac{1}{\beta_f }\prod_{C \in \kappa(V,S)}\tp{1 + \prod_{u \in C}\lambda_u} \cdot \prod_{C \in \kappa(V,S)} 	\frac{\prod_{v \in C}\lambda_v^{\sigma(v)}}{1+\prod_{v \in C}\lambda_v}\notag\\
=\,& \mathbb{I}[S \subseteq M(\sigma)] \cdot \frac{1}{Z_{\-{wrc}}} \cdot \prod_{e \in S}\tp{1-\frac{1}{\beta_e}} \prod_{f \in E \setminus S} \frac{1}{\beta_f }\prod_{v \in V}\lambda_v^{\sigma(v)}\notag\\
=\,& \mathbb{I}[S \subseteq M(\sigma)] \cdot  \frac{1}{Z_{\-{wrc}}}\cdot \prod_{h \in E}\frac{1}{\beta_h}\prod_{v \in V}\lambda_v^{\sigma(v)} \prod_{e \in S}\tp{\beta_e - 1} 
\end{align}
By \Cref{theorem-eqv}, we know that
\begin{align*}
	\tp{\prod_{e \in E}\beta_e} Z_{\-{wrc}} = Z_{\-{Ising}}.
\end{align*}
Combining above equation with~\eqref{eq-app-a-1} and~\eqref{eq-app-a-2} proves $\tp{D_{\-{Ising}}P_{\+I \to \+R}}(\sigma,S) = \tp{P_{\+R \to \+I}^T D_{\-{wrc}}}(\sigma, S) $. 
\end{proof}

\section{Proof of analytic lemmata} \label{app:proof-tx-min}

This section of appendix proves Lemma~\ref{lem:tx-min} and (\ref{equ:constant-bound}). 

\begin{proof}[Proof of \Cref{lem:tx-min}]

The goal is to show $\partial t(x,p,\alpha)/\partial x<0$ for all $x\in(0,1)\cup(1,+\infty)$.
The lemma then follows by combining this with continuity. 

A straightforward calculation shows that
\begin{align*}
  \frac{\partial t(x,p,\alpha)}{\partial x}=\frac{-(1-\alpha(1-p))(1-p)p}{\left(xp\log x-((1+\alpha)(1-p)+px)\log \left(1+\frac{p(x-1)}{1+\alpha(1-p)}\right)\right)^2} s(x,p,\alpha)
\end{align*}
where
\begin{align*}
  s(x,p,\alpha)\defeq (1+\alpha)(\log x)\log\left(1+\frac{p(x-1)}{1+\alpha(1-p)}\right)-\left(\log x+\alpha\log\left(1+\frac{p(x-1)}{1+\alpha(1-p)}\right)\right)\log (1+p(x-1)).
\end{align*}
This means $\sgn(\partial t(x,p,\alpha)/\partial x)=-\sgn(s(x,p,\alpha))$, and hence we only need to show $s(x,p,\alpha)>0$ whenever $x\in(0,1)\cup(1,+\infty)$. 

From now on in this section, we use the notation $A\gtrless_{x} B$ to represent that $A>B$ when $x>1$, and $A<B$ when $0<x<1$. In other words, when $x>1$, $\gtrless_{x}$ should be read as $>$, and vice versa. 

We first claim the following inequalities:
\begin{equation} \label{equ:monotone-proof-1}
	(1+\alpha)\log x-\alpha\log(1+p(x-1))\gtrless_{x} 0;
\end{equation}
\begin{equation} \label{equ:monotone-proof-2}
	\log\left(1+\frac{p(x-1)}{1+\alpha(1-p)}\right)\gtrless_{x} 0;
\end{equation}
\begin{equation} \label{equ:monotone-proof-3}
  \log (1+(x-1)p)\gtrless_{x} 0. 
\end{equation}
We focus on $s(x,p,\alpha)$ and postpone the proof of these simple inequalities till the end. 
By collecting terms of $\log\left(1+\frac{p(x-1)}{1+\alpha(1-p)}\right)$, one can find that $s(x,p,\alpha)>0$ if and only if
\begin{align*}
  \left((1+\alpha)\log x-\alpha\log(1+p(x-1))\right) \log\left(1+\frac{p(x-1)}{1+\alpha(1-p)}\right) >(\log x)\log (1+p(x-1)). 
\end{align*}
By using (\ref{equ:monotone-proof-1}), it is equivalent to show that
\begin{equation*}
	\log\left(1+\frac{p(x-1)}{1+\alpha(1-p)}\right) \gtrless_{x} \frac{(\log x)\log (1+p(x-1))}{(1+\alpha)\log x-\alpha\log(1+p(x-1))}, 
\end{equation*}
or equivalently, using (\ref{equ:monotone-proof-1})(\ref{equ:monotone-proof-2})(\ref{equ:monotone-proof-3}), to show that
\begin{equation} \label{equ:monotone-proof-4}
  \frac{1}{\log(1+(x-1)p)}\gtrless_{x}\frac{\alpha}{1+\alpha}\cdot\frac{1}{\log x}+\frac{1}{1+\alpha}\cdot\frac{1}{\log\left(1+\frac{p(x-1)}{1+\alpha(1-p)}\right)}.  
\end{equation}

Note that the following function
\[
  u_{x,p}(y)\defeq\frac{1}{\log\left(1+\frac{p(x-1)}{y}\right)}
\]
reveals the essence of (\ref{equ:monotone-proof-4}) in the way that (\ref{equ:monotone-proof-4}) is equivalent to
\begin{equation} \label{equ:monotone-proof-5}
  u_{x,p}(1)\gtrless_{x}\frac{\alpha}{1+\alpha}\cdot u_{x,p}(p)+\frac{1}{1+\alpha}\cdot u_{x,p}(1-\alpha(p-1)), 
\end{equation}
and note that
\[
  1=\frac{\alpha}{1+\alpha}\cdot p+\frac{1}{1+\alpha}\cdot(1-\alpha(p-1)). 
\]
This means (\ref{equ:monotone-proof-5}) follows if for fixed $x>1$ (resp., $0<x<1$) and $p$, $u_{x,p}(y)$ is a concave (resp., convex) function over $y\in (p,2)\supseteq (p,1-\alpha(p-1))$, which would conclude the proof. We verify this as follows. 

A straightforward calculation shows that
\[
  \frac{\mathrm{d}^2}{\mathrm{d}y^2}u_{x,p}(y)=\frac{(x-1)p}{y(y+(x-1)p)^2\log^3\left(1+\frac{p(x-1)}{y}\right)}\left(2\cdot\frac{p(x-1)}{y}-\left(2+\frac{p(x-1)}{y}\right)\log\left(1+\frac{p(x-1)}{y}\right)\right). 
\]
It is not hard to verify that
\begin{equation} \label{equ:monotone-proof-6}
  \log\left(1+\frac{p(x-1)}{y}\right)\gtrless_{x} 0,
\end{equation}
which we prove later. With a bit more endeavour, we can also show that
\begin{equation} \label{equ:monotone-proof-7}
  -\left(2\cdot\frac{p(x-1)}{y}-\left(2+\frac{p(x-1)}{y}\right)\log\left(1+\frac{p(x-1)}{y}\right)\right)\gtrless_{x} 0,
\end{equation}
whose proof is postponed as well. Concavity/Convexity is then established by combining the expression for the second-order derivative, (\ref{equ:monotone-proof-6}) and (\ref{equ:monotone-proof-7}). 
\end{proof}

\begin{proof}[Proof of (\ref{equ:monotone-proof-1}), (\ref{equ:monotone-proof-2}), (\ref{equ:monotone-proof-3}), (\ref{equ:monotone-proof-6}), and (\ref{equ:monotone-proof-7})]
For (\ref{equ:monotone-proof-1}), because $\log x\gtrless_{x} 0$, we only need to show
\[
  \frac{x}{1+(x-1)p}\gtrless_{x} 1.
\]
Note that $1+(x-1)p$ is positive. The above is hence equivalent to
\[
  (x-1)(1-p)\gtrless_{x} 0,
\]
which is obvious. 

All of (\ref{equ:monotone-proof-2}), (\ref{equ:monotone-proof-3}) and (\ref{equ:monotone-proof-6}), after simple calculation, are equivalent to $(x-1)p\gtrless_{x} 0$, which is obvious, too. 

Finally, we show (\ref{equ:monotone-proof-7}). Let $z:=p(x-1)/y$. LHS is then $r(z):=(2+z)\log(1+z)-2z$. It is not hard to show that $r(z)$ is monotone in $z$ over $z\in(-1,+\infty)$, by observing that $r'(z)=\frac{1}{1+z}-1-\log\frac{1}{1+z}$, which is non-negative as $\log x\le x-1$ for $x>0$. Moreover, $r(0)=0$. Therefore, when $x>1$, we have $z > 0$, and (\ref{equ:monotone-proof-7}) holds. When $0<x<1$, we have $-1 < (x-1)\leq z <0$, and (\ref{equ:monotone-proof-7}) holds too. 
\end{proof}

\begin{proof}[Proof of (\ref{equ:constant-bound})]
For convenient reference, the expression of interest is
\[
  C(p,\alpha):=\frac{(1-\alpha(p-1))\log p}{\log p-\log (1-\alpha(p-1))}. 
\]
Taking derivative with respect to $\alpha$, we get
\[
  \frac{\partial}{\partial \alpha}C(p,\alpha)=\frac{(1-p)\log(p)\left(1+\log\left(\frac{p}{1+\alpha(1-p)}\right)\right)}{\left(\log\left(\frac{p}{1+\alpha(1-p)}\right)\right)^2}. 
\]
A simple calculation shows that
\begin{itemize}
	\item if $p\leq 1/e$, then $C(p,\alpha)$ is increasing with $\alpha$, and hence lies between $C(p,0)=1$ and $C(p,1)=\frac{(2-p)\log p}{\log p-\log (2-p)}$; 
	\item if $1/e<p<2/(1+e)$, then $C(p,\alpha)$ is decreasing within $\alpha\in(0,(ep-1)/(1-p))$ and increasing within $\alpha\in ((ep-1)/(1-p),1)$, and hence lies between $C(p,(ep-1)/(1-p))=-ep\log p\geq 2e\log((1+e)/2)/(1+e)>0.90$ and $\max\{C(p,0),C(p,1)\}$; and
	\item if $p\geq 2/(1+e)$, then $C(p,\alpha)$ is decreasing, and hence lies between $C(p,1)=\frac{(2-p)\log p}{\log p-\log (2-p)}$ and $C(p,0)=1$. 
\end{itemize}
From the case-by-case analysis, it suffices to show that $0.5\leq \frac{(2-p)\log p}{\log p-\log (2-p)}\leq 2$, which is a simple analytic exercise. 
\end{proof}

\end{document}